\begin{document}
\title{Borrowable Fractional Ownership Types for Verification}
\author{Takashi Nakayama\inst{1} \and
Yusuke Matsushita\inst{1} \and
Ken Sakayori \inst{1} \and
Ryosuke Sato \inst{1} \and
Naoki Kobayashi\inst{1}
}
\authorrunning{T. Nakayama et al.}
\institute{
  The University of Tokyo, Tokyo, Japan\\
  \email{\{takashi-nakayama, yskm24t, sakayori, rsato, koba\}@is.s.u-tokyo.ac.jp}
}
\maketitle              %
\begin{abstract}
Automated verification of functional correctness of imperative programs with references (a.k.a. pointers) is challenging because of reference aliasing.
Ownership types have recently been applied to address this issue, but the existing approaches were limited in that they are effective only for a class of programs whose reference usage follows a certain style.
To relax the limitation, we combine the approaches of \consort{} (based on fractional ownership) and \rusthorn{} (based on borrowable ownership), two recent approaches to automated program verification based on ownership types, and propose the notion of \emph{borrowable fractional ownership types}.
We formalize a new type system based on the borrowable fractional ownership types and show how we can use it to automatically reduce the program verification problem for imperative programs with references to that for functional programs without references.
We also show the soundness of our type system and the translation, and conduct experiments to confirm the effectiveness of our approach.
\keywords{Automated program verification \and Ownership types \and Imperative programs}
\end{abstract}

\section{Introduction} \label{section:intro}
Various notions of ownership types have been used to prevent race conditions in concurrent programs and also to enable strong updates of knowledge in sequential programs by controlling references (or pointers) aliases~\cite{ref_boyapati_2002,ref_calcagno_2007,ref_sergey_2012,ref_bao_2021,ref_simuliris,ref_crichton_2022}.
Among others, Boyland~\cite{ref_boyland_03} introduced fractional ownership (a.k.a. fractional permissions), where a fractional value in \([0,1]\) is associated with each reference and the full ownership (of \(1\)) allows write access whereas a non-zero ownership allows read access.
The Rust programming language~\cite{ref_rust_site,ref_rust_paper} incorporates ownership with the \emph{borrow} mechanism (which we call \emph{borrowable ownership}), where ownership for a mutable reference can be \emph{borrowed} during a specific lifetime, and the borrowed reference can be used for write access during the lifetime.

Both notions of ownership (i.e., fractional ownership and borrowable ownership) have recently been used for fully automated verification of the functional correctness of imperative programs~\cite{ref_consort,ref_rusthorn,ref_creusot,ref_aeneas,ref_prusti,ref_flux}.
Among them, \consort{}~\cite{ref_consort} used fractional ownership to enable strong updates of refinement types for automated verification of the absence of assertion failures, whereas \rusthorn{}~\cite{ref_rusthorn} used borrowable ownership to reduce assertion verification to CHC solving.
Although both approaches have been shown effective, they also have unavoidable limitations: they are inadequate when a given program does not follow a specific pattern of reference usage, i.e., when the type system conservatively rejects the program.
Therefore, relaxing this limitation by extending the expressive power of ownership type systems is essential for automated verification of imperative programs leveraging ownership types.

\newcommand{\refty}[2]{{#1}\,\mathbf{ref}^{\,{#2}}}
\newcommand{\csirefty}[2]{\refty{\{\nu:\tint \mid {#1}\}}{{#2}}}
\newcommand{\rsirefty}[1]{\refty{\tint}{{#1}}}
\newcommand{\Alias}[1]{\texttt{alias}}
\newcommand{\TRUE}[1]{\texttt{true}}

In this paper, we propose the notion of \emph{borrowable fractional ownership}, which combines both approaches to extend the expressive power of ownership types and develop a new method to verify the functional correctness of programs.
Hereafter, we first review the ideas of \consort{} and \rusthorn{}, showing how their ownership types work for automated verification, and then explain our new approach.

In \consort{}, each reference is given a type of the form \(\refty{\tau}{r}\), where \(r\), called (fractional) ownership, ranges over \([0,1]\), and \(\tau\) is a refinement type that describes the content of the reference.
Additionally, a reference type should satisfy the well-formedness that no refinement information (that is, only \(\TRUE{}\)) is allowed for zero-ownership references.
\cref{code:consort-demo} shows an example program of \consort{} and how it is typed.
The type of \(x\) on the first line shows that \(x\) is a reference with full ownership, pointing to a cell holding the value \(0\).
On the second line, the aliasing reference \(y\) is created, and all the ownership has been transferred to \(y\), which deprives \(x\) of refinement information in its type.
On the third line, the type of \(y\) can be strongly updated, as \(y\) has the full ownership.
The \Alias{} command on the fourth line tells the type system that \(x\) and \(y\) are aliases; based on that, the information on \(y\) is propagated to \(x\), and the ownership of \(y\) and its refinement information is redistributed between \(x\) and \(y\).
The fifth line type-checks, as the types of \(x\) and \(y\) say that they both store \(1\).
In this manner, \consort{} uses fractional ownership to allow strong updates of refinement types and share information among references based on \Alias{} annotations (which may automatically be inserted by a separate pointer analysis).

\begin{figure}
\begin{lstlisting}[style=mystyle]
let x = mkref 0 in // $\color{Blue} x:\csirefty{\nu=0}{1}$
let y = x in       // $\color{Blue} x:\csirefty{\TRUE{}}{0},\ y:\csirefty{\nu=\textcolor{red}{0}}{1}$
y := 1;            // $\color{Blue} x:\csirefty{\TRUE{}}{0},\ y:\csirefty{\nu=\textcolor{red}{1}}{1}$
alias(x = y);      // $\color{Blue} x:\csirefty{\nu=1}{0.5},\ y:\csirefty{\nu=1}{0.5}$
assert( *x + *y = 2 ) // $\color{Blue}{\texttt{ok}}$
\end{lstlisting}
    \caption{A program example of \consort{}}
    \label{code:consort-demo}
\end{figure}

\begin{figure}
\begin{lstlisting}[style=mystyle]
let x = mkref 0 in // $\color{Blue} x:\rsirefty{\alpha}$
let y = x in       // $\color{Blue} y:\rsirefty{\beta}$; $\color{Blue} x$ is invalidated during $\color{Blue} \beta$
y := 1;
endlft $\beta$;          // dispose $\color{Blue} y$; $\color{Blue} x:\rsirefty{\alpha}$
assert( *x = 1 )
\end{lstlisting}
    \caption{Typing in \rusthorn{}}
    \label{code:rusthorn}
\end{figure}

On the other hand, the type system of \rusthorn{} (which inherits that of Rust \footnote{For formalization, \rusthorn{} inherits the type system of \(\lamrust\), a core calculus for Rust by Jung et al.~\cite{ref_rustbelt}.}) expresses the reference type as of the form \(\tau\,\rawtref{\alpha}{}\), where \(\alpha\) is a \emph{lifetime}, a symbol representing an abstract time period during which the reference is valid.
In addition, the borrowed references are invalidated during the corresponding lifetime in Rust.
\cref{code:rusthorn} shows a variation of the program in \cref{code:consort-demo} and how it is typed in the type system of \rusthorn{}.
On the first line, a reference \(x\) with lifetime \(\alpha\) is created.
On the second line, \(y\) is created as an alias, and the (full) ownership of \(x\) is \emph{borrowed} to \(y\) during the lifetime \(\beta\), which invalidates \(x\).
On the third line, \(y\) can be safely updated as it has temporally borrowed ownership.
On the fourth line, the lifetime \(\beta\) ends, and the borrowed ownership is returned to \(x\), and thus \(x\) can be safely accessed on the fifth line.
In reality, \texttt{endlft} commands are automatically inserted by the Rust compiler.

To verify the assertions, \rusthorn{} uses the technique of \emph{prophecy}~\cite{ref_abadi_1991,ref_calcagno_2007,ref_jung_2019} to reduce the problem of verification to that of CHC solving.
Here, for the sake of clarity, we instead reduce them to the verification problem for functional programs.
In the reduction, a mutably borrowing reference \(y\) is modeled as a pair of \(\tuple{v, \proph{y}}\), where the first component \(v\) is the current value of the reference \(y\) and the second component \(\proph{y}\), called \(y\)'s \emph{prophecy}, expresses the future value at the end of \(y\)'s lifetime.
For verification, the prophecy is encoded by standard non-determinism.
\cref{code:rusthorn-red} shows how the translation works for the program in \cref{code:rusthorn}.
Notice that on the second line, the mutably borrowing reference \(y\) becomes a pair of the current value (\(x\)) and the prophecy value initialized non-deterministically (represented by \(\_\)), then \(x\) takes the prophecy value of \(y\) on the third line.
After the value of \(y\) is changed on the fourth line, the \texttt{assume} command on the fifth line, which is converted from \(\lftend \beta\) in the original program, \emph{resolves} the prophecy value of \(y\); in other words, the formally non-deterministic second value of \(y\) is determined by the current value (\(1\)) so that \(x\) recover its correct value and the last assertion succeeds.
Note that this program has no references and is easier to verify than the original one.

\begin{figure}
\begin{lstlisting}[style=mystyle]
let x = 0 in           // $\color{Blue} x = 0$
let y = (x, _) in      // $\color{Blue} x = 0, y = \tuple{0, \_}$
let x = snd y in       // $\color{Blue} x = {\color{Red}\_}, y = \tuple{0, \_}$
let y = (1, snd y) in  // $\color{Blue} x = \_, y = \tuple{{\color{Red}1}, \_}$
assume(fst y = snd y); // $\color{Blue} x = {\color{Red}1}, y = \tuple{1, {\color{Red}1}}$
assert(x = 1)  // ok
\end{lstlisting}
   \caption{A pure functional program equivalent to \cref{code:rusthorn}}
   \label{code:rusthorn-red}
\end{figure}

As mentioned, both approaches have their limitations: \consort{} relies on \Alias{} annotations and thus works effectively only for programs where exact alias information is available statically (\cref{code:demo-combination-intro} shows one such program), and \rusthorn{} relies on Rust's type system, hence is inapplicable to programs written in other imperative languages like C.

In this paper, to overcome the limitations and take the best of both approaches, we introduce \emph{borrowable fractional ownership types}.
The type of a reference is now of the form \(\tref{\alpha}{r}{B}\), where \(\alpha\) is the lifetime of the reference, \(r\) is the (fractional) ownership of the reference ranging over \([0, 1]\), and \(B\) is the borrowing information of the reference.
A borrowing \(B\) is either a pair \((\beta, s)\) or \(\emptyset\), where \(\beta\) is the lifetime for which the reference lends the ownership and \(s\) is the amount of (fractional) ownership others borrow from the reference.
Our type system extends those of both \consort{} and Rust ; for the sake of simplicity, however, we consider only integer references, excluding nested reference types like \(\tint\rawtref{\alpha,r}{B_1}\rawtref{\beta,s}{B_2}\). (We briefly discuss this in \cref{section:conclusion}.)
The reference types of \consort{} can be considered a particular case where \(B\) is always \(\emptyset\), and all the lifetimes are identical.\footnote{Unlike \consort{}, we do not have refinement predicates because we follow the translation-based approach of \rusthorn{} for verification.}
And those of Rust can be considered a particular case where the ownership \(r\) is always \(0\) or \(1\). \footnote{This is not the case for shared (a.k.a. immutable) references in Rust, but our type system also subsumes these references.}
\cref{code:demo-combination-intro} shows an example program that can be handled by neither \consort{} nor \rusthorn{} but by our new type system. (We will show how the program is typed in our type system in \cref{section:type-system}.)
Together with the type system, we develop a verification method to prove the absence of assertion failures by a type-directed translation from imperative programs with mutable references into stateless functional programs without references, in a similar way to \rusthorn{}.

\begin{figure}
\begin{lstlisting}[style=mystyle]
minmax(x, y) { if *x < *y then (x, y) else (y, x) }
rand_choose(x, y) { if _ then x else y }
let x = mkref _ in  let y = mkref _ in
let (p, q) = minmax(x, y) in
let z = rand_choose(x, y) in
assert( *p <= *z && *z <= *q );
z := 1
\end{lstlisting}
    \caption{Example program that can be handled by neither \consort{} nor \rusthorn{} but by our method}
    \label{code:demo-combination-intro}
\end{figure}

Our main contributions are:
\begin{enumerate}
\item Proposal of the new notion of borrowable fractional ownership.
\item Formalization of a type system with borrowable fractional ownership and proof of the preservation of types and ownership invariants.
\item Type-directed translation from imperative programs to functional programs without references, and proof of its soundness: if the target program has no assertion failures, neither does the source program.
\end{enumerate}

The rest of this paper is organized as follows.
\cref{section:source-lang} introduces the source language.
\cref{section:type-system} formalizes the type system with borrowable fractional ownership.
\cref{section:translation} shows the reduction to stateless functional programs.
\cref{section:experiments} reports preliminary experimental results.
\cref{section:related} discusses related work and \cref{section:conclusion} concludes the paper and discusses future work.

\section{Source Language} \label{section:source-lang}
This section introduces a simple imperative language with mutable references and recursive functions with lifetime polymorphism,
which serves as the target of our type-based verification method.

\subsection{Syntax}

The syntax of the language is as follows:
\begin{gather*}
    o \mbox{ (arithmetic expressions)} \Coloneqq n \mid x \mid o_1 \op o_2
\\[.2em]
    \begin{aligned}
        e \mbox{ (expressions)} \Coloneqq\ & x \mid \letin{x}{o} e \mid \letin{x}{y} e \\
        \mid\ & \letin{x}{\mkref y} e \mid \letin{x}{\star y} e \mid x := y; e \\
        \mid\ & \ifz x \then e_1 \els e_2 \mid \letin{x}{\fcall{f}{\vv{\alpha}}{y_1, \dots, y_n}} e \\
        \mid\ & \alias{x = y}; e \mid \newlftin{\alpha} e \mid \lftend \alpha; e \mid \fail
    \end{aligned}
\\[.2em]
    d \mbox{ (function definitions)} \ \Coloneqq\ f \mapsto \tuple{\vv{\alpha}}(x_1, \dots, x_n) e
\\[.2em]
    P \mbox{ (programs)} \ \Coloneqq\ \langle \lbrace d_1, \dots, d_n \rbrace, e \rangle
\end{gather*}
The meta-variables \(x, y, \dots \) range over the set \(\setvar\) of program variables,
\(f\) ranges over the set of function names, and
\(\alpha, \beta \in \setlft\) range over a set of symbols called \emph{lifetime variables}.

Program variables are introduced by function parameters or let bindings, and the lifetime variables are bound by \(\newlft\).
We assume that all program and lifetime variables are alpha-renamed as necessary,
so that each variable is unique for each binder.

For simplicity, an expression \(e\) is restricted to a de-nested, simplified form.
An arithmetic expression \(o\) consists of integer constants, integer variables, and arithmetic binary operations (\(+\), \(-\), etc.) denoted by \(\op\).
The expression \(\letin{x}{y} e\) creates a new alias \(x\) of \(y\) and evaluates \(e\).
We have primitives  \(\mkref y\) for creating a new reference,  \(\star y\) for reading from a reference, and \(x := y\) for updating a reference.
The conditional branch \(\ifz x \then e_1 \els e_2\) evaluates \(e_1\) if the value of \(z\) is \(0\), and evaluates \(e_2\) otherwise.
A function call \(\fcall{f}{\vv{\alpha}}{y_1, \dots, y_n}\) takes lifetime arguments \(\vv{\alpha}\) along with program arguments \(y_1, \dots, y_n\).
The expression \(\fail\) aborts program execution with an error.

An \emph{alias assumption} \(\alias{x = y}\) has been inherited from \consort{}\cite{ref_consort}; it assumes that \(x\) and \(y\) are aliasing references.
This information is exploited by our type system described in \cref{section:type-system}.
Moreover, we inherit \(\newlft\ \alpha\) and \(\lftend \alpha\), which respectively introduces and eliminates a lifetime \(\alpha\), from \(\lambda_{\text{Rust}}\)~\cite{ref_rustbelt}.
They play essential roles in our type system but do not affect program execution.

A function declaration \(d\) consists of the function name \(f\), the parameter variables \(x_1, \dots, x_n\), and the function body \(e\).
We allow mutual recursion between functions.
A program \(P\) is a pair \(\langle D, e \rangle\) where \(D = \{d_1, \dots, d_n\}\) is a set of function definitions and \(e\) is the main expression.

Henceforth, we write programs in an abbreviated, sugared style to avoid unnecessary complications.
For example, the program in \cref{code:consort-demo} is a valid abbreviated program in our source language.
Here, the expression \(\assert{e_1 = e_2}\) is syntactic sugar of \((\letin{c}{e_1 - e_2} \ifz c \then 0 \els \fail)\).

\subsection{Operational Semantics}

We now introduce the operational semantics of our source language.
Let \(\addr\) be a countable set of heap addresses.
We define a set of runtime values \(\srcval\) as \(\mathbb{Z} \cup \addr\).

A configuration (runtime state) of this language is a quadruple \(\configsrc{H}{R}{\vv{F}}{e}\), consisting of a heap, a register, a call stack, and a currently reducing expression.
A heap \(H\) is a partial function from \(\addr\) to \(\srcval\).
A register \(R\) is a partial mapping from \(\setvar\) to \(\srcval\).
A call stack \(\vv{F}\) is a sequence of return contexts of the form \((\letin{x}{[]} e)\).
A program \(\langle D, e \rangle\) is executed by repeatedly stepping from the initial configuration \(\configsrc{\emptyset}{\emptyset}{\cdot}{e}\) according to the one-step reduction relation \(\reduce{D}\), which is defined by the rules in \cref{fig:source-opsem}.
\begin{figure}
    \begin{center}
    \AxiomC{}
    \RightLabel{\smaller (\textsc{Rs-Var})}
    \UnaryInfC{\(\configsrc{H}{R}{(\letin{x'}{[]} e) : \vv{F}}{x} \reduce{D} \configsrc{H}{R}{\vv{F}}{[x/x']e}\)}
    \DisplayProof
    \end{center}
    \begin{center}
        \AxiomC{\(\llbracket o \rrbracket_R = n\)}
        \AxiomC{\(x' \notin \dom(R)\)}
    \RightLabel{\smaller (\textsc{Rs-Arith})}
    \BinaryInfC{\(\configsrc{H}{R}{\vv{F}}{\letin{x}{o}e} \reduce{D} \configsrc{H}{R\{x' \mapsto n\}}{\vv{F}}{[x'/x]e}\)}
    \DisplayProof
    \end{center}
    \begin{center}
    \AxiomC{\(x' \notin \dom(R)\)}
    \RightLabel{\smaller (\textsc{Rs-Let})}
    \UnaryInfC{\(\configsrc{H}{R}{\vv{F}}{\letin{x}{y}e} \reduce{D} \configsrc{H}{R\{x' \mapsto R(y)\}}{\vv{F}}{[x'/x]e}\)}
    \DisplayProof
    \end{center}
    \begin{center}
        \AxiomC{\(f \mapsto \tuple{\vv{\alpha}}(x_1, \dots, x_n) e \in D\)}
        \AxiomC{\(x' \notin \dom(R)\)}
    \RightLabel{\smaller (\textsc{Rs-Call})}
    \BinaryInfC{\(\begin{array}{l}
        \configsrc{H}{R}{\vv{F}}{\letin{x}{f\tuple{\vv{\alpha}}(y_1, \dots, y_n)}e'} \\
        \quad \reduce{D} \configsrc{H}{R}{(\letin{x'}{[]}[x'/x]e'):\vv{F}}{[y_1/x_1]\cdots[y_n/x_n]e}
    \end{array}\)}
    \DisplayProof
    \end{center}
    \begin{center}
        \AxiomC{\(a \notin \text{dom}(H)\)}
        \AxiomC{\(x' \notin \dom(R)\)}
    \RightLabel{\smaller (\textsc{Rs-MkRef})}
    \BinaryInfC{\(\begin{array}{l}
        \configsrc{H}{R}{\vv{F}}{\letin{x}{\mkref y}e} \\
        \quad \reduce{D} \configsrc{H\{a \mapsto R(y)\}}{R\{x' \mapsto a\}}{\vv{F}}{[x'/x]e}
    \end{array}\)}
    \DisplayProof
    \end{center}
    \begin{center}
        \AxiomC{\(R(y) = a\)}
        \AxiomC{\(H(a) = v\)}
        \AxiomC{\(x' \notin \dom(R)\)}
    \RightLabel{\smaller (\textsc{Rs-Deref})}
    \TrinaryInfC{\(\configsrc{H}{R}{\vv{F}}{\letin{x}{\star y}e} \reduce{D} \configsrc{H}{R\{x' \mapsto v\}}{\vv{F}}{[x'/x]e}\)}
    \DisplayProof
    \end{center}
    \begin{center}
        \AxiomC{\(R(x) = a\)}
        \AxiomC{\(a \in \text{dom}(H)\)}
    \RightLabel{\smaller (\textsc{Rs-Assign})}
    \BinaryInfC{\(\configsrc{H}{R}{\vv{F}}{x := y; e} \reduce{D} \configsrc{H\{a \mapsto R(y)\}}{R}{\vv{F}}{e}\)}
    \DisplayProof
    \end{center}
    \begin{center}
    \AxiomC{\(R(x) = 0\)}
    \RightLabel{\smaller (\textsc{Rs-IfTrue})}
    \UnaryInfC{\(\configsrc{H}{R}{\vv{F}}{\ifz x \then e_1 \els e_2} \reduce{D} \configsrc{H}{R}{\vv{F}}{e_1}\)}
    \DisplayProof
    \end{center}
    \begin{center}
    \AxiomC{\(R(x) \neq 0\)}
    \RightLabel{\smaller (\textsc{Rs-IfFalse})}
    \UnaryInfC{\(\configsrc{H}{R}{\vv{F}}{\ifz x \then e_1 \els e_2} \reduce{D} \configsrc{H}{R}{\vv{F}}{e_2}\)}
    \DisplayProof
    \end{center}
    \begin{center}
    \AxiomC{\(R(x) = R(y)\)}
    \RightLabel{\smaller (\textsc{Rs-Alias})}
    \UnaryInfC{\(\configsrc{H}{R}{\vv{F}}{\alias{x = y}; e} \reduce{D} \configsrc{H}{R}{\vv{F}}{e}\)}
    \DisplayProof
    \end{center}
    \begin{center}
    \AxiomC{\(R(x) \neq R(y)\)}
    \RightLabel{\smaller (\textsc{Rs-AliasFail})}
    \UnaryInfC{\(\configsrc{H}{R}{\vv{F}}{\alias{x = y}; e} \reduce{D} \textbf{AliasFail}\)}
    \DisplayProof
    \end{center}
    \begin{center}
    \AxiomC{}
    \RightLabel{\smaller (\textsc{Rs-Newlft})}
    \UnaryInfC{\(\configsrc{H}{R}{\vv{F}}{\newlftin{\alpha} e} \reduce{D} \configsrc{H}{R}{\vv{F}}{e}\)}
    \DisplayProof
    \end{center}
    \begin{center}
    \AxiomC{}
    \RightLabel{\smaller (\textsc{Rs-Endlft})}
    \UnaryInfC{\(\configsrc{H}{R}{\vv{F}}{\lftend \alpha; e} \reduce{D} \configsrc{H}{R}{\vv{F}}{e}\)}
    \DisplayProof
    \end{center}
    \begin{center}
    \AxiomC{}
    \RightLabel{\smaller (\textsc{Rs-Fail})}
    \UnaryInfC{\(\configsrc{H}{R}{\vv{F}}{\fail} \reduce{D} \textbf{Fail}\)}
    \DisplayProof
    \end{center}
    \caption{Operational semantics of the source language}
    \label{fig:source-opsem}
\end{figure}
We write \(H\{a \mapsto v\}\) for the heap that maps the address \(a\) to \(v\) and behaves as \(H\) for the other addresses.
We also use a similar notation \(R\{x \mapsto v\}\) for registers.
In \drvrule{Rs-Arith}, we write \(\llbracket o \rrbracket_R\) for the computed integer value of \(o\), where variables are interpreted according to the register \(R\).
By \drvrule{Rs-Fail}, we fall into the `hard' failure state \textbf{Fail} out of \(\fail\).
Also, by \drvrule{Rs-AliasFail}, in case the alias assumption \(\alias{x = y}\) is not satisfied, we fall into \textbf{AliasFail}, a `soft' failure state distinct from \textbf{Fail}.

The goal of our verification method is to check that a given program does not reach \textbf{Fail}.
To that end, we introduce a type system based on the new notion of borrowable fractional ownership types in the next section and use it to reduce the verification problem to that for functional programs in \cref{section:translation}.

\section{Type System} \label{section:type-system}
In this section, we introduce our borrowable fractional ownership type system for the source language.
This type system will be used for the translation described later in \cref{section:translation} to verify programs.

\subsection{Syntax of Types}

The syntax of types is given as follows:
\begin{align*}
    \tau \mbox{ (types)} \Coloneqq\ & \tint \mid \tref{\alpha}{r}{B} \\
    r \mbox{ (ownership)}\ \in\ \  & [0, 1] \\
    B\mbox{ (borrowings)} \Coloneqq\ & \emptyset \mid (\alpha, r) \\
    \sigma\mbox{ (function types)} \Coloneqq\ & \forall \vv{\alpha}:\mathcal{L}.\ \langle \tau_1, \dots, \tau_n \rangle \to \langle \tau_1', \dots, \tau_n' \mid \tau \rangle \\
    \mathcal{L}\mbox{ (lifetime environment)} \Coloneqq\ &\emptyset \mid \mathcal{L}, \alpha_1 \lftlt \alpha_2 \\
    \Gamma\mbox{ (type environment)} \Coloneqq\ &\emptyset \mid \Gamma, x:\tau
\end{align*}

Our type system has the integer type \(\tint\) and the reference type of integer \(\tref{\alpha}{r}{B}\), where \(\alpha, r\) indicates that the reference has the lifetime \(\alpha\) and the fractional ownership \(r\).
The \emph{borrowing} \(B\) specifies how much ownership is borrowed by variables during which lifetime.
If \(B = \emptyset\), no borrowing for other lifetime occurred.
Otherwise, \(B = (\beta, s)\), which indicates that the reference is lending ownership \(s\) to variables with lifetime \(\beta\).
For example, \(\tref{\alpha}{0.5}{\beta,0.5}\) is a reference with the lifetime \(\alpha\) that has the ownership of \(0.5\) and lends the ownership of \(0.5\) to other variables during the lifetime \(\beta\).

A \emph{lifetime environment} \(\mathcal{L}\) is a strict partial order on the set of valid lifetime variables, each element of which is denoted \(\alpha \lftlt \beta\).
We impose the constraint that when a reference type \(\tref{\alpha}{r}{\beta, s}\) exists, the lifetimes \(\alpha\) and \(\beta\) are ordered \(\beta \lftlt \alpha\) under the lifetime environment \(\mathcal{L}\) at the point. A \emph{type environment} \(\Gamma\) is a finite set of type bindings of the form \(x:\tau\).

A \emph{function type} \(\sigma\) takes the form \(\forall \vv{\alpha}:\mathcal{L}.\ \langle \tau_1, \dots, \tau_n \rangle \to \langle \tau_1', \dots, \tau_n' \mid \tau \rangle\).
This indicates that the \(i\)-th argument has the type \(\tau_i\) before the function call and changes its type to \(\tau_i'\) when the function returns.
In addition, the function type has a direct return type \(\tau\).
The function types are parameterized over lifetime variables \(\vv{\alpha}\) with an ordering \(\mathcal{L}\), and the lifetime variables appearing in \(\tau, \tau_i, \tau_i'\) must be included in \(\vv{\alpha}\).

\paragraph{Notations.} Hereafter, we will identify the borrowing of \((\alpha, 0)\) with \(\emptyset\) and denote \(\tref{\alpha}{r}{\emptyset}\) by \(\tref{\alpha}{r}{}\).
Given a reference type \(\tau = \tref{\alpha}{r}{B}\), we define \( \lft(\tau) \defrel \alpha \) and \( \own(\tau) \defrel r \); \( \lft(\tint) \) is undefined and we define \( \own(\tint) \defrel 0 \) for technical convenience.
For \( x : \tau \in \Gamma \), we  write \( \lft_{\Gamma}(x) \) and  \( \own_{\Gamma}(x) \) for  \( \lft(\tau)\) and \( \own(\tau) \), respectively.

\subsection{Typing Rules}
The typing rules for expressions are defined in \cref{fig:typing-standard,fig:typing-reference,fig:typing-ghost-inst}.
A type judgment for expressions is of the form \( \tyjudgement{\Theta}{\mathcal{L}}{\Gamma}{e:\tau}{\mathcal{L}'}{\Gamma'} \)
where \(\Theta\), called a \emph{function type environment}, is a map from function names, ranging over by \( f \), to function types.
The judgment indicates that an expression \(e\) is well-typed with a type \(\tau\) under the environments \(\Theta\), \(\mathcal{L}\), and \(\Gamma\), and further that the lifetime and type environments change to \(\mathcal{L}'\) and \(\Gamma'\) after evaluating \(e\).

We tacitly assume that every type judgment is \emph{well-formed}.
A type judgment \(\tyjudgement{\Theta}{\mathcal{L}}{\Gamma}{e:\tau}{\mathcal{L}'}{\Gamma'}\) is well-formed if \(\mathcal{L} \vdash_{\text{WF}} \Gamma\), \(\mathcal{L'} \vdash_{\text{WF}} \Gamma'\), and \(\mathcal{L} \vdash_{\text{WF}} \tau\), which mean the type \(\tau\) or every type used in \(\Gamma, \Gamma'\) should satisfy the order of the corresponding lifetime environment, \(\mathcal{L}\) or \(\mathcal{L}'\), and every reference type \(\tref{\alpha}{r}{\beta,s}\) should satisfy \(r + s \le 1\).
For example, if \(\beta \lftlt \alpha \in \mathcal{L}\), then a type \(\tref{\beta}{r}{\alpha,s}\) is invalid because the smaller lifetime \(\beta\) lends ownership to the larger lifetime \(\alpha\). (The necessity of the ordering is discussed in \cref{rem:order-of-lifetime}.)

\begin{figure}[t]
\begin{minipage}{0.54\linewidth}
    \AxiomC{}
    \RightLabel{\smaller (\textsc{T-Var})}
    \UnaryInfC{\(\tyjudgement{\Theta}{\mathcal{L}}{\Gamma + \Gamma' +  x \!:\!\tau}{x \!:\! \tau }{\mathcal{L}}{\Gamma'}\)}
    \DisplayProof
\end{minipage}
\begin{minipage}{0.45\linewidth}
\AxiomC{}
\RightLabel{\smaller (\textsc{T-Fail})}
\UnaryInfC{\(\tyjud{\Gamma}{\fail \!:\! \tau}{\Gamma'}\)}
\DisplayProof
\end{minipage}
\
\begin{center}
    \AxiomC{\(\tau = \tau_x+\tau_y\)}
    \AxiomC{\( x \notin \dom(\Gamma')\)}
    \noLine
    \BinaryInfC{\(\tyjud{\Gamma, x:\tau_x, y:\tau_y}{e : \rho}{\Gamma'}\)}
\RightLabel{\smaller (\textsc{T-Let})}
\UnaryInfC{\(\tyjud{\Gamma, y:\tau}{\letin{x}{y} e : \rho}{\Gamma'}\)}
\DisplayProof
\end{center}
\begin{center}
    \AxiomC{\(\Gamma \vdash o : \tint\)}
    \AxiomC{\(\tyjud{\Gamma, x : \tint}{e : \rho}{\Gamma'}\)}
    \AxiomC{\( x \notin \dom(\Gamma') \)}
\RightLabel{\smaller (\textsc{T-Arith})}
\TrinaryInfC{\(\tyjud{\Gamma}{\letin{x}{o} e : \rho}{\Gamma'}\)}
\DisplayProof
\end{center}
\begin{center}
    \AxiomC{\(\tyjud{\Gamma, x:\tint}{e_1 : \rho}{\Gamma'}\)}
    \AxiomC{\(\tyjud{\Gamma, x:\tint}{e_2 : \rho}{\Gamma'}\)}
\RightLabel{\smaller (\textsc{T-If})}
\BinaryInfC{\(\tyjud{\Gamma, x:\tint}{\ifz x \then e_1 \els e_2 : \rho}{\Gamma'}\)}
\DisplayProof
\end{center}
\begin{center}
    \AxiomC{\(\rho = [\vv{\beta}/\vv{\alpha}]\tau, \rho_i = [\vv{\beta}/\vv{\alpha}]\tau_i, \rho_i' = [\vv{\beta}/\vv{\alpha}]\tau_i'\)}
    \noLine
    \UnaryInfC{\(\Theta(f) = \forall \vv{\alpha}:\mathcal{M}.\ \langle \tau_1, \dots, \tau_n \rangle \to \langle \tau_1', \dots, \tau_n' \mid \tau \rangle \)}

    \AxiomC{\( x \notin \dom(\Gamma')\)}
    \noLine
    \UnaryInfC{\([\vv{\beta}/\vv{\alpha}]\mathcal{M} \subseteq \mathcal{L}\)}
    \noLine

    \BinaryInfC{\(\tyjud{\Gamma, x:\rho, y_1:\rho_1', \dots, y_n:\rho_n'}{e : \xi}{\Gamma'}\)}
\RightLabel{\smaller (\textsc{T-Call})}
\UnaryInfC{\(\tyjud{\Gamma, y_1:\rho_1, \dots, y_n:\rho_n}{\letin{x}{f\langle\vv{\beta}\rangle(y_1, \dots, y_n)} e : \xi}{\Gamma'}\)}
\DisplayProof
\end{center}
\caption{Typing rules for standard expressions }
\label{fig:typing-standard}
\end{figure}

We explain our type system using \drvrule{T-Let} as a representative example.
The let expression allows an \emph{ownership transfer}.
The expression \(\letin{x}{y} e\) is well-typed only when the body \(e\) is typed under a type environment where \(x\) and \(y\) have types \(\tau_x\) and \(\tau_y\) that are obtained as split of \( \tau \), which is the type \( y \) originally had.
The split expressed by the type addition \(\tau = \tau_x + \tau_y\) (described below) intuitively means that some portion of \( y \)'s ownership is passed to its new alias \( x \).
The condition \( x \notin \dom(\Gamma') \) ensures that \( x \) does not escape its scope.

\emph{Type addition} \(\tau_1 + \tau_2\) is an essential feature in this type system to handle the ownership transfers.
The type addition is defined by the rules in \cref{fig:type-addition}.
\begin{figure}
\begin{center}
\AxiomC{\(\rule{0pt}{1.6ex}\)}
\RightLabel{\smaller (\textsc{A-Int})}
\UnaryInfC{\(\tint = \tint + \tint\)}
\DisplayProof
\hspace{2pt}
\AxiomC{\(\tau_1 = \tau_2\)}
\RightLabel{\smaller (\textsc{A-Ex})}
\UnaryInfC{\(\tau_2 = \tau_1\)}
\DisplayProof
\end{center}
\begin{center}
    \AxiomC{\(r_1 + r_2 \le 1\)}
    \AxiomC{\(s_1 + s_2 \le 1\)}
\RightLabel{\smaller (\textsc{A-Share})}
\BinaryInfC{\(\tref{\alpha}{r_1 + r_2}{\beta, s_1 + s_2} = \tref{\alpha}{r_1}{\beta, s_1} + \tref{\alpha}{r_2}{\beta, s_2}\)}
\DisplayProof
\end{center}
\begin{center}
    \AxiomC{}
\RightLabel{\smaller (\textsc{A-Borrow})}
\UnaryInfC{\(\tref{\alpha}{r+s}{\emptyset} = \tref{\alpha}{r}{\beta, s} + \tref{\beta}{s}{\emptyset}\)}
\DisplayProof
\end{center}
\caption{Type addition rules}
\label{fig:type-addition}
\end{figure}
The rules \drvrule{A-Share} and \drvrule{A-Borrow} express ownership distribution between two aliasing references but are conceptually different.
\emph{Sharing} distributes ownership between references of the same lifetime, whereas \emph{borrowing} is done between references of different lifetimes.
For example, \( \tref{\alpha}{1}{} = \tref{\alpha}{0.5}{\beta, 0.5} + \tref{\beta}{0.5}{} \) means that a reference with lifetime \( \alpha \) is lending half of its ownership to a reference with lifetime \(\beta\), where we have \( \alpha \lftgt \beta \) due to the well-formedness condition.
Type addition is extended to an operation on type environments, written \( \Gamma + \Gamma' \), in a pointwise manner.\footnote{If a variable \( x \) only appears in one of the type environment, say \( x \in \dom(\Gamma)\) but \( x \notin \dom(\Gamma') \), \( (\Gamma + \Gamma')(x) \) is given as \( \Gamma(x) \).}

Since we believe that most of the rules in~\cref{fig:typing-standard} are self-explanatory, we only explain a few nontrivial points.
In \drvrule{T-Var}, we allow some variables in the initial type environment or part of their ownership to be discarded.
This allows us to meet the condition \( x' \notin \dom(\Gamma') \) that appears in rules such as \drvrule{T-Let}.
The premise \( \Gamma \vdash o : \tint \) in \drvrule{T-Arith} expresses that each free variable \( x \) of an arithmetic expression \( o \) has type \( \tint \) in \( \Gamma \).
The condition \([\vv{\beta}/\vv{\alpha}]\mathcal{M} \subseteq \mathcal{L}\) in \drvrule{T-Call} means that the lifetime variables contained in the arguments' types must follow the function's constraint \(\mathcal{M}\) under the lifetime environment \(\mathcal{L}\).
As already explained, \( \tau_i \) and \( \tau'_i \) of a function type represent how the type of the arguments change by calling the function.
Types of \( y_i \), therefore, need to match \( \tau_i \) and \( \tau'_i \) (up-to substitution) before and after the function call, respectively.

\begin{figure}[t]
\begin{center}
    \AxiomC{\(\tyjud{\Gamma, x:\tref{\alpha}{1}{\emptyset}, y:\tint}{e : \rho}{\Gamma'}\)}
    \AxiomC{\( x \notin \dom(\Gamma')\)}
\RightLabel{\smaller (\textsc{T-MkRef})}
\BinaryInfC{\(\tyjud{\Gamma, y:\tint}{\letin{x}{\mkref y} e : \rho}{\Gamma'}\)}
\DisplayProof
\end{center}
\begin{center}
    \AxiomC{\(\tyjud{\Gamma, x:\tint, y:\tref{\alpha}{r}{B}}{e : \rho}{\Gamma'}\)}
\RightLabel{\smaller (\textsc{T-Deref})}
\UnaryInfC{\(\tyjud{\Gamma, y:\tref{\alpha}{r}{B}}{\letin{x}{\star y} e : \rho}{\Gamma'}\)}
\DisplayProof
\end{center}
\begin{center}
    \AxiomC{\(\tyjud{\Gamma, y:\tint}{e : \rho}{\Gamma'}\)}
    \AxiomC{\(\own_{\Gamma}(x) = 1\)}
\RightLabel{\smaller (\textsc{T-Assign})}
\BinaryInfC{\(\tyjud{\Gamma, y:\tint}{x := y; e : \rho}{\Gamma'}\)}
\DisplayProof
\end{center}
\caption{Typing rules for reference manipulations}
\label{fig:typing-reference}
\end{figure}

\cref{fig:typing-reference}~shows the typing rules for reference manipulations.
A newly created reference has the full ownership as expressed by \( x : \tref{\alpha}{1}{\emptyset} \) in the premise of \drvrule{T-MkRef}.
The rule \drvrule{T-Assign} has the condition \( \own(x) = 1 \) to ensure that only a reference with full ownership is updated.
Dereferencing can be done regardless of the ownership.
\footnote{If we want to disallow dereferencing references of ownership \( 0 \) we may add a premise \( r > 0 \) to \drvrule{T-Deref}; it is a matter of preference.}

\begin{figure}[t]
\begin{center}
    \AxiomC{\(\tau_x + \tau_y = \tref \alpha r B\)}
    \AxiomC{\(\tref \alpha r B = \rho_x + \rho_y\)}
    \noLine
    \BinaryInfC{\(\tyjud{\Gamma, x:\rho_x, y:\rho_y}{e : \rho}{\Gamma'}\)}
\RightLabel{\smaller (\textsc{T-Alias})}
\UnaryInfC{\(\tyjud{\Gamma, x:\tau_x, y:\tau_y}{\alias{x = y}; e :\rho}{\Gamma'}\)}
\DisplayProof
\end{center}
\begin{center}
  \AxiomC{\(\tyjudgement{\Theta}{\mathcal{M}}{\Gamma}{e : \rho}{\mathcal{L}'}{\Gamma'}\)}
  \AxiomC{\(\mathcal{M} = \mathcal{L} \cup \{\alpha\}\)}
  \AxiomC{\(\alpha = \min{\mathcal{M}}\)}
\RightLabel{\smaller (\textsc{T-NewLft})}
\TrinaryInfC{\(\tyjudgement{\Theta}{\mathcal{L}}{\Gamma}{\newlftin{\alpha} e : \rho}{\mathcal{L}'}{\Gamma'}\)}
\DisplayProof
\end{center}
\begin{center}
    \AxiomC{\(\tyjudgement{\Theta}{\mathcal{L}{\uparrow}_\alpha}{\Gamma{\uparrow}_\alpha}{e : \rho}{\mathcal{L}'}{\Gamma'}\)}
    \AxiomC{\(\alpha = \min{\mathcal{L}}\)}
\RightLabel{\smaller (\textsc{T-EndLft})}
\BinaryInfC{\(\tyjudgement{\Theta}{\mathcal{L}}{\Gamma}{\lftend \alpha ; e : \rho}{\mathcal{L}'}{\Gamma'}\)}
\DisplayProof
\end{center}
\caption{Typing rules for ghost instructions}
\label{fig:typing-ghost-inst}
\end{figure}

The rule \drvrule{T-Alias} allows us to transfer the ownership of references based on alias information.
Rules \drvrule{T-NewLft} and \drvrule{T-EndLft} are rules for lifetime introduction and termination, which do not exist in the type system of \consort{}.
The expression \(\newlftin{\alpha} e\) is typed when \(e\) is typed under the lifetime environment \(\mathcal{M}\) such that \(\mathcal{M} = \mathcal{L} \cup \{\alpha\}\) and \(\alpha = \min{\mathcal{M}}\), which mean that the newly introduced lifetime variable \(\alpha\) becomes a minimal lifetime in typing of \(e\).
The operator \(\uparrow_{\alpha}\) in \drvrule{T-EndLft} removes all information concerning \(\alpha\) from the environment and gives back the ownership that references with lifetime \( \alpha \) have been borrowing.
Concretely, \( \ctxterm{\Gamma}{\alpha} \) is defined by
\[
\begin{aligned}
  \ctxterm{\tint}{\alpha} &\defrel \tint \qquad \ctxterm{\tref{\beta}{r}{B}}{\alpha} \defrel
    \begin{cases}
        \tref{\beta}{r+s}{\emptyset} & (\text{if }B = (\alpha, s)) \\
        \tref{\beta}{r}{B} & (\text{otherwise})
    \end{cases} \\
    \ctxterm{\emptyset}{\alpha} &\defrel \emptyset \qquad
    \ctxterm{(\Gamma, x:\tau)}{\alpha} \defrel \begin{cases}
        \ctxterm{\Gamma}{\alpha} & (\text{if }\alpha = \lft(\tau)) \\
        \ctxterm{\Gamma}{\alpha}, x:(\ctxterm{\tau}{\alpha}) & (\text{if }\alpha \neq \lft(\tau))
    \end{cases} \\
\end{aligned}
\]
whereas \( \ctxterm{\mathcal L} \alpha \) is the subposet induced by \( \mathcal L \setminus \{ \alpha \}\).
The premise \( \alpha = \min{\mathcal{L}}\) says that we can only end a minimal lifetime.
This ensures that only a lifetime with which references are not lending their ownership can be ended.

The typing rules for functions and programs are defined in \cref{fig:typing-prog}.

\begin{figure}
\begin{center}
    \AxiomC{\(\Theta(f) = \forall \vv{\alpha}:\mathcal{L}.\ \langle \tau_1, \dots, \tau_n \rangle \to \langle \tau_1', \dots, \tau_n' \mid \tau \rangle\)}
    \AxiomC{\(\mathcal{L} \subseteq \vv{\alpha}\)}
    \noLine
    \BinaryInfC{\(\tyjudgement{\Theta}{\mathcal{L}}{x_1:\tau_1, \dots, x_n:\tau_n}{e : \tau}{\mathcal{L}}{x_1:\tau_1', \dots, x_n:\tau_n'}\)}
\RightLabel{\smaller (\textsc{T-FunDef})}
\UnaryInfC{\(\Theta \vdash f \mapsto \tuple{\vv{\alpha}}(x_1, \dots, x_n) e\)}
\DisplayProof
\end{center}
\begin{center}
    \AxiomC{\(\Theta \vdash d_1\)}
    \AxiomC{\(\cdots\)}
    \AxiomC{\(\Theta \vdash d_n\)}
    \noLine
    \TrinaryInfC{\(\dom(\Theta) = \dom(\lbrace d_1, \dots, d_n \rbrace)\)}

    \AxiomC{\(\tyjudgement{\Theta}{\emptyset}{\emptyset}{e : \tau}{\emptyset}{\emptyset}\)}
\RightLabel{\smaller (\textsc{T-Prog})}
\BinaryInfC{\(\vdash \langle \lbrace d_1, \dots, d_n \rbrace, e \rangle \)}
\DisplayProof
\end{center}
    \caption{Typing rules for functions and programs}
    \label{fig:typing-prog}
\end{figure}
Here, the rule \drvrule{T-Fundef} stipulates the contract that a function cannot have free variables and should end all lifetimes introduced in the function body by requiring \(\mathcal{L} \subseteq \vv{\alpha}\) and setting \(\mathcal{L}\) to both initial and ending lifetime context.
The rule \drvrule{T-Prog} checks that all the function definitions and the main expression are well-typed.
It also checks that the main expression does not have any free (lifetime) variables.

\cref{code:borfra-consort-demo} and \cref{code:borfra-rusthorn-demo} show the examples in \cref{code:consort-demo} and \cref{code:rusthorn} typed under our type system, respectively.
The typings are mostly the same as those of \consort{} and \rusthorn{}.
The difference between \cref{code:rusthorn} and \cref{code:borfra-rusthorn-demo} is that the reference \(x\) is not simply invalidated by the type system like in Rust but given a type of \(x:\tref{\alpha}{0}{\beta,1}\).
This demonstrates that our type system can express the lending references as a type, which is not possible in Rust.
We will utilize this difference in a later example.
Additionally, it is noteworthy that the typing of \cref{code:borfra-consort-demo} is obtained just by giving the identical lifetime to both references.
(Recall that \consort{} can be seen as a specific case of our type system.)

\begin{figure}
\begin{tabular}{ll}
\begin{minipage}[t]{0.47\hsize}
\begin{lstlisting}[style=mystyle]
let x = mkref 0 in
// $\color{Blue} x:\tref{\alpha}{1}{}$
let y = x in
// $\color{Blue} x:\tref{\alpha}{0}{},\ y:\tref{\alpha}{1}{}$
y := 1;
alias(x = y);
// $\color{Blue} x:\tref{\alpha}{0.5}{},\ y:\tref{\alpha}{0.5}{}$
assert( *x + *y = 2 )
\end{lstlisting}
    \caption{Typed example of \cref{code:consort-demo}}
    \label{code:borfra-consort-demo}
\end{minipage} &
\begin{minipage}[t]{0.47\hsize}
\begin{lstlisting}[style=mystyle]
let x = mkref 0 in
// $\color{Blue} x:\tref{\alpha}{1}{}$
let y = x in
// $\color{Blue} x:\tref{\alpha}{0}{\beta,1}, y:\tref{\beta}{1}{}$
y := 1;
endlft $\beta$;
// $\color{Blue} x:\tref{\alpha}{1}{}$
assert( *x = 1 )
\end{lstlisting}
    \caption{Typed example of \cref{code:rusthorn}}
    \label{code:borfra-rusthorn-demo}
\end{minipage}
\end{tabular}
\end{figure}

Another example of typing is given in \cref{code:typed-combination-intro}.
This example is the program in \cref{code:demo-combination-intro} typed under our type system (extended with pair types).
This program cannot be typed under \consort{} nor \rusthorn{} (namely, Rust) because \consort{} cannot insert \Alias{} statements after the use of \(z\) since \(z\) is a dynamic alias of either \(x\) or \(y\), and in Rust, we cannot touch \(x\) or \(y\) after the creation of \(p\) and \(q\).
Our type system can type this program by employing borrowing information and expressing partially lending references as a type of \(\tref{\alpha}{0.5}{\beta,0.5}\).

\begin{figure}
\begin{lstlisting}[style=mystyle]
// $\color{Blue}\Theta(\text{minmax})=\forall \alpha, \beta: \beta\lftlt\alpha.\ \tuple{\tref{\alpha}{1}{}, \tref{\alpha}{1}{}}$
// $\color{Blue}\qquad\qquad \to \tuple{\tref{\alpha}{0.5}{\beta,0.5}, \tref{\alpha}{0.5}{\beta,0.5} \mid \tuple{\tref{\beta}{0.5}{},\tref{\beta}{0.5}{}}}$
minmax(x, y) { if *x < *y then (x, y) else (y, x) }
// $\color{Blue}\Theta(\text{rand\_choose})=\forall \alpha, \beta: \beta\lftlt\alpha.\ \tuple{\tref{\alpha}{0.5}{\beta,0.5}, \tref{\alpha}{0.5}{\beta,0.5}}$
// $\color{Blue}\qquad\qquad \to \tuple{\tref{\alpha}{0}{}, \tref{\alpha}{0}{} \mid \tref{\alpha}{0.5}{\beta,0.5}}$
rand_choose(x, y) { if _ then x else y }
let x = mkref _ in           // $\color{Blue} x\COL\tref{\alpha}{1}{}$
let y = mkref _ in           // $\color{Blue} y\COL\tref{\alpha}{1}{}$
let (p, q) = minmax(x, y) in
// $\color{Blue} x\COL\tref{\alpha}{0.5}{\beta,0.5}, y\COL\tref{\alpha}{0.5}{\beta,0.5}, p\COL\tref{\beta}{0.5}{}, q\COL\tref{\beta}{0.5}{}$
let z = rand_choose(x, y) in // $\color{Blue} x\COL\tref{\alpha}{0}{}, y\COL\tref{\alpha}{0}{}, z\COL\tref{\alpha}{0.5}{\beta,0.5}$
assert( *p <= *z && *z <= *q );
endlft $\beta$;                    // $\color{Blue} \text{dispose } p, q;\ z\COL\tref{\alpha}{1}{}$
z := 1
\end{lstlisting}
    \caption{Typed example of \cref{code:demo-combination-intro}}
    \label{code:typed-combination-intro}
\end{figure}

\subsection{Type Preservation}
As usual, reduction preserves well-typedness, and, in our type system, well-typedness ensures the ownership invariant of fractional types:
\begin{theorem} \label{thm:ownsum-bound}
  Suppose that \( \vdash \tuple{D, e} \) and \(\tuple{\emptyset, \emptyset, \cdot,  e} \redmul D \langle H, R, \vv F, e' \rangle \).
  Then we have \( \tyjudgement \Theta {\mathcal L} {\Gamma} {e' : \tau} {\mathcal L'} {\Gamma'} \)  for some \( \Theta \), \( \Gamma \), \( \Gamma' \), \( \mathcal{L} \), \( \mathcal{L}' \) and \( \tau \).
  Moreover, for any \(a \in \dom(H)\), we have
  \(\sum_{x\in\setvar; R(x) = a}\own_{\Gamma}(x) \leq 1\).
  \qed
\end{theorem}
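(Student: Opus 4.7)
The plan is to prove the statement by induction on the length of the reduction sequence, but using a strengthened invariant on whole configurations rather than just expressions, because the ownership bound depends on the register and heap. Concretely, I would define a judgment $\vdash \tuple{H, R, \vv{F}, e'}$ asserting that (i) there exist $\Theta, \mathcal L, \Gamma, \tau, \mathcal L', \Gamma'$ with $\tyjudgement{\Theta}{\mathcal L}{\Gamma}{e' : \tau}{\mathcal L'}{\Gamma'}$; (ii) the call stack $\vv F$ is a chain of return frames $(\letin{x_i}{[]} e_i)$ such that each $e_i$ is typable with the preceding $\Gamma'$ extended by $x_i : \tau_i$, so the stack represents a valid continuation ending in the whole program's type; (iii) the register $R$ is consistent with $\Gamma$ (integers for $\tint$, addresses for references, with fresh variables outside its domain); and (iv) for every $a \in \dom(H)$, $\sum_{x\in\setvar;\, R(x) = a} \own_\Gamma(x) \le 1$. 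The conclusion is then immediate from (i) and (iv).

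For the base case, \drvrule{T-Prog} gives $\tyjudgement{\Theta}{\emptyset}{\emptyset}{e : \tau}{\emptyset}{\emptyset}$, and conditions (ii)--(iv) hold vacuously for $\tuple{\emptyset,\emptyset,\cdot,e}$. For the inductive step, I would do a case analysis on the reduction rule, inverting the derivation of (i) to expose the relevant typing premises. The purely-computational rules (\drvrule{Rs-Var}, \drvrule{Rs-Arith}, \drvrule{Rs-If*}, \drvrule{Rs-Newlft}, \drvrule{Rs-Deref}) leave both $H$ and the ownerships untouched, so (iv) transfers verbatim. For \drvrule{Rs-MkRef}, the fresh address $a$ acquires the single variable $x' : \tref{\alpha}{1}{\emptyset}$ by \drvrule{T-MkRef}, so its ownership sum is exactly $1$. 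For \drvrule{Rs-Assign}, the heap value changes but no type does, so (iv) is preserved. For \drvrule{Rs-Let}, \drvrule{T-Let} splits $\tau = \tau_x + \tau_y$; both \drvrule{A-Share} and \drvrule{A-Borrow} preserve the total at the referenced address since the new alias $x'$ satisfies $R(x') = R(y)$. For \drvrule{Rs-Alias}, the runtime guard $R(x) = R(y)$ matches what \drvrule{T-Alias} tacitly exploits, and the rebalancing is again sum-preserving.

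The main obstacle will be the \drvrule{Rs-Endlft} and \drvrule{Rs-Call} cases. For endlft, \drvrule{T-EndLft} rewrites the context via $\uparrow_\alpha$, converting every $\tref{\beta}{r}{\alpha,s}$ into $\tref{\beta}{r+s}{\emptyset}$ while erasing all variables of lifetime $\alpha$. To conclude that (iv) still holds address-wise, I need an auxiliary \emph{borrowing coherence} invariant to be carried inside the configuration judgment, of the following shape: for every $y : \tref{\beta}{r}{\alpha,s} \in \Gamma$ with $R(y) = a$, the sum of $\own_\Gamma(z)$ taken over $z$ with $R(z) = a$ and $\lft_\Gamma(z) = \alpha$ equals exactly $s$. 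This, combined with the well-formedness bound $r+s \le 1$ and the fact that $\alpha = \min \mathcal L$ prevents further outgoing borrowings, lets the returned ownership $s$ be absorbed without violating (iv). This coherence invariant must be threaded through every inductive case: it is established by \drvrule{T-MkRef}, preserved by type addition (because \drvrule{A-Borrow} and \drvrule{A-Share} are designed precisely to conserve it), enforced by the guard of \drvrule{Rs-Alias}, and used at endlft. For \drvrule{Rs-Call}, the strengthened invariant must also relate the pushed frame $(\letin{x'}{[]}[x'/x]e')$ to the post-call typing $\Gamma', x:\rho$, and the callee body must be retypable under the substituted environment; this requires a standard lifetime-substitution lemma to go from the polymorphic type of $f$ to the instance at $\vv\beta$, whose correctness follows from the condition $[\vv\beta/\vv\alpha]\mathcal M \subseteq \mathcal L$ in \drvrule{T-Call}. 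Once the borrowing coherence invariant and the lifetime substitution lemma are in place, each case is routine, and the conclusion of the theorem is a direct reading-off from the preserved configuration judgment.
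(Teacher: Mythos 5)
Your overall architecture --- a strengthened configuration-typing judgment carrying the expression typing, stack typing, register consistency and the per-address ownership bound, proved invariant by induction on the reduction with substitution lemmas for the call case --- is exactly the paper's strategy. The genuine gap is in the auxiliary ``borrowing coherence'' invariant, which is the heart of the \drvrule{Rs-Endlft} case: your formulation of it is wrong in two ways. First, it is stated per lender: for \emph{every} $y:\tref{\beta}{r}{\alpha,s}$ with $R(y)=a$ you require the total ownership of the $\alpha$-lifetime references at $a$ to equal \emph{that} $y$'s $s$. But \drvrule{A-Share} can split a single lender in two, e.g.\ $\tref{\beta}{0.5}{\alpha,0.5} = \tref{\beta}{0.25}{\alpha,0.25}+\tref{\beta}{0.25}{\alpha,0.25}$, after which two lenders with $s=0.25$ coexist while the $\alpha$-side still holds $0.5$; your invariant is then violated in a perfectly well-typed state. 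The quantity to compare with the $\alpha$-side is the \emph{sum} of the lent amounts over all lenders at $a$. Second, your right-hand side omits reborrowing: an $\alpha$-reference may itself have lent its ownership onward to some $\gamma \lftlt \alpha$, so the $\alpha$-side may transiently hold $0$ while still ``owing'' $s$. The paper's invariant (borrow consistency) is accordingly the aggregated \emph{inequality}: for each address $a$ and lifetime $\alpha$, the total amount lent to $\alpha$ at $a$ is at most the ownership held by $\alpha$-references at $a$ plus the amount those references have further lent out; minimality of $\alpha$ at \texttt{endlft} kills the last term and yields exactly what the fraction bound needs.

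A second, related omission: \drvrule{T-Var} (and the return cases) silently discards variables and portions of their ownership from $\Gamma$. If a \emph{borrower's} ownership is discarded, any borrowing invariant stated over $\Gamma$ alone becomes false, and the induction breaks at the next \texttt{endlft}. The paper resolves this by threading a ``discarded'' environment $\Delta$ through the configuration judgment and maintaining all consistency conditions over $\Gamma + \Delta$; the theorem's bound over $\Gamma$ then follows since $\own_{\Gamma}(x) \le \own_{\Gamma+\Delta}(x)$. You need the same device, because discarding does not preserve your invariant in its per-$\Gamma$ form. With these two repairs your plan coincides with the paper's proof.
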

The proof of this theorem is given in \refappendix{sec:type-proof}.
In the theorem above, \( \Theta \), \( \mathcal L' \), \( \Gamma' \) and \( \tau \) are fixed, whereas \( \mathcal L \) and \( \Gamma \) differ at each step of the execution.\footnote{Strictly speaking, \( \mathcal L' \), \( \Gamma' \) and \( \tau \) also changes when a function is called.}
This is because, along the execution, we may introduce or terminate lifetime variables or change ownership of variables.
What is important, however, is that \( \Gamma \) always gives an assignment of ownership so that the sum of the ownerships of references pointing to an address \( a \) does not exceed \( 1 \).
This is a key property we exploit to show the soundness of the translation introduced in the next section.

\begin{remark} \label{rem:order-of-lifetime}
    We introduced the lifetime ordering because \cref{thm:ownsum-bound} does not hold if we have cyclic borrows.
    \cref{code:order-of-lifetime} shows an example of such programs.
    \qed
\end{remark}

\begin{figure}
\begin{lstlisting}[style=mystyle]
let x = mkref 0 in // $\color{Blue} x:\tref{\alpha}{1}{}$
let y = x in       // $\color{Blue} x:\tref{\alpha}{0}{\beta,1},\ y:\tref{\beta}{1}{}$
let z = y in       // $\color{Blue} y:\tref{\beta}{0}{\alpha, 1},\ z:\tref{\alpha}{1}{}$
endlft $\beta$           // $\color{Blue} x:\tref{\alpha}{\color{Red}1}{},\ z:\tref{\alpha}{\color{Red}1}{}$
// ownership sum of x and z becomes 2
\end{lstlisting}
    \caption{An example where cyclic borrows occur}
    \label{code:order-of-lifetime}
\end{figure}

\section{Translation} \label{section:translation}
This section introduces our translation from the source language into a standard functional language with non-deterministic assignments but without references.
Our translation is mainly based on \rusthorn{}~\cite{ref_rusthorn}.
We convert all references to a pair of current and prophecy values as described in \cref{section:intro} and exploit the prophecy technique when borrowings occur.
Employing this translation, we can verify that a well-typed program in the source language will not fail by checking the safety of the translated program using existing verifiers that can efficiently verify functional programs such as \toolname{Why3}~\cite{ref_why3} or \toolname{MoCHi}~\cite{ref_mochi}.

We first define our target language, then formulate translation rules, and finally state its soundness.

\subsection{Target Language}
\label{sec:target-lang}
In this subsection, we introduce the target language of our translation.

\subsubsection{Syntax}

The syntax of the target language is as follows:
\begin{gather*}
    o \mbox{ (arithmetic terms)} \Coloneqq\  n \mid x \mid o_1 \op o_2
\\[.2em]
    \begin{aligned}
        t \mbox{ (terms)} \Coloneqq\ & x \mid \tletin{x}{o} t \mid \tletin{x}{y} t \mid \tletin{x}{f(y_1, \dots, y_n)} t \\
        \mid\ & \tletin{x}{\tuple{y, z}} t \mid \tletin{x}{\fst y} t \mid \tletin{x}{\snd y} t \\
        \mid\ & \tletin{x}{\_} t \mid \tassume{x = y}; t \mid \tifz x \tthen t_1 \tels t_2 \mid \tfail
    \end{aligned}
\\[.2em]
    d \mbox{ (function definitions)} \ \Coloneqq\ f \mapsto (x_1, \dots, x_n) t
\\[.2em]
    P \mbox{ (programs)} \ \Coloneqq\ \langle \lbrace d_1, \dots, d_n \rbrace, t \rangle
\end{gather*}
We only describe the differences from the source language.
First, this language uses a \emph{non-deterministic value} \(\_\), whose exact value is not determined until a corresponding \(\textbf{assume}\) instruction appears.
For example, in \cref{code:non-deterministic-demo}, \(x\) and \(y\) can be any integer on their initialization.
But after running the third and fourth lines, we have \(x = 3\) and \(y = 3\).
Once the value of a non-deterministic value gets fixed, we can no longer change it.
So if we add \(\tassume{x = 5}\) at the end of this program, the program execution raises an error.
Second, this language is free of references (or lifetimes).
We no longer handle the heap memory in the operational semantics (described later), and thus, the verification problem for the target language is much more tractable than that for the source language.
Third, the target language has pairs, created by \(\tuple{x, y}\) and decomposed by \(\textbf{fst}\) and \(\textbf{snd}\).
Pairs will be used to model references using prophecy variables.

\begin{figure}
\begin{lstlisting}[style=mystyle]
let x = _ in  let y = _ in
assume(x = y);  assume(y = 3);
// `assume(x = 5)' here causes an error
\end{lstlisting}
    \caption{Example demonstrating non-determinism}
    \label{code:non-deterministic-demo}
\end{figure}

\newcommand{\contgt}[1]{\configtgt{\mathcal{D}}{\mathcal{S}}{\vv{\mathcal{F}}}{{#1}}}

\subsubsection{Operational Semantics}

We now introduce the operational semantics for our target language.
The set of runtime values \(\tgtval\) is defined recursively as follows:
\[
    \tgtval \ni v \Coloneqq n \mid \langle v_1, v_2 \rangle
\]

A configuration (runtime state) of this language has the form \(\contgt{t}\), consisting of function definitions \(\mathcal{D}\), a \emph{set} of registers \(\mathcal{S}\), a call stack \(\vv{\mathcal{F}}\), and the currently reducing term \(t\).
The main difference to the source language is that a configuration has a set of registers \(\mathcal{S}\) to handle the non-determinism.
Each register \(S \in \mathcal{S}\) is a partial mapping from \(\setvar\) to \(\tgtval\).
Important rules of the operational semantics are given by the rules in \cref{fig:target-opsem}; the rules for other constructs are standard and are defined similarly to those for the source language.
(See \refappendix{sec:operational-semantics-target} for the complete definition.)

\begin{figure}
    \begin{center}
        \AxiomC{\(\toml{S}' = \{S\{x \mapsto S(y)\} \mid S \in \toml{S}\}\)}
    \RightLabel{\smaller (\textsc{Rt-Let})}
    \UnaryInfC{\(\contgt{\tletin{x}{y}t} \longrightarrow \configtgt{\toml{D}}{\toml{S}'}{\vv{\toml{F}}}{t}\)}
    \DisplayProof
    \end{center}
    \begin{center}
        \AxiomC{\(\toml{S}' = \{S\{x \mapsto n\} \mid S \in \toml{S}, n \in \mathbb{Z}\}\)}
    \RightLabel{\smaller (\textsc{Rt-LetNondet})}
    \UnaryInfC{\(\contgt{\tletin{x}{\_}t} \longrightarrow \configtgt{\toml{D}}{\toml{S}'}{\vv{\toml{F}}}{t}\)}
    \DisplayProof
    \end{center}
    \begin{center}
    \AxiomC{\(\toml{S}' = \{S \in \toml{S} \mid S(x) = S(y)\}\)}
    \RightLabel{\smaller (\textsc{Rt-Assume})}
    \UnaryInfC{\(\contgt{\tassume{x = y}; t} \longrightarrow \configtgt{\toml{D}}{\toml{S}'}{\vv{\toml{F}}}{t}\)}
    \DisplayProof
    \end{center}
    \caption{Operational semantics of the target language (excerpt)}
    \label{fig:target-opsem}
\end{figure}

\subsection{Translation Rules}
\label{section:translation-rules}

\bgroup
\newcommand*{\PropOne}{TXZ} %
\newcommand*{\PropTwo}{PFZ} %

Here, we define our type-directed translation.
\cref{fig:trans-main} shows some notable translation rules; the translation is almost homomorphic for the other constructs (see \refappendix{sec:complete-translation} for details).
The translation relation is of the form \(\trans{\Gamma}{e\COL\tau}{t}\), meaning that \(e\) is translated to \(t\) under the type environment \( \Gamma \).
Note that if we ignore the  ``\( \Rightarrow t \)'' part, the rules are essentially the same as the typing rules given in \cref{section:type-system}.
We omitted the function type environment and the resulting environments ``\( \rhd \mathcal{L'} \mid \Gamma' \)'' for simplicity.
These environments are not needed to define the translation for the intraprocedual fragment of the source language.
To ease the presentation, we focus on this fragment in the rest of \cref{section:translation}.

\begin{figure}[t]
\begin{center}
    \AxiomC{\(\trans{\Gamma, x:\tref{\alpha}{1}{\emptyset}, y:\tint}{e : \rho}{t}\)}
\RightLabel{\smaller (\textsc{C-MkRef})}
\UnaryInfC{\(\trans{\Gamma, y:\tint}{\letin{x}{\mkref y} e : \rho}{\tletin{x}{\tuple{y, \_}} t}\)}
\DisplayProof
\end{center}
\begin{center}
    \AxiomC{\(\trans{\Gamma, x:\tint, y:\tref{\alpha}{r}{B}}{e : \rho}{t}\)}
    \AxiomC{\(r > 0\)}
\RightLabel{\smaller (\textsc{C-Deref-Pos})}
\BinaryInfC{\(\trans{\Gamma, y:\tref{\alpha}{r}{B}}{\letin{x}{\star y} e : \rho}{\tletin{x}{\fst y}} t\)}
\DisplayProof
\end{center}
\begin{center}
    \AxiomC{\(\trans{\Gamma, x:\tint, y:\tref{\alpha}{0}{B}}{e : \rho}{t}\)}
\RightLabel{\smaller (\textsc{C-Deref-Zero})}
\UnaryInfC{\(\trans{\Gamma, y:\tref{\alpha}{0}{B}}{\letin{x}{\star y} e : \rho}{\tletin{x}{\_}} t\)}
\DisplayProof
\end{center}
\begin{center}
  \AxiomC{\(\trans{\Gamma}{e : \rho}{t}\)}
  \AxiomC{\(\own_{\Gamma}(x) = 1\)}
\RightLabel{\smaller (\textsc{C-Assign})}
\BinaryInfC{\(\trans{\Gamma}{x := y; e : \rho}{\tletin{x}{\tuple{y, \snd x}}} t\)}
\DisplayProof
\end{center}
\begin{center}
  \AxiomC{\(\tau_x + \tau_y = \tref \alpha r B \quad \tref \alpha r B = \rho_x + \rho_y\)}
  \noLine
  \UnaryInfC{\(\trans{\Gamma, x:\rho_x, y:\rho_y}{e : \rho}{t} \quad t' = \convalias{\tau_x \shortrightarrow \rho_x}{\tau_x \shortrightarrow \rho_y}(x, y, t)\)}
\RightLabel{\smaller (\textsc{C-Alias})}
\UnaryInfC{\(\trans{\Gamma, x:\tau_x, y:\tau_y}{\alias{x = y}; e :\rho}{t'}\)}
\DisplayProof
\end{center}
\begin{center}
  \AxiomC{\(\ctxterm{\mathcal{L}}{\alpha} \mid \ctxterm{\Gamma}{\alpha} \vdash {e : \rho} \Rightarrow {t}\ \)}
  \AxiomC{\( \alpha = \min(\mathcal L)\)}
  \AxiomC{\(\Gamma\backslash\ctxterm{\Gamma}{\alpha} = \{x_1, \dots, x_n\}\)}
\RightLabel{\smaller (\textsc{C-EndLft})}
\TrinaryInfC{\(\trans{\Gamma}{\lftend \alpha; e : \rho}{\mathbf{assume}_{1\le i \le n}(\fst x_i = \snd x_i); t}\)}
\DisplayProof
\end{center}
\caption{Translation rules (excerpt)}
\label{fig:trans-main}
\end{figure}

As briefly explained, a reference \( x \) is translated into a pair \( \tuple{x, \proph x} \), where the first and the second elements (are supposed to) represent the current value of \( x \) and the \emph{future (or prophesized) value} of \( x \), respectively.
We treat the future value as if \(x\) has the value stored in \(x\) when \( x \)'s lifetime ends.
To help readers understand this idea, let us look at the rule \drvrule{C-MkRef}.
A new reference created by \( \mkref y \) is translated to a pair \( \tuple{y, \_} \).
The first element is \( y \) because the current value stored in the reference is the value \( y \) is bound to.
We set a non-deterministic value for the second element because, at the time of creation, we do not know what the value stored in this reference will be when its lifetime ends.
In other words, we randomly \emph{guess} the future value.
We \emph{check} if the guess was correct by confirming whether the first and second elements of the pair coincide when we drop this reference.
These checks are conducted by inserting \(\assume{\fst x_i = \snd x_i}\) that is obtained by translating \(\lftend \alpha\) as expressed in the rule \drvrule{C-EndLft}.
Here, \(x_i\) is a variable with the lifetime \( \alpha \).

Our translation obeys the following two principles: %
\begin{itemize}
  \item \emph{Trustable XOR Zero (\PropOne{})}: If a reference \( x \) has non-zero ownership, then the first element of the translation of \( x \) indeed holds the current value.
  \item \emph{Prophecy for Zero (\PropTwo{})}: The translation of a reference with zero ownership holds a prophesized value of a borrowing reference.
\end{itemize}

\PropOne{} resembles \consort{}'s idea that only a type with non-zero ownership can have a non-trivial refinement predicate.
The fact that we do not trust a reference with zero ownership is reflected in the rule \drvrule{C-Deref-Zero}, where dereferencing is translated into an assignment of a non-deterministic value.
\PropOne{} enables us to express an update of a reference with full ownership as an update of the first element of the pair as in \drvrule{C-Assign}.
(Note that, by \cref{thm:ownsum-bound}, aliases of a reference with ownership \( 1 \) must have ownership \( 0 \).)

\PropTwo{} is the core of the \rusthorn{}-style prophecy technique.
A reference of ownership \( 0 \) can regain some ownership in two ways: by retrieving the lending ownership through \texttt{endlft} or by ownership transfer through \Alias{} annotations.
\PropTwo{} handles the former case. (How the latter is handled will be explained below.)
Suppose a reference \( x \) is lending its ownership to \( y \).
\PropTwo{} says that \( \fst x = \snd y \) in the translated program.
When the lifetime of \( y \) ends, in the translated program, the current value of \( y \) must equal the prophesized value because of \( \assume{\fst y = \snd y } \).
Hence, at this moment, we have \( \fst x = \snd y = \fst y \), which means that \( x \) also holds the correct value.
Therefore, \PropOne{} is not violated when the ownership of \( x \) becomes non-zero.

We now explain the rule~\drvrule{C-Alias}, which is the key rule to understand how our translation deals with ownership transfers.
The rule \drvrule{C-Alias} uses an auxiliary function \(\convalias{\tau_x \shortrightarrow \rho_x}{\tau_y \shortrightarrow \rho_y}(x, y, t)\) defined as follows:
\[
  \convalias{\tau_x \shortrightarrow \rho_x}{\tau_y \shortrightarrow \rho_y}(x, y, t) \defrel \begin{cases}
    t & (\own(\tau_x) \!=\! \own(\rho_x) \!=\! 0) \\[.4em]
    \letin{x}{\tuple{\fst y, \snd x}} t & \left(\let\scriptstyle\textstyle\substack{\own(\tau_y) \\ \own(\rho_x), \own(\rho_y)} > 0\right) \\[.4em]
    \hspace{-0.5em}\left(
    \begin{array}{l}
        \letin{y}{\tuple{\fst x, \snd y}} \\
        \letin{x}{\tuple{\snd y, \snd x}} \\
        t
    \end{array}
    \right. & \left(\let\scriptstyle\textstyle\substack{\own(\tau_x) > 0 \\ \text{and}\vspace{.1em} \\ \own(\rho_x) = 0}\right) \\[2.0em]
    \convalias{\tau_y \shortrightarrow \rho_y}{\tau_x \shortrightarrow \rho_x}(y, x, t) & (\text{otherwise})
  \end{cases}
\]
The essential case is when \(\own(\tau_x) > 0 \)  and \( \own(\rho_x) = 0\), that is, when \(y\) takes all the ownership of \(x\).
In this case, \(x\) takes the future value of \(y\) to obey \PropTwo{}.
At the same time, to ensure \PropOne{}, \(y\) takes the old value of \( x \) so that \(y\) has the correct value. (Note that \(\own(\tau_y)\), the previous ownership of \(y\), can be zero.)
Similarly, when \(\own(\tau_y), \own(\rho_x), \own(\rho_y) > 0\), the value \( y \) holds is passed to \( x \) to ensure \PropOne{} (\(\own(\rho_y) > 0\) is necessary to avoid case overlapping.)
\cref{code:borfra-trans-consort} and \cref{code:borfra-trans-rusthorn} show the translated example of \cref{code:borfra-consort-demo} and \cref{code:borfra-rusthorn-demo}.

\begin{figure}
\begin{tabular}{ll}
\begin{minipage}[t]{0.47\hsize}
\begin{lstlisting}[style=mystyle]
let x = (0, _) in
let y = (fst x, _) in
let x = (snd y, snd x) in
let y = (fst y + 1, snd y) in
let x = (fst y, snd x) in
assert(fst x + fst y = 2)
\end{lstlisting}
    \caption{Translated example of \cref{code:borfra-consort-demo}}
    \label{code:borfra-trans-consort}
\end{minipage} &
\begin{minipage}[t]{0.47\hsize}
\begin{lstlisting}[style=mystyle]
let x = (0, _) in
let y = (fst x, _) in
let x = (snd y, snd x) in
let y = (1, snd y) in
assume(fst y = snd y);
assert(fst x = 1)
\end{lstlisting}
    \caption{Translated example of \cref{code:borfra-rusthorn-demo}}
    \label{code:borfra-trans-rusthorn}
\end{minipage}
\end{tabular}
\end{figure}

\begin{remark}[Difference from \rusthorn{}]
The main difference between our translation and \rusthorn{}'s method is that our method converts all references to pairs, whereas \rusthorn{} only converts mutably borrowing references.
This gap emerges because our type system enables an immutable reference to recover its ownership by \texttt{alias}; meanwhile, in Rust, once a mutable reference becomes immutable, it cannot be changed to mutable.
Note that in \cref{code:borfra-trans-rusthorn}, both references \(x\) and \(y\) become pairs unlike in \rusthorn{} (\cref{code:rusthorn-red}).
\end{remark}
\egroup

\subsection{Soundness}
Our translation is sound, that is, if the translated program does not reach \( \mathbf{Fail} \) neither does the original program.

As we are considering the intraprocedual fragment, we omit the stacks from configurations.
For clarity, we write the reduction in the source language and the target language as \(\tuple{H, R, e} \redsrc \tuple{H', R', e'}\) and \(\tuple{\toml{S}, t} \redtgt \tuple{\toml{S}', t'}\), respectively.
\newcommand{\simptranslation}[4]{{#1} \mid {#2} \vdash {#3} \Rightarrow {#4}}

\begin{theorem}[Soundness of the translation] \label{thm:translation-soundness}
    Assume \(\simptranslation{\emptyset}{\emptyset}{e:\tau}{t}\) and \(\tuple{\{\emptyset\}, t} \longarrownot\redmultgt \mathbf{Fail}\), then \(\tuple{\emptyset, \emptyset, e} \longarrownot\redmulsrc \mathbf{Fail}\) holds.
\end{theorem}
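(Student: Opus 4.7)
The plan is to prove soundness by a forward simulation that, on the target side, tracks a distinguished \emph{witness} register $S^\star \in \mathcal{S}$ that has correctly guessed every non-deterministic value introduced so far. By \cref{thm:ownsum-bound}, every reachable source configuration $\langle H, R, e' \rangle$ admits a typing $\mathcal{L} \mid \Gamma \vdash e' : \tau$; I would use this typing as the scaffold for the simulation invariant and maintain in parallel a translation derivation $\mathcal{L} \mid \Gamma \vdash e' : \tau \Rightarrow t'$ so that the target term $t'$ is uniquely determined up to the already-executed prefix.

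The invariant has two components corresponding to the two design principles. \textbf{TXZ:} for each $x : \tref{\alpha}{r}{B} \in \Gamma$ with $r > 0$, we have $\fst(S^\star(x)) = H(R(x))$. \textbf{PFZ:} for each $x : \tref{\alpha}{0}{\beta, s} \in \Gamma$ with $s > 0$, there exists $y : \tref{\beta}{r'}{\emptyset} \in \Gamma$ with $R(y) = R(x)$ and $r' > 0$ such that $\fst(S^\star(x)) = \snd(S^\star(y))$, and symmetrically through chains of intermediate lenders when more than two aliases cohabit one address. Integer variables satisfy $S^\star(x) = R(x)$ directly. By \cref{thm:ownsum-bound} the ownership sum at each address is bounded by one, so at most one alias is constrained by \textbf{TXZ} per address, and the chain of \textbf{PFZ} equations reaches a unique positive-ownership alias; both invariants are therefore simultaneously satisfiable.

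The core technical step is a simulation lemma: if $\langle H, R, e' \rangle \redsrc \langle H', R', e'' \rangle$ and the invariant holds for $\langle H, R, e' \rangle$ and $\langle \mathcal{S}, t' \rangle$, then there exist $\mathcal{S}'$ and $t''$ with $\langle \mathcal{S}, t' \rangle \redmultgt \langle \mathcal{S}', t'' \rangle$ and the invariant re-established for some $S^{\star\star} \in \mathcal{S}'$. I would proceed by case analysis on the source reduction rule, with most cases being routine since the translation is homomorphic on standard constructs. The interesting cases are as follows. For \drvrule{Rs-MkRef} the target step is \drvrule{Rt-LetNondet}, which produces a $\mathbb{Z}$-indexed family of registers; I pick the one whose non-deterministic value equals the integer that $x$ will hold at the eventual $\lftend$ of $\alpha$, so both \textbf{TXZ} and \textbf{PFZ} hold immediately. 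Zero-ownership dereferences are handled identically via \drvrule{C-Deref-Zero}. An assignment under full ownership rewrites the pair to $\tuple{y, \snd x}$, preserving \textbf{TXZ} directly; no other alias has positive ownership for the same address to worry about. The \drvrule{Rs-Alias} case requires verifying each of the four branches of $\convalias{\cdot}{\cdot}$: the branch split was engineered so that, when ownership migrates from $x$ to $y$, the new $\fst$ of the side losing ownership is set to $\snd$ of the side gaining ownership, precisely reinstating \textbf{PFZ}, while the side gaining ownership inherits the correct current value via $\fst$ of the losing side, preserving \textbf{TXZ}. Finally, \drvrule{Rs-Endlft} reduces in the target to a chain of $\mathbf{assume}$s testing $\fst x_i = \snd x_i$; the chain $\fst(S^\star(x_i)) = \snd(S^\star(y)) = \fst(S^\star(y)) = \cdots$ guaranteed by \textbf{TXZ} and \textbf{PFZ} ensures that $S^\star$ survives the filter, and the environment update $\Gamma{\uparrow}_\alpha$ returns borrowed ownership to lenders whose new \textbf{TXZ} obligations are discharged by the same chain.

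The theorem then follows by contraposition: a source run reaching $\mathbf{Fail}$ must fire \drvrule{Rs-Fail} on $\fail$, so the simulation produces a target run reaching $\langle \mathcal{S}', \tfail \rangle$ with a nonempty witness-containing $\mathcal{S}'$, which takes one more step to $\mathbf{Fail}$; \drvrule{Rs-AliasFail} falls outside the claim since $\mathbf{AliasFail} \neq \mathbf{Fail}$. The main obstacle I anticipate is stating \textbf{PFZ} strongly enough to be preserved by all ownership redistributions: borrowings nest as lifetimes nest, and after repeated applications of \drvrule{T-Alias}, \drvrule{A-Share}, \drvrule{A-Borrow}, and $\convalias{\cdot}{\cdot}$, the invariant must still track, for every zero-ownership lender, the exact chain of aliases through which its prophecy will eventually be resolved at some future $\lftend$; the well-formedness ordering $\beta \lftlt \alpha$ on borrowings (whose necessity is illustrated by \cref{code:order-of-lifetime}) is what prevents these chains from becoming cyclic, and hence what makes the invariant well-defined. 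Once the invariant is pinned down so as to be closed under these operations, each case of the simulation becomes a mechanical check; formulating that invariant is the crux.
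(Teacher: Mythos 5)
Your overall architecture --- a forward simulation whose invariant couples a typing of the reachable source configuration (obtained as in \cref{thm:ownsum-bound}) with the TXZ/PFZ principles, a case analysis on the source reduction rules, and a contrapositive conclusion that distinguishes \(\mathbf{Fail}\) from \(\mathbf{AliasFail}\) --- is exactly the paper's. The place where your proposal does not close is the treatment of prophecies via a single distinguished witness register \(S^\star\). Your invariant constrains \(\fst(S^\star(x))\) for positive-ownership references and ties \(\fst\) to \(\snd\) along chains for zero-ownership lenders, but it says nothing about \(\snd(S^\star(x))\) for a reference that currently holds positive ownership. Yet at \(\lftend\ \alpha\) the witness survives the \(\mathbf{assume}(\fst x_i = \snd x_i)\) filter only if \(\snd(S^\star(x_i))\) equals the heap value \emph{at that future moment}. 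You patch this in the \drvrule{Rs-MkRef} case by choosing ``the register whose non-deterministic value equals the integer that \(x\) will hold at the eventual \(\lftend\)'' --- but that choice is a function of the remainder of the execution, not of the current pair of configurations, so the simulation lemma as you state it (a step-local ``invariant before implies invariant after'') cannot be proved: the induction hypothesis gives you no way to identify the right register at creation time. Either you must re-index the entire argument over a fixed finite source trace (an Abadi--Lamport-style backward argument, which you do not set up), or you must strengthen the invariant so that no clairvoyance is needed.

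The paper does the latter: its simulation relation \(\simrel_{\mathcal{L},\Gamma}\) replaces the single witness by the condition (\textsc{S-Rich}), which demands that for \emph{every} tuple \((n_1,\dots,n_d)\in\mathbb{Z}^d\) of candidate future values of the currently live references there is a register \(S_{n_1,\dots,n_d}\in\toml{S}\) realizing it while also satisfying (\textsc{S-Int}) and (\textsc{S-Acc}); the correct register is selected only when an \(\mathbf{assume}\) actually filters the set, and (\textsc{S-Rich}) also guarantees that \(\toml{S}'\neq\emptyset\) after the filters introduced by \drvrule{Rt-IfTrue}/\drvrule{Rt-IfFalse}. This is the missing ingredient that makes the invariant inductive. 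Two smaller points: your remark that ``at most one alias is constrained by TXZ per address'' is false --- several aliases may each hold positive fractional ownership, and TXZ then constrains all of them (consistently) to the same heap value; what the ownership bound actually buys you is that any alias of a \emph{full}-ownership reference has ownership \(0\), which is what the \drvrule{Rs-Assign} case needs. Also, your PFZ should carry, as in the paper's prophecy chains, the side-condition relating the borrower lifetime of each link to the lifetime of the next, since that is what lets a chain be truncated correctly when a minimal lifetime ends.
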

We briefly explain the proof strategy below; the proof is given in \refappendix{sec:soundness-proof}.
We define an (indexed) simulation \( \simrel_{\mathcal{L}, \Gamma} \) between the configurations of the two languages.
The relation \( \tuple{H, R, e} \simrel_{\mathcal{L}, \Gamma} \tuple{\toml{S}, t}\) not only requires \(\trans{\Gamma}{e\COL\tau}{t}\) to hold, but also requires that TXZ and PFZ hold; without them the relation is too weak to be an invariant.
Whenever \( \tuple{H, R, e} \redsrc \tuple{H', R', e'}\) we show that there exists \( \tuple{\toml{S'}, t'}\) such that \( \tuple{\toml{S}, t}\redmultgt \tuple{\toml{S'}, t'} \) and \( \tuple{H', R', e'} \simrel_{\mathcal{L'}, \Gamma'} \tuple{\toml{S'}, t'}\) \emph{for some \( \mathcal{L'}, \Gamma' \)}.
Such environments can be chosen by an argument identical to the proof of type preservation (\cref{thm:ownsum-bound}).

\section{Preliminary Experiments} \label{section:experiments}
We conducted preliminary experiments to evaluate the effectiveness of our translation.
In this experiment, we first typed the benchmark programs and translated them into OCaml programs by hand\footnote{This translation process can be automated if we require programmers to provide type annotations or if the type inference mentioned in Section~\ref{section:conclusion} is worked out.} (with slight optimizations).
After that, we measured the time of the safety verification of the translated programs using \toolname{MoCHi}~\cite{ref_mochi}, a fully-automated verifier for
OCaml programs.
The experiments were conducted on a machine with AMD Ryzen 7 5700G 3.80 GHz CPU and 32GB RAM.
We used \toolname{MoCHi} of build \texttt{a3c7bb9d} as frontend solver, which used Z3~\cite{ref_z3} (version 4.11.2) and HorSat2~\cite{ref_horsat2} (version 0.95) as backend solvers.

We ran experiments on small but representative programs with mutable references and summerized the results in \cref{tbl:experiment}.
Columns `program' and `safety' are the name of programs and whether they involve \(\fail\) in assertions, respectively.
When two programs have the same name but different safeties, they represent slightly modified variations of a single program with different safety.
Columns `ConSORT?' and `RustHorn?' indicate that the program can be typed by ConSORT or RustHorn, respectively.
Columns \(B_o\) and \(B_t\) are the byte counts of the original and translated programs, respectively.
The column `time(sec)' shows how many seconds \toolname{MoCHi} takes to verify translated programs.
These results show that our method works in practical time for representative programs, though it increases the program size by 2 to 4 by translation.

Below, we describe each benchmark briefly.
`consort-demo,' `rusthorn-demo,' and `minmax' are the sample programs in \cref{code:consort-demo,code:rusthorn,code:demo-combination-intro}.
`simple-loop' and `shuffle-in-call', both included in \consort{}'s benchmark, are the programs utilizing fractional ownership.
`just-rec' and `inc-max' are from \rusthorn{}, which exploits borrowable ownership.
`linger-dec' is the program with loop and dynamic allocating of references and adopted by both \consort{} and \rusthorn{} benchmark.
`hhk2008' is a \toolname{SeaHorn}~\cite{ref_seahorn} test to check whether the verifier can find an invariant on the loop.
See \refappendix{sec:benchmark-programs} for benchmark programs.

\begin{table}[hbtp]
  \centering
  \caption{Experimental results}
    \label{tbl:experiment}
  \begin{tabular}{|cccc|ccc|c|}
    \hline
    program & safety & \hspace*{1pt} ConSORT? \hspace*{1pt} & \hspace*{1pt} RustHorn? \hspace*{1pt} & \(\quad B_o \quad \) & \(\ B_t\ \) & \hspace*{1pt} \(B_t / B_o\) \hspace*{1pt} & \hspace*{2pt} time(sec) \hspace*{2pt} \\
    \hline\hline

    consort-demo & safe & \checkmark & \(\times\) & \(76\) & \(285\) & \(3.75\) & \(<0.1\) \\
    
    rusthorn-demo & safe & \checkmark & \checkmark & \(68\) & \(276\) & \(4.06\) & \(0.8\) \\
    borrow-merge & safe & \(\times\) & \(\times\) & \(216\) & \(756\) & \(3.5\) & \(1.0\) \\
    \hline
    \multirow{2}{*}{simple-loop} & safe & \multirow{2}{*}{\checkmark} & \multirow{2}{*}{\checkmark} & \(194\) & \(347\) & \(1.79\) & \(<0.1\) \\
     & unsafe &  &  & \(150\) & \(320\) & \(2.13\) & \(0.2\) \\
    \hline
    \multirow{2}{*}{just-rec} & safe & \multirow{2}{*}{\checkmark} & \multirow{2}{*}{\checkmark} & \(156\) & \(362\) & \(2.32\) & \(0.6\) \\
     & unsafe &  &  & \(157\) & \(372\) & \(2.37\) & \(0.2\) \\
    \hline
    \multirow{2}{*}{shuffle-in-call} & safe & \multirow{2}{*}{\checkmark} & \multirow{2}{*}{\(\times\)} & \(121\) & \(395\) & \(3.26\) & \(0.8\) \\
     & unsafe &  &  & \(121\) & \(395\) & \(3.26\) & \(0.2\) \\
    \hline
    \multirow{2}{*}{inc-max} & safe & \multirow{2}{*}{\(\times\)} & \multirow{2}{*}{\checkmark} & \(143\) & \(550\) & \(3.85\) & \(1.2\) \\
     & unsafe &  &  & \(143\) & \(549\) & \(3.84\) & \(0.4\) \\
    \hline
    \multirow{2}{*}{minmax} & safe & \multirow{2}{*}{\(\times\)} & \multirow{2}{*}{\(\times\)} & \(251\) & \(920\) & \(3.67\) & \(0.2\) \\
     & unsafe &  &  & \(251\) & \(920\) & \(3.67\) & \(0.6\) \\
    \hline
    \multirow{2}{*}{linger-dec} & safe & \multirow{2}{*}{\checkmark} & \multirow{2}{*}{\checkmark} & \(239\) & \(836\) & \(3.5\) & \(1.1\) \\
     & unsafe &  &  & \(243\) & \(841\) & \(3.46\) & \(0.3\) \\
    \hline
    hhk2008 & safe & \checkmark & \checkmark & \(321\) & \(630\) & \(1.96\) & \(2.7\) \\
    \hline
    \end{tabular}
\end{table}

\section{Related Work} \label{section:related}
As introduced in \cref{section:intro}, \consort{}~\cite{ref_consort} and \rusthorn{}~\cite{ref_rusthorn} are the direct ancestor of this work.
We have combined the two approaches as borrowable fractional ownership types.

\toolname{Creusot}~\cite{ref_creusot} and \toolname{Aeneas}~\cite{ref_aeneas} are verification toolchains for Rust programs based on a translation to functional programs. %
\toolname{Creusot} removes references and translates Rust programs to \toolname{Why3}~\cite{ref_why3} programs using \rusthorn{}'s prophecy technique.
Our translation is closer to that of \toolname{Creusot} than that of \rusthorn{} in that the target is a functional program rather than CHCs.
But very unlike \toolname{Creusot}, our translation accommodates the \consort{}-style fractional ownership and alias annotations.
\toolname{Aeneas} uses a different encoding to translate Rust programs to pure functional programs for interactive verification (in Coq, F*, etc.).
However, they can't support advanced patterns like nested borrowing because their model of a mutable reference is not a first-class value, unlike RustHorn's prophecy-based approach.

There are other verification methods and tools that utilize some notion of ownership for imperative program verification, such as \toolname{Steel}~\cite{ref_steel,ref_steelcore}, \toolname{Viper}~\cite{ref_viper}, and \toolname{RefinedC}~\cite{ref_refinedc}.
Still, their approaches and design goals are quite different from ours in that they do \emph{semi-}automated program verification in F* or low-level separation logic, requiring more user intervention, such as annotations of loop invariants and low-level proof hints about ownership.

\toolname{SeaHorn}~\cite{ref_seahorn} and \toolname{JayHorn}~\cite{ref_jayhorn} both introduce a fully automated verification framework for programs with mutable references.
\toolname{SeaHorn} is for LLVM-based languages (C, C++, Rust, etc.) and \toolname{JayHorn} is for Java.
Both do not use ownership types but model the heap memory directly as a finite array. %
As a result, they are ineffective or imprecise for programs with dynamic memory allocation, as shown in the experiments of \consort{} and \rusthorn{}.

The work by Suenaga and Kobayashi~\cite{ref_suenaga_09} and \toolname{Cyclone}~\cite{ref_cyclone} used ownership types to ensure memory safety for a language with explicit memory deallocation.
We expect that our borrowable fractional ownership types can also be used to improve their methods.

\section{Conclusion and Future Work} \label{section:conclusion}
We have presented a type system based on the new notion of borrowable fractional ownership, and a type-directed translation to reduce the verification of imperative programs to that of programs without mutable references. 
Our approach combines that of \consort{}~\cite{ref_consort} and \rusthorn~\cite{ref_rusthorn},
enabling automated verification of a larger class of programs.

Future work includes automated type inference and
an extension of the type system to allow nested references.
Type inference would not be so difficult if we assume some lifetime annotations as in Rust.
Concerning nested references, as a naive extension of reference types from \(\tref{\alpha}{r}{B}\) to \(\tau\rawtref{\alpha, r}{B}\) seems too restrictive, we plan to introduce a reference type of the form \(\xi\backslash\tau/\rho\rawtref{\alpha,r}{\beta,s}\), where \(\xi\) is the borrowing type, \(\tau\) is the current content type, and \(\rho\) is the lending type to others.
The point is that, instead of just keeping the amount of ownership being borrowed, we should keep in what type a reference is being borrowed.

\subsubsection{Acknowledgements}
\begin{sloppypar}
We would like to thank anonymous referees for their useful comments.
This work was supported by JSPS KAKENHI Grant Number JP20H05703 and JP22KJ0561.
\end{sloppypar}

\bibliographystyle{splncs04}
\bibliography{mybib}

\clearpage
\appendix
\newcommand{\authcount}[1]{\relax}
\startcontents[sections]
\noindent{\Large\textbf{Appendicies}}
\vspace{2em}
\printcontents[sections]{l}{1}{\setcounter{tocdepth}{2}}
\section{Proof of Type Preservation (\cref{thm:ownsum-bound})}
\label{sec:type-proof}
This section first defines the typing relation of configurations, prepares some lemmas, and then proves the type preservation.

\newcommand*{\brr}{\mathbf{Brr}}
\newcommand*{\bby}{\mathbf{BBy}}
\newcommand*{\bfrm}{\mathbf{BFrm}}
\newcommand*{\bto}{\curvearrowright}
\newcommand*{\lift}[1]{{\uparrow}_{#1}}
\newcommand*{\tyenv}{\Gamma}
\newcommand*{\genv}{\Delta} %
\newcommand*{\totenv}{\Gamma_\mathrm{tot}}
\newcommand*{\lenv}{\mathcal{L}}
\newcommand*{\fenv}{\Theta}
\newcommand*{\cons}[4]{#1 \mid #2 \vdash \tuple{#3, #4}}
\newcommand*{\adr}{a}
\newcommand*{\lvar}{\alpha}
\newcommand*{\lvarTwo}{\beta}
\newcommand*{\lvarThree}{\gamma}

\newcommand*{\ty}{\tau}
\newcommand*{\tyTwo}{\rho}
\newcommand*{\upd}[2]{\{#1 \mapsto #2\}}
\newcommand*{\hole}{[\cdot]}
\newcommand*{\ejdg}[8]{#1 \mid \hole : (#2 \rhd #3 \mid #4) \vdash #5 : #6 \rhd #7 \mid #8}

\newcommand*{\sub}[3]{[#1/ #2]#3}
\newcommand*{\seq}[1]{\tilde #1}
\subsection{Configuration Typing}
Here, we explain how a configuration is typed.
The typing rule for configurations is as follows:
\begin{mathpar}
  \inferrule{%
    \fenv \vdash D \\
    \ejdg \fenv {\ty_i} {\lenv_i} {\tyenv_i} {F_i} {\ty_{i-1}} {\lenv_{i-1}} {\tyenv_{i-1}} \quad (1 \le i \le n)\\\\
    \totenv = \tyenv + \genv \\
    \cons \lenv \totenv H R \\
    \tyjudgement{\Theta} \lenv \tyenv {e : \ty_n} {\lenv_n}{\tyenv_n}
  }
  {\vdash^D \configsrc H R {F_n : \cdots : F_1} e}
{\smaller (\textsc{T-Conf})}
\end{mathpar}
Premises \( \fenv \vdash D \), \( \ejdg \fenv {\ty_i} {\lenv_i} {\tyenv_i} {F_i} {\ty_{i-1}} {\lenv_{i-1}} {\tyenv_{i-1}} \) and \( \tyjudgement{\Theta} \lenv \tyenv {e : \ty_n} {\lenv_n}{\tyenv_n} \) just say that the function definitions, call stack, and currently reducing expression are all well-typed.
The notable part of this rule is \( \cons \lenv \totenv H R \), which intuitively expresses that the current heap and registers are consistent with the information implied by the types in \( \totenv \).
For example, if \( \cons \lenv \totenv H R \) and \( x : \tref \lvar {0.5} {\lvarTwo, 0.5} \in \totenv \), then \( R(x) \) must be an address that is in \( \dom(H)\) and there must be an aliasing reference pointing to \( R(x) \) with a lifetime \( \lvarTwo \).
The addition \( \totenv = \tyenv + \genv \) represents the fact that we also keep track of the information of variables that may have been discarded because of the rule \drvrule{T-Var}.

The rest of this subsection is devoted to the explanation of the judgments that have not been introduced so far, namely \( \cons \lenv \totenv H R\) and \( \ejdg \fenv {\ty_i} {\lenv_i} {\tyenv_i} {F_i} {\ty_{i-1}} {\lenv_{i-1}} {\tyenv_{i-1}} \).
\begin{figure}[t]
  \begin{align*}
    \blft(\tref \lvar r B) &\defrel
    \begin{cases}
        \beta & (B = (\beta, s)) \\
        \text{undefined} & (B = \emptyset)
    \end{cases}
    \qquad \blft(\tint) \defrel \text{undefined} \\
    \bown(\tref \lvar r B) &\defrel
    \begin{cases}
        s & (B = (\beta, s)) \\
        0 & (B = \emptyset)
    \end{cases}
  \qquad \bown(\tint) \defrel 0 \\
  \\
  \brr^R_\tyenv(\lvar \bto \lvarTwo; \adr) &\defrel \sum \{\bown(\ty) \mid x : \ty \in \tyenv, \lft(\ty) = \lvar, \blft(\ty) = \lvarTwo, R(x) = \adr \} \\
  \bby^R_\tyenv(\lvar, \adr) &\defrel \sum \{\bown(\ty) \mid x : \ty \in \tyenv, \blft(\ty) = \lvar, R(x) = \adr \} \\
  \bfrm^R_\tyenv(\lvar, \adr) &\defrel \sum \{\bown(\ty) \mid x : \ty \in \tyenv, \lft(\ty) = \lvar, R(x) = \adr \}\\
  \\
  \own^R_\tyenv(\lvar, \adr) &\defrel \sum \{ \own(\ty) \mid x : \ty \in \tyenv, \lft(\ty) = \lvar, R(x) = \adr \} \\
  \own^R_\tyenv(\adr) &\defrel \sum \{ \own(\ty) \mid x : \ty \in \tyenv, R(x) = \adr \}
\end{align*}
\caption{Operations for calculating ownership. (Here the set-builder notation should be interpreted as a multiset.)}
\label{fig:borrow-notations}
\end{figure}

\subsubsection{Heap and register typing:}
We now formally define the relation \( \cons \lenv \totenv H R \).
Notations for ownership used in the definition are defined in \cref{fig:borrow-notations}.
\begin{definition}
  We say that \emph{a heap \( H \) and a register \( R \) are consistent with a lifetime environment \( \lenv \) and a type environment \( \tyenv \)}, written \( \cons \lenv \tyenv H R \), if \( \lenv \vdash_\mathrm{WF} \tyenv \) and the following conditions hold:
  \begin{description}
    \item[(fraction consistency)] for every \( \adr \in \dom(H) \), \( \own^R_\tyenv(a) \le 1 \).
    \item[(borrow consistency)] for every \( \adr \in \dom(H) \) and \( \lvar \in \lenv \), \(  \bby^R_\tyenv(\lvar; \adr) \le \own^R_\tyenv(\lvar; \adr) + \bfrm^R_\tyenv(\lvar; \adr) \).
    \item[(type consistency)] for each \( x : \ty \in \tyenv \), if \( \ty = \tint \) then \( R(x) \in \mathbb Z \), and otherwise, \( R(x) \in \addr \).
    \item[(memory consistency)] for every \( x : \ty \in \tyenv \) such that \( \ty \) is a reference type, \( R(x) \in \dom(H) \) and \( H(R(x)) \in \mathbb Z\).
  \end{description}
\end{definition}
Fraction consistency says that the sum of the ownership of references pointing to an address \( \adr \) cannot exceed 1.
This is the property we want to prove.
On the other hand, borrow consistency is just a technical invariant that is exploited in the proof.
Roughly speaking, it expresses the fact that an ownership that is borrowed cannot be discarded.
Borrow consistency will be used to ensure that the ownership that was borrowed will be given back properly.

\subsubsection{Return context typing:}
The typing rule for return context is given as
\begin{mathpar}
\inferrule{ \tyjudgement \fenv \lenv {\tyenv, x : \ty} {e : \tyTwo} {\lenv'} {\tyenv'} \and x \notin \dom(\tyenv')}
{\ejdg \fenv \ty \lenv \tyenv {\letin x \hole e} {\tyTwo} {\lenv'} {\tyenv'}}
{\smaller (\textsc{TC-Let)}}
\end{mathpar}
As usual, the judgment \( \ejdg \fenv \ty \lenv \tyenv F {\tyTwo} {\lenv'} {\tyenv'} \) describes that if the hole has a type \( \ty \) under a type environment \( \tyenv \), then the context \( F \) has a type \( \tyTwo \) under \( \tyenv \).
\subsection{Auxiliary Lemmas}
We now prepare some lemmas that are used to prove type preservation.
Some of the lemmas are not minor propositions that are used to help prove the theorem, but rather (essentially) are some cases of the case analysis made in the proof of the theorem.
We have separated these cases as lemmas for clarity.

\subsubsection{Basic properties of the type system}
\begin{lemma}[Weakening]
  \label{lem:weakening}
  Let \( \tyjudgement \fenv \lenv \tyenv {e : \tyTwo} {\lenv'} {\tyenv'} \).
  If \( x \notin \dom(\tyenv) \) and \( \ty \) is of the form \( \tint \) or \( \tref {\lvar} 0 {} \) for \( \lvar \in \lenv\), then we have \( \tyjudgement \fenv \lenv {\tyenv, x : \ty} {e : \tyTwo} {\lenv'} {\tyenv'} \).
\end{lemma}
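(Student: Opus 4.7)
The plan is to prove this lemma by induction on the derivation of the given judgment $\tyjudgement \fenv \lenv \tyenv {e : \tyTwo} {\lenv'} {\tyenv'}$. By alpha-renaming I may assume that $x$ is globally fresh in the derivation; in particular, $x \notin \dom(\tyenv')$ and $x$ does not clash with any bound variable of $e$ or with any function parameter.

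In the base case \drvrule{T-Var} with $e = y$, the initial environment decomposes as $\Gamma + \Gamma_0 + y:\tyTwo$ with $\tyenv' = \Gamma_0$. I reapply \drvrule{T-Var} using $(\Gamma, x:\ty) + \Gamma_0 + y:\tyTwo$, absorbing the new binding into the discarded portion $\Gamma$. Well-formedness is preserved because $\ty$ is either $\tint$ or $\tref{\lvar}{0}{}$ with $\lvar \in \lenv$, and the constraint $r + s \le 1$ is trivially satisfied when both are zero.

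For the remaining rules of \cref{fig:typing-standard,fig:typing-reference,fig:typing-ghost-inst} other than \drvrule{T-EndLft}, I would simply pass $x:\ty$ through: add it to the initial environment of each premise, apply the induction hypothesis, and reconstruct the rule. The type-addition premises are undisturbed because they split only the variables named in the rule, not the fresh $x$; the side conditions of the form $x' \notin \dom(\tyenv')$ continue to hold by freshness. For \drvrule{T-NewLft}, no case split is needed: the lifetime $\lvar$ witnessing well-formedness of $\ty$ remains valid in the enlarged environment $\lenv \cup \{\alpha\}$.

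The most delicate case is \drvrule{T-EndLft}, which applies $\uparrow_\alpha$ to both $\lenv$ and $\tyenv$, and this is where I expect the main obstacle. I split on the shape of $\ty$. If $\ty = \tint$, or $\ty = \tref{\lvarTwo}{0}{}$ with $\lvarTwo \neq \alpha$, then $(\tyenv, x:\ty)\uparrow_\alpha = (\tyenv\uparrow_\alpha), x:\ty$ and $\lvarTwo \in \lenv\uparrow_\alpha$, so the IH supplies the necessary premise and \drvrule{T-EndLft} reassembles. If $\ty = \tref{\alpha}{0}{}$, then by definition of $\uparrow_\alpha$ on environments the binding $x:\ty$ is silently dropped, so $(\tyenv, x:\ty)\uparrow_\alpha = \tyenv\uparrow_\alpha$ and the original premise already discharges what \drvrule{T-EndLft} needs for the weakened conclusion. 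The reason the hypothesis is restricted to $\tint$ and zero-ownership, empty-borrowing references is precisely to make this last case trivial: such a binding carries no ownership, no refinement, and no outgoing loan, so $\uparrow_\alpha$ can discard it without disturbing the rest of the derivation.
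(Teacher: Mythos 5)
Your proof is correct and matches the paper's approach: the paper's own proof is exactly ``by straightforward induction on the structure of the type derivation,'' and your case analysis (absorbing \(x:\ty\) into the discarded portion in \drvrule{T-Var}, passing it through the other rules, and splitting on whether \(\lft(\ty)=\alpha\) in \drvrule{T-EndLft}) fills in the details that the paper leaves implicit.
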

\begin{proof}
  By straightforward induction on the structure of the type derivation.
  \qed
\end{proof}
\begin{lemma}[Weakening2]
  \label{lem:weakening2}
  Let \( \tyjudgement \fenv \lenv \tyenv {e : \tyTwo} {\lenv'} {\tyenv'} \).
  If \( \lenv \vdash_\mathrm{LF} \tyenv + \tyenv''  \) and \( \lenv' \vdash_\mathrm{LF} \tyenv' + \tyenv''  \), then we have \( \tyjudgement \fenv \lenv {\tyenv + \tyenv''} {e : \tyTwo} {\lenv'} {\tyenv' + \tyenv''} \).
\end{lemma}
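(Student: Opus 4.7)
I will prove Weakening2 by induction on the derivation of $\tyjudgement{\fenv}{\lenv}{\tyenv}{e:\tyTwo}{\lenv'}{\tyenv'}$. Without loss of generality, I may assume $\dom(\tyenv'')$ is disjoint from $\dom(\tyenv)$ and $\dom(\tyenv')$; any shared variable is handled by first splitting its entry in $\tyenv''$ via type addition and folding an appropriate portion into $\tyenv$ (or $\tyenv'$) before running the argument.

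For most syntactic rules (T-Let, T-Arith, T-MkRef, T-Deref, T-Assign, T-If, T-Call, T-Alias, T-Fail), the induction step is mechanical: I augment every type environment appearing in the rule instance with $\tyenv''$, apply the induction hypothesis to each premise, and re-apply the same rule. Disjointness ensures that side conditions such as $x \notin \dom(\Gamma')$, together with the type splittings used in T-Let and T-Alias, remain intact after the augmentation. For T-Var, the original instance $\tyjudgement{\fenv}{\lenv}{\tyenv_0 + \Gamma' + x:\ty}{x:\ty}{\lenv}{\Gamma'}$ becomes the equally valid $\tyjudgement{\fenv}{\lenv}{\tyenv_0 + (\Gamma' + \tyenv'') + x:\ty}{x:\ty}{\lenv}{\Gamma' + \tyenv''}$ by absorbing $\tyenv''$ into the surviving part of the environment.

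The delicate cases are the lifetime rules. For T-NewLft, I invoke the alpha-renaming convention to ensure the freshly bound lifetime $\alpha$ does not occur in $\tyenv''$. For T-EndLft, the crux is the identity
\[
  (\tyenv + \tyenv''){\uparrow}_\alpha \;=\; \tyenv{\uparrow}_\alpha + \tyenv'',
\]
which holds because the well-formedness hypothesis $\lenv' \vdash_{\mathrm{WF}} \tyenv' + \tyenv''$ forces every lifetime used in $\tyenv''$ to lie in $\lenv'$; since $\alpha = \min(\lenv)$ is absent from $\lenv' = \lenv{\uparrow}_\alpha$, the operator ${\uparrow}_\alpha$ acts as the identity on $\tyenv''$. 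Combined with the derived well-formedness $\lenv{\uparrow}_\alpha \vdash_{\mathrm{WF}} \tyenv{\uparrow}_\alpha + \tyenv''$ (obtained from the original hypothesis by dropping bindings concerning $\alpha$), the induction hypothesis applies to the premise and T-EndLft re-concludes. The main obstacle throughout is threading the two well-formedness side conditions across every recursive appeal to the induction hypothesis; this reduces to checking that intermediate lifetime environments differ from $\lenv$ and $\lenv'$ only by lifetimes introduced inside $e$, which, by alpha-renaming, can be assumed disjoint from those in $\tyenv''$.
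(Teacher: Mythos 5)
Your overall strategy---induction on the typing derivation, with the only delicate work in the lifetime rules---is exactly what the paper intends (its proof is literally ``by straightforward induction on the structure of the type derivation''), and your treatment of \drvrule{T-Var}, \drvrule{T-NewLft}, and \drvrule{T-EndLft} is correct; in particular the identity \( (\tyenv + \tyenv''){\uparrow}_\alpha = \tyenv{\uparrow}_\alpha + \tyenv'' \), justified by \( \alpha = \min(\lenv) \notin \lenv' \) and \( \lenv' \vdash_{\mathrm{WF}} \tyenv' + \tyenv'' \), is the right observation.

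There is, however, a genuine gap in your opening move. The reduction to disjoint domains is not ``without loss of generality'': if \( x \in \dom(\tyenv) \cap \dom(\tyenv'') \), then ``folding'' \( \tyenv''(x) \) into \( \tyenv \) produces an environment under which you do \emph{not} yet have a typing derivation for \( e \) --- obtaining one is precisely the statement being proved, so the reduction is circular. Worse, the overlapping case is exactly where the naive induction breaks: \drvrule{T-Call} requires the argument variables to carry types \emph{exactly} matching the instantiated signature (\( y_i : \rho_i \)), so adding extra ownership to some \( y_i \) via \( \tyenv'' \) destroys the match and the rule cannot be re-applied (similar friction arises in \drvrule{T-Let} and \drvrule{T-Alias}, where the consumed type is pinned by the splitting equations). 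The lemma is only ever invoked in the paper (in the proof of Lemma~\ref{lem:call-creates-return-context}) with \( \dom(\tyenv'') \) disjoint from the variables manipulated by the derivation, and your proof is sound for that case; but you should either restrict the statement you prove to disjoint \( \tyenv'' \), or give a genuine argument for shared variables rather than assuming the problem away.
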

\begin{proof}
  By straightforward induction on the structure of the type derivation.
  \qed
\end{proof}
\
\begin{lemma}[Weakening of lifetime variables]
  \label{lem:weakening-lvar}
  Let \( \tyjudgement \fenv \lenv \tyenv {e : \tyTwo} \lenv {\tyenv'} \).
  If \( \lenv \subseteq \lenv' \),\footnote{Here the inclusion means the existence of an injection from \( \lenv \) to \( \lenv' \) that preserves and reflects the strict order.} then we have \( \tyjudgement \fenv {\lenv'} {\tyenv} {e : \tyTwo} {\lenv'} {\tyenv'} \).
\end{lemma}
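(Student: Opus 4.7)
The plan is to proceed by induction on the structure of the typing derivation of $\tyjudgement \fenv \lenv \tyenv {e : \tyTwo} \lenv {\tyenv'}$, showing that each rule remains applicable when the lifetime environment is replaced by any subposet extension $\lenv'$. The base cases \drvrule{T-Var} and \drvrule{T-Fail}, and the purely syntactic cases \drvrule{T-Let}, \drvrule{T-Arith}, \drvrule{T-If}, \drvrule{T-MkRef}, \drvrule{T-Deref}, \drvrule{T-Assign}, and \drvrule{T-Alias}, go through by a direct appeal to the induction hypothesis: every premise of the form $\lenv \vdash_{\mathrm{WF}} \ty$, and in particular the ordering constraint $\lvarTwo \lftlt \lvar$ for any $\tref \lvar r {\lvarTwo, s}$ occurring in $\tyenv$ or $\tyenv'$, carries over to $\lenv'$ because the subposet embedding preserves the strict order.

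For \drvrule{T-Call}, the only nontrivial side condition is $[\vv\beta/\vv\alpha]\mathcal{M} \subseteq \lenv$, which transports immediately to $[\vv\beta/\vv\alpha]\mathcal{M} \subseteq \lenv'$ by transitivity of the subposet relation; the rest of the rule is independent of the lifetime environment.

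The delicate cases are \drvrule{T-NewLft} and \drvrule{T-EndLft}, both of which depend on a minimum condition. For \drvrule{T-NewLft}, the bound lifetime $\lvar$ is required to be the minimum of $\lenv \cup \{\lvar\}$. I would alpha-rename $\lvar$ to a fresh $\lvar^\star$ disjoint from $\lenv'$ and declare $\lvar^\star$ below every element of $\lenv'$, so that $\lvar^\star = \min(\lenv' \cup \{\lvar^\star\})$, and then invoke the induction hypothesis on the renamed subderivation. For \drvrule{T-EndLft}, the premise $\lvar = \min \lenv$ could in principle fail after extension, but the statement restricts the outer initial and final lifetime environments to coincide, so every $\lftend \lvar$ appearing in $e$ is matched by a preceding $\newlftin{\lvar}$ that was handled by the \drvrule{T-NewLft} case above; the freshness discipline adopted there ensures that the lifetime is still minimum at the point it is ended.

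The main obstacle is that the lemma as stated is, strictly speaking, too restrictive for a direct induction: the premise of \drvrule{T-NewLft} is a judgment whose initial environment $\lenv \cup \{\lvar\}$ differs from its final environment, so the induction hypothesis of the stated lemma does not apply to it. I would therefore first prove a strengthened version allowing independent but compatible extensions $\lenv \subseteq \lenv_1$ and $\lenv' \subseteq \lenv'_1$ (compatible in the sense of adding the same fresh variables with consistent ordering), and derive the stated lemma as the specialization where $\lenv = \lenv'$ and $\lenv_1 = \lenv'_1$.
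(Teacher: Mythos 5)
Your proof follows the same route as the paper, whose entire argument is the single sentence ``by straightforward induction on the structure of the type derivation.'' Your additional observation---that the statement must first be strengthened to cover judgments whose initial and final lifetime environments differ, since the premise of \drvrule{T-NewLft} is such a judgment and the stated induction hypothesis would not apply to it---is a genuine subtlety that the paper's one-line proof elides, and your handling of the minimality side conditions in \drvrule{T-NewLft} and \drvrule{T-EndLft} (placing the freshly introduced lifetime below every element of the extended environment so that it remains minimal when it is later ended) is correct.
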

\begin{proof}
  By straightforward induction on the structure of the type derivation.
  \qed
\end{proof}

\begin{lemma}[Substitution]
  \label{lem:substitution}
  If \( \tyjudgement \fenv \lenv \tyenv {e : \ty} {\lenv'} {\tyenv'} \) and \( x' \notin \dom(\tyenv) \), then \( \tyjudgement \fenv \lenv {\sub {x'} x \tyenv} {\sub {x'} x e : \ty} {\lenv'} {\sub {x'} x \tyenv'} \).
\end{lemma}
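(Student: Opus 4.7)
The plan is to proceed by straightforward induction on the derivation of $\tyjudgement \fenv \lenv \tyenv {e : \ty} {\lenv'} {\tyenv'}$, doing case analysis on the last typing rule applied. Before starting the induction, I would observe two preliminary facts that simplify most cases: first, if $x \notin \dom(\tyenv)$ then $\sub{x'}{x}\tyenv = \tyenv$ and (by the usual free-variable convention implicit in the type judgments) $\sub{x'}{x}e = e$, so the conclusion is immediate; second, since we alpha-rename bound program variables to be fresh, any binder introduced by a rule (the $x$ of \drvrule{T-Let}, \drvrule{T-Arith}, \drvrule{T-MkRef}, \drvrule{T-Deref}, \drvrule{T-Call}, the conditional variable of \drvrule{T-If}, etc.) can be assumed to differ from both $x$ and $x'$. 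Under this convention, substitution commutes with the rule's syntactic form, and the premise $x' \notin \dom(\tyenv)$ together with freshness of the bound variable ensures the IH premise $x' \notin \dom(\tyenv'')$ continues to hold on subderivations.

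The routine cases then just apply the IH to every premise and reassemble the rule. For \drvrule{T-Fail}, there is nothing to do. For the ghost rules \drvrule{T-NewLft} and \drvrule{T-EndLft}, program-variable substitution commutes with the lifetime operator $\lift{\alpha}$ and with $\min$, so the IH applied to the premise suffices. For \drvrule{T-Alias}, the substitution distributes over the alias statement $\alias{x_1 = x_2}$ in the obvious way, and the additive splits $\tau_x + \tau_y = \tref{\alpha}{r}{B} = \rho_x + \rho_y$ are preserved because they are independent of program variable names. For \drvrule{T-Call}, the substitution has to be pushed through the argument list $y_1, \dots, y_n$ and the new binding $x{:}\rho$, but the lifetime-substitution condition $[\vv{\beta}/\vv{\alpha}]\mathcal M \subseteq \mathcal L$ is untouched by a program-variable substitution.

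The one slightly delicate case is \drvrule{T-Var}, which has the environment decomposition $\tyenv = \tyenv_0 + \tyenv_1 + x_0{:}\tau$ where $x_0$ is the variable being typed. If $x_0 \neq x$, substitution leaves everything as-is modulo renaming non-bound occurrences of $x$ in $\tyenv_0, \tyenv_1$, and the rule reapplies directly. If $x_0 = x$, then $\sub{x'}{x}e = x'$ and $\sub{x'}{x}\tyenv = \sub{x'}{x}\tyenv_0 + \sub{x'}{x}\tyenv_1 + x'{:}\tau$; the side condition $x' \notin \dom(\tyenv)$ guarantees $x'$ does not already appear in $\tyenv_0$ or $\tyenv_1$, so the additive decomposition remains well-formed and \drvrule{T-Var} applies with $x_0 := x'$.

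The main obstacle I anticipate is bookkeeping around \drvrule{T-Var} and \drvrule{T-Call}, specifically checking that each application of type addition $\tyenv_0 + \tyenv_1$ remains meaningful after substitution; this reduces to observing that $+$ is defined pointwise on variables and does not depend on the particular names used, provided no new collisions are introduced, which is exactly what the hypothesis $x' \notin \dom(\tyenv)$ (propagated inductively using freshness of binders) gives us. No new ideas beyond these bookkeeping arguments are needed.
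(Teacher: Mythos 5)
Your proof follows exactly the paper's approach: the paper disposes of this lemma with ``By straightforward induction on the structure of the type derivation,'' and your proposal is a correct, more detailed working-out of that same induction, with the \textsc{T-Var} case and the non-collision bookkeeping handled appropriately.
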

\begin{proof}
  By straightforward induction on the structure of the type derivation.
  \qed
\end{proof}
\begin{lemma}[Substitution of lifetime variables]
  \label{lem:substitution-lvar}
  If \( \tyjudgement \fenv \lenv \tyenv {e : \ty} {\lenv'} {\tyenv'} \) and \( \lvarTwo \notin \lenv \), then \( \tyjudgement \fenv {\sub \lvarTwo \lvar \lenv} {\sub \lvarTwo \lvar \tyenv} {e : \sub \lvarTwo \lvar \ty} {\sub \lvarTwo \lvar {\lenv'}} {\sub \lvarTwo \lvar \tyenv'} \).
\end{lemma}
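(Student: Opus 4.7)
The plan is to proceed by straightforward structural induction on the derivation of the typing judgment \( \tyjudgement \fenv \lenv \tyenv {e : \ty} {\lenv'} {\tyenv'} \), applying the substitution \( [\lvarTwo/\lvar] \) to every environment and type appearing in the derivation tree. Before starting the induction, I would first establish a handful of commutation facts that the routine cases will rely on: namely, that the substitution \( [\lvarTwo/\lvar] \) commutes with type addition (so that \( [\lvarTwo/\lvar](\ty_1 + \ty_2) = [\lvarTwo/\lvar]\ty_1 + [\lvarTwo/\lvar]\ty_2 \)), that it preserves well-formedness of reference types and lifetime environments (using the hypothesis \( \lvarTwo \notin \lenv \), so that the renaming is injective on the carrier of \( \lenv \) and therefore preserves the strict partial order), and that it commutes with the context operation \( \uparrow_\gamma \) appearing in \drvrule{T-EndLft}, in the sense that \( [\lvarTwo/\lvar](\ctxterm{\tyenv}{\lvarThree}) = \ctxterm{[\lvarTwo/\lvar]\tyenv}{[\lvarTwo/\lvar]\lvarThree} \), and likewise for the lifetime environment component.

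With these commutations in hand, the rules \drvrule{T-Var}, \drvrule{T-Fail}, \drvrule{T-Arith}, \drvrule{T-If}, \drvrule{T-Let}, \drvrule{T-MkRef}, \drvrule{T-Deref}, \drvrule{T-Assign}, and \drvrule{T-Alias} are all handled by a purely mechanical rewriting: apply \( [\lvarTwo/\lvar] \) to each environment and type in the premises, invoke the induction hypothesis to retype the sub-expressions, and close the derivation with the same rule. The rule \drvrule{T-Call} requires slightly more care because it involves an auxiliary lifetime substitution \( [\vv{\lvarThree}/\vv{\lvar'}] \) instantiating the callee's polymorphic lifetime signature; here I would use alpha-renaming of the universally quantified variables \( \vv{\lvar'} \) so that they are disjoint from \( \{\lvar, \lvarTwo\} \), after which the two substitutions commute and the premise \( [\vv{\lvarThree}/\vv{\lvar'}]\mathcal{M} \subseteq \mathcal{L} \) is preserved by applying \( [\lvarTwo/\lvar] \) on both sides.

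The main obstacle, and the case that deserves the most attention, is the pair \drvrule{T-NewLft} and \drvrule{T-EndLft}, because they interact nontrivially with the outer substitution through the bound or eliminated lifetime variable. For \drvrule{T-NewLft}, we may assume by alpha-conversion that the bound lifetime \( \lvarThree \) is distinct from both \( \lvar \) and \( \lvarTwo \); then the premise's environment \( \mathcal{M} = \lenv \cup \{\lvarThree\} \) still satisfies \( \lvarTwo \notin \mathcal{M} \), the induction hypothesis applies, and the conclusion follows because \( [\lvarTwo/\lvar](\lenv \cup \{\lvarThree\}) = [\lvarTwo/\lvar]\lenv \cup \{\lvarThree\} \) and \( \lvarThree \) remains a minimum. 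For \drvrule{T-EndLft}, the ended lifetime \( \lvarThree = \min \lenv \) is by assumption already in \( \lenv \), so \( \lvarThree \neq \lvarTwo \); the two subcases \( \lvarThree = \lvar \) and \( \lvarThree \neq \lvar \) both reduce, via the commutation of \( [\lvarTwo/\lvar] \) with \( \uparrow_\lvarThree \) established above, to a direct application of the induction hypothesis followed by re-applying \drvrule{T-EndLft}, using the fact that \( [\lvarTwo/\lvar]\lvarThree \) is still the minimum of \( [\lvarTwo/\lvar]\lenv \) because the substitution preserves the order.
\qed
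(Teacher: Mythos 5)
Your proposal is correct and takes essentially the same approach as the paper: the paper's entire proof of this lemma is ``by straightforward induction on the structure of the type derivation,'' which is precisely the induction you carry out. Your additional material --- the commutation of the substitution with type addition and with \( \uparrow_\gamma \), and the careful treatment of the binder cases (\textsc{T-NewLft}, \textsc{T-EndLft}, \textsc{T-Call}) via alpha-renaming --- just fills in details the paper leaves implicit.
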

\begin{proof}
  By straightforward induction on the structure of the type derivation.
  \qed
\end{proof}

\subsubsection{Lemmas about \texorpdfstring{\( \cons \lenv \tyenv H R \)}{consistency}}
\begin{lemma}
  \label{lem:sum-own-and-brr}
  Suppose that \( \lenv \vdash_\mathrm{WF} \tyenv \).
  Given a heap \( H \) and a register \( R \), for each \( \adr \in \dom(H) \), we have
\begin{enumerate}
  \item \( \bby^R_\tyenv(\lvar; \adr) = \sum_{\lvarTwo \in \lenv} \brr_\tyenv^R(\lvar \bto \lvarTwo; \adr) \) for every \( \lvar \in \lenv \),
  \item \( \bfrm^R_\tyenv(\lvar; \adr) = \sum_{\lvarTwo \in \lenv} \brr_\tyenv^R(\lvarTwo \bto \lvar; \adr) \) for every \( \lvar \in \lenv \), and
  \item \( \own^R_\tyenv(\adr) = \sum_{\lvar \in \lenv} \own_\tyenv^R(\lvar; \adr) \) for each \( \adr \in \dom(H) \).
\end{enumerate}
\end{lemma}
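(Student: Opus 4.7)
All three equalities are pure rearrangements of multiset sums, and the plan is to prove each one by partitioning the underlying index set according to the appropriate lifetime attribute. The hypothesis \( \lenv \vdash_\mathrm{WF} \tyenv \) enters only to guarantee that the lifetimes appearing in any reference type of \( \tyenv \) really do belong to \( \lenv \), so that the index set on the right-hand sides is wide enough.

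For (3), we unfold
\[
  \own^R_\tyenv(\adr) \;=\; \sum\{\own(\ty) \mid x : \ty \in \tyenv,\ R(x) = \adr\}.
\]
Each contributing binding is either of integer type, in which case \( \own(\ty) = 0 \) and it can be discarded, or of reference type \( \tref{\lvar}{r}{B} \), in which case the attribute \( \lft(\ty) = \lvar \) is defined and by well-formedness satisfies \( \lvar \in \lenv \). Partitioning the remaining bindings by the value of \( \lft(\ty) \) splits the sum into the family \( \{\own^R_\tyenv(\lvar;\adr)\}_{\lvar \in \lenv} \), giving the required equality.

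For (1) and (2), I would argue analogously but with the partition taken along \( \blft \) for one claim and along \( \lft \) for the other. Concretely, unfolding \( \bfrm^R_\tyenv(\lvar;\adr) \) gives a multiset sum of \( \bown(\ty) \) over bindings \( x : \ty \) with \( \lft(\ty) = \lvar \) and \( R(x) = \adr \); the contributions with \( \bown(\ty) = 0 \) (namely \( \tint \) and references whose \( B \) is empty) can be dropped, and the remaining bindings all have \( \blft(\ty) \) defined and lying in \( \lenv \) by well-formedness, so that the partition by \( \blft(\ty) = \lvarTwo \) yields exactly \( \brr^R_\tyenv(\lvar \bto \lvarTwo;\adr) \) on each class. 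Summing over \( \lvarTwo \in \lenv \) gives the right-hand side. The companion statement is proved with the roles of \( \lft \) and \( \blft \) swapped.

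I do not anticipate any real obstacle here: the argument is entirely bookkeeping. The only points of care are (i) remembering that the multiset comprehensions may be empty or contain only zero summands for \( \tint \)-typed and empty-\( B \) bindings, so these are safely excluded from the partition, and (ii) invoking well-formedness to ensure the partitioning parameter always ranges within the given \( \lenv \). No induction on derivations or on program structure is needed.
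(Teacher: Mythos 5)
Your proof is correct and takes essentially the same approach as the paper: the paper's entire proof of this lemma is the single remark ``Obvious from the definitions,'' and your partition-by-lifetime-attribute argument (using well-formedness only to ensure the partitioning parameter ranges over \( \lenv \)) is exactly the bookkeeping that remark elides. One detail worth flagging: your computation establishes \( \bfrm^R_\tyenv(\lvar;\adr) = \sum_{\lvarTwo\in\lenv}\brr^R_\tyenv(\lvar\bto\lvarTwo;\adr) \) and, by swapping \( \lft \) and \( \blft \), \( \bby^R_\tyenv(\lvar;\adr) = \sum_{\lvarTwo\in\lenv}\brr^R_\tyenv(\lvarTwo\bto\lvar;\adr) \), which is what the definitions in the paper actually yield and what the later proofs (e.g.\ of the lifetime-termination lemma) use, whereas items (1) and (2) as literally stated have \( \bby \) and \( \bfrm \) interchanged --- this is a typo in the lemma statement, not a gap in your argument.
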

\begin{proof}
  Obvious from the definitions. (The only condition of \( \lenv \vdash_\mathrm{WF} \tyenv \) we use is the type consistency, so the assumption is unnecessarily strong; this will, however, not cause any problem.)
  \qed
\end{proof}

The following two lemmas are about lifetime termination.
\cref{lem:lift-preserves-cons} states that the consistency relation is preserved under lifetime termination, and is the key lemma to show the type preservation under the reduction caused by \drvrule{RS-Endlft}.

\begin{lemma}
  \label{lem:borrow-cons-min}
  Suppose that \( \cons \lenv \tyenv H R \) and \( \lvar \in \min(\lenv)  \).
  Then we have \( \bby^R_\tyenv(\lvar; \adr) \le \own^R_\tyenv(\lvar; \adr) \).
\end{lemma}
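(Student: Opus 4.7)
The plan is to reduce the desired inequality to the borrow consistency clause of $\cons \lenv \tyenv H R$ and then show that the ``forward'' quantity $\bfrm^R_\tyenv(\lvar; \adr)$ must vanish whenever $\lvar$ is minimal in $\lenv$. Once that is established, the required inequality drops out in a single line.

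More concretely, borrow consistency, which is a direct component of $\cons \lenv \tyenv H R$, gives
$\bby^R_\tyenv(\lvar; \adr) \le \own^R_\tyenv(\lvar; \adr) + \bfrm^R_\tyenv(\lvar; \adr)$,
so it suffices to show $\bfrm^R_\tyenv(\lvar; \adr) = 0$. Unfolding the definition, $\bfrm^R_\tyenv(\lvar; \adr)$ sums $\bown(\ty)$ over $x : \ty \in \tyenv$ with $\lft(\ty) = \lvar$ and $R(x) = \adr$; and $\bown(\ty)$ is positive only when $\ty$ has the form $\tref{\lvar}{r}{(\lvarTwo, s)}$ with $s > 0$. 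Since $\lenv \vdash_{\mathrm{WF}} \tyenv$ is bundled inside the consistency hypothesis, every such $\ty$ is well-formed, and the well-formedness constraint on reference types (established right after the syntax of types) forces $\lvarTwo \lftlt \lvar$ in $\lenv$. But $\lvar$ is minimal in $\lenv$, so no such $\lvarTwo$ exists. Hence every summand is zero, giving $\bfrm^R_\tyenv(\lvar; \adr) = 0$ and thus the claim.

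There is essentially no obstacle here: the argument is a one-step unfolding and requires neither induction nor a case split. The only point worth double-checking is that the well-formedness clause embedded in $\cons \lenv \tyenv H R$ really does prohibit a minimal lifetime from lending any ownership outwards, which is exactly what the well-formedness convention on $\tref{\alpha}{r}{(\beta,s)}$ (requiring $\beta \lftlt \alpha$) guarantees.
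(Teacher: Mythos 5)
Your proof is correct and follows essentially the same route as the paper's: both invoke borrow consistency to get $\bby^R_\tyenv(\lvar; \adr) \le \own^R_\tyenv(\lvar; \adr) + \bfrm^R_\tyenv(\lvar; \adr)$ and then argue that $\bfrm^R_\tyenv(\lvar; \adr) = 0$ from well-formedness together with the minimality of $\lvar$. Your version merely unfolds the definition of $\bfrm$ more explicitly than the paper does.
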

\begin{proof}
  Because of the well-formedness condition and the minimality of \( \lvar \), we must have \( \bfrm_\tyenv^R(\lvar; \adr) = 0 \) for all \( \adr \in \dom(H)\).
  Thus, we have \( \bby^R_\tyenv(\lvar; \adr) \le \own^R_\tyenv(\lvar; \adr) \) by the borrow consistency.
  \qed
\end{proof}

\begin{lemma}
  \label{lem:lift-preserves-cons}
  Suppose that \( \cons \lenv \tyenv H R \) and \( \lvar \in \min(\lenv)\).
  \begin{enumerate}
    \item We have \( \brr^R_\tyenv(\lvarTwo \bto \lvarThree; a ) = \brr^R_{\tyenv \lift \lvar }(\lvarTwo \bto \lvarThree; a) \) for every \( \adr \in \dom(H) \) and \( \lvarTwo, \lvarThree \in \lenv \lift \lvar \).
      \label{it:lem:lift-preserves-cons:brr}
    \item For all \( \adr \in \dom(H) \) and \( \lvarTwo \in \lenv \lift \lvar \),
      \begin{equation}
      \own^R_{\tyenv \lift \lvar}(\lvarTwo; \adr) = \own_{\tyenv}^{R}(\lvarTwo; \adr) + \brr_\tyenv^R(\lvarTwo \bto \lvar; \adr). \label{eq:lem:lift-preserves-cons:own-with-lft}
      \end{equation}

      Hence, for any \( \adr \in \dom(H) \), we have
      \begin{equation}
      \own^R_{\tyenv \lift \lvar}(\adr) = \sum_{\beta \neq \alpha }\own_{\tyenv}^{R}(\lvarTwo; \adr) + \bby_\tyenv^R(\lvar; \adr). \label{eq:lem:lift-preserves-cons:own}
      \end{equation}
    \label{it:lem:lift-preserves-cons:own-plus-brr}
    \item We have \(\cons {\lenv \lift \lvar} {\tyenv \lift \lvar} H R \).
    \label{it:lem:lift-preserves-cons:main}
  \end{enumerate}
\end{lemma}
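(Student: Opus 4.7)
My plan is to prove the three items in order, leveraging the structural definition of $\lift \lvar$. The crucial observation driving the whole argument is that $\lift \lvar$ alters bindings in only two ways: it \emph{deletes} bindings $x : \ty$ with $\lft(\ty) = \lvar$, and it \emph{rewrites} bindings of the form $x : \tref \lvarTwo r {\lvar, s}$ into $x : \tref \lvarTwo {r+s} \emptyset$. All other bindings, as well as $R$ and $H$, are untouched. Every clause of the lemma will follow from tracking how this bookkeeping redistributes ownership at each address.

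For item (1), I would argue that because $\lvarTwo, \lvarThree \neq \lvar$, any binding contributing to $\brr^R_\tyenv(\lvarTwo \bto \lvarThree; \adr)$ is neither deleted nor rewritten, so the two multiset sums coincide term by term. For item (2), I would split the relevant bindings in $\tyenv$ according to whether $\blft(\ty) = \lvar$: untouched bindings carry their original $\own$ over, while each rewritten binding $\tref \lvarTwo r {\lvar, s}$ contributes $r + s$ instead of $r$, adding an extra $s = \bown(\ty)$. Summing these extras yields exactly $\brr^R_\tyenv(\lvarTwo \bto \lvar; \adr)$. For the ``hence'' equation, I would sum over $\lvarTwo \in \lenv \lift \lvar$ and use \cref{lem:sum-own-and-brr} to fold $\sum_{\lvarTwo \neq \lvar} \brr^R_\tyenv(\lvarTwo \bto \lvar; \adr)$ into $\bby^R_\tyenv(\lvar; \adr)$; the missing $\lvarTwo = \lvar$ summand vanishes because well-formedness forces $\blft(\ty) \lftlt \lft(\ty)$, ruling out $\lft(\ty) = \blft(\ty) = \lvar$.

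For item (3), I would verify the four conditions of $\cons {\lenv \lift \lvar} {\tyenv \lift \lvar} H R$ separately. Well-formedness, type consistency, and memory consistency should be routine, since $\lift$ does not disturb $R$ or $H$, preserves the base shape (integer versus reference) of surviving types, and removing the minimal $\lvar$ leaves the residual ordering valid. Fraction consistency is the main payoff: from the ``hence'' equation of item (2) and \cref{lem:borrow-cons-min} (whose minimality hypothesis on $\lvar$ is exactly what is needed here), I obtain $\bby^R_\tyenv(\lvar; \adr) \le \own^R_\tyenv(\lvar; \adr)$, which lets the right-hand side collapse into $\own^R_\tyenv(\adr) \le 1$. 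For borrow consistency at $\lvarTwo \neq \lvar$, I would use item (1) to show $\bby^R_{\tyenv \lift \lvar}(\lvarTwo; \adr) = \bby^R_\tyenv(\lvarTwo; \adr)$ (the $\lvarThree = \lvar$ summand vanishes by minimality) and observe that the $s$ gained by $\own$ exactly matches the $s$ lost by $\bfrm$ when a borrowing $(\lvar, s)$ is dropped, so the desired inequality reduces to the borrow consistency of the original $\tyenv$.

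The hard part, modest but real, will be the careful accounting in item (3): $\lift \lvar$ simultaneously discards $\lvar$-lifetime ownership (decreasing the total at $\adr$ by $\own^R_\tyenv(\lvar; \adr)$) and repatriates borrowed fractions to their lenders (increasing it by $\bby^R_\tyenv(\lvar; \adr)$), and one must verify that the net effect at every address preserves the $\le 1$ bound. Both the minimality of $\lvar$, which licenses \cref{lem:borrow-cons-min}, and the well-formedness clause $\blft \lftlt \lft$, which excludes self-borrowing at $\lvar$, are essential for closing the accounting; without minimality, a cyclic borrow as in \cref{code:order-of-lifetime} would break fraction consistency upon $\lift$.
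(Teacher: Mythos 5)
Your proposal is correct and follows essentially the same route as the paper's proof: item (1) by direct inspection of which bindings $\lift\lvar$ touches, item (2) by splitting the sum on whether $\blft(\ty)=\lvar$ and folding the extra $\bown$ contributions into $\brr^R_\tyenv(\lvarTwo\bto\lvar;\adr)$, and item (3) by combining the ``hence'' equation with \cref{lem:borrow-cons-min} for fraction consistency and by the exact $s$-for-$s$ cancellation between $\own$ and $\bfrm$ for borrow consistency. Your explicit justification that the degenerate summands ($\lvarTwo=\lvar$ in the $\bby$ fold, $\lvarThree=\lvar$ in the $\bfrm$ fold) vanish via well-formedness and minimality is a small point the paper leaves implicit, but the argument is the same.
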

\begin{proof}
  The statement~\ref{it:lem:lift-preserves-cons:brr} is a straightforward consequence of the definition of \( \brr \) and \( \tyenv \lift \lvar \).

  (Proof of~\ref{it:lem:lift-preserves-cons:own-plus-brr})
  The proof is by an easy calculation.
  Concretely, for each \( \adr \in \dom(H)\) and \( \lvarTwo \in \lenv \lift \lvar \), we have:
  \begin{align*}
    &\own^R_{\tyenv \lift \lvar}(\lvarTwo; \adr) \\
    &= \sum \{ \own(\ty) \mid x : \ty \in \tyenv \lift \lvar, \lft(\ty) = \lvarTwo, R(x) = \adr \} \tag{by def.\ of \( \own \)}\\
    &= \sum \{ \own(\ty' \lift \lvar) \mid x : \ty' \in \tyenv, \lft(\ty) = \lvarTwo,  R(x) = \adr \} \tag{by def.\ \( \tyenv \lift \lvar \) and \( \lvarTwo \neq \lvar \)}\\
    &=
      \begin{aligned}[t]
        & \sum \{ \own(\ty' \lift \lvar) \mid x : \ty' \in \tyenv, \lft(\ty') = \lvarTwo,  R(x) = \adr, \blft(\ty') = \alpha \} \\
        &+ \sum \{ \own(\ty' \lift \lvar) \mid x : \ty' \in \tyenv, \lft(\ty') = \lvarTwo,   R(x) = \adr, \blft(\ty') \neq \alpha \}
      \end{aligned} \tag{case analysis on \( \blft(\ty') \)}\\
    &=
      \begin{aligned}[t]
        & \sum \{ \own(\ty') + \bown(\ty') \mid x : \ty' \in \tyenv, \lft(\ty') = \lvarTwo,  R(x) = \adr, \blft(\ty') = \alpha\} \\
        &+ \sum \{ \own(\ty') \mid x : \ty' \in \tyenv,  \lft(\ty') = \lvarTwo, R(x) = \adr, \blft(\ty') \neq \alpha \}
      \end{aligned} \tag{by def.\ \( \tyenv \lift \lvar \)}\\
    &= \begin{aligned}
         &\sum \{ \own(\ty') \mid x : \ty' \in \tyenv, \lft(\ty') = \lvarTwo,  R(x) = \adr \}  \\
         &+ \sum \{ \bown(\ty') \mid x : \ty' \in \tyenv, \lft(\ty') = \lvarTwo, \blft(\ty') = \lvar, R(x) = \adr \}
       \end{aligned} \\
    &= \own_{\tyenv}^{R}(\lvarTwo; \adr) + \brr_\tyenv^R(\lvarTwo \bto \lvar; \adr).
  \end{align*}

  Equation~\eqref{eq:lem:lift-preserves-cons:own} follows by taking the sum over \( \lvarTwo \in \lenv \lift \lvar \) in the each side of Equation~\eqref{eq:lem:lift-preserves-cons:own-with-lft} and applying Lemma~\ref{lem:sum-own-and-brr}.

  \noindent(Proof of~\ref{it:lem:lift-preserves-cons:main})
  It is easy to show \( \lenv \lift \lvar \vdash_\mathrm{WF} \tyenv \lift \lvar \) from  \(\lenv \vdash_\mathrm{WF} \tyenv \),
  and the memory and type consistencies are trivial as we are not modifying the heap nor the register.
  Hence, we only check the fraction and borrow consistency.

  The fraction consistency holds by the following (in)equalities (which holds for every \( \adr \in \dom(H)\)):
  \begin{align*}
    \own_{\tyenv \lift \lvar}(\adr)
    &= \sum_{\beta \neq \alpha }\own_\tyenv^R(\lvarTwo; \adr) + \bby_\tyenv^R(\lvar; \adr) \tag{by~\ref{it:lem:lift-preserves-cons:own-plus-brr}} \\
    &\le  \sum_{\beta \neq \alpha }\own_{\Gamma }^{R}(\lvarTwo; \adr) + \own_\tyenv^R(\lvar; \adr) \tag{by Lemma~\ref{lem:borrow-cons-min}} \\
    &= \own_\tyenv^R(a) \tag{by Lemma~\ref{lem:sum-own-and-brr}} \\
    &\le 1 \tag{by the fraction consistency of \( \cons \lenv \tyenv H R \)}
  \end{align*}

  We now show the borrow consistency.
  Take \( \adr \in \dom(H) \) and \( \lvarTwo \in \lenv \lift \lvar \).
  We have
  \begin{align*}
    \bby^R_{\tyenv \lift \lvar}(\lvarTwo; \adr)
    &= \bby^R_{\tyenv}(\lvarTwo; \adr) \tag{by~\ref{it:lem:lift-preserves-cons:brr} and Lemma~\ref{lem:sum-own-and-brr}} \\
    &\le \own^R_\tyenv(\lvarTwo; \adr) + \bfrm^R_\tyenv(\lvarTwo; \adr) \tag{by the borrow consistency of \( \cons \lenv \tyenv H R\)} \\
    &= \own^R_{\tyenv \lift \lvar}(\lvarTwo; \adr) - \brr^R_\tyenv(\lvarTwo \bto \lvar; \adr) + \sum_{\lvarThree \in \lenv}\brr^R_\tyenv(\beta \bto \lvarThree; \adr) \tag{by~\ref{it:lem:lift-preserves-cons:own-plus-brr} and Lemma~\ref{lem:sum-own-and-brr}} \\
    &= \own^R_{\tyenv \lift \lvar}(\lvarTwo; \adr) + \sum_{\lvarThree \in \lenv \lift \lvar}\brr^R_\tyenv(\lvarTwo \bto \lvarThree; \adr) \\
    &= \own^R_{\tyenv \lift \lvar}(\lvarTwo; \adr) + \sum_{\lvarThree \in \lenv \lift \lvar}\brr^R_{\tyenv \lift \lvar}(\lvarTwo \bto \lvarThree; \adr) \tag{by~\ref{it:lem:lift-preserves-cons:brr}} \\
    &= \own^R_{\tyenv \lift \lvar}(\lvarTwo; \adr) + \bfrm^R_{\tyenv \lift \lvar}(\lvarTwo; \adr)  \tag{by Lemma~\ref{lem:sum-own-and-brr}} \\
  \end{align*}
  as desired.
  \qed
\end{proof}
We now prove that ownership transfers between aliasing references in \( \tyenv\) preserves \( \cons \lenv \tyenv H R \).
This is the lemma used to prove the preservation under reductions caused by \drvrule{RS-Let} and \drvrule{RS-Alias}.

\begin{lemma}
  \label{lem:cons-addition}
  Suppose that \( \cons \lenv \tyenv H R \).
  \begin{enumerate}
    \item If \( \tyenv = \tyenv_0, y : \ty \), \( \ty = \ty_x + \ty_y\) and \( x \notin \dom(R)\), then \( \cons \lenv {\tyenv_0, x : \ty_x, y : \ty_y} H {R \upd x {R(y)}}\)
      \label{it:lem:cons-addition:let}
    \item If \( \tyenv = \tyenv_0, x : \ty_x, y : \ty_y \), \( R(x) = R(y) \), \( \ty_x + \ty_y = \ty \) and \( \ty'_x + \ty'_y = \ty \) for some \( \ty \), then \( \cons \lenv {\tyenv_0, x : \ty'_x, y : \ty'_y} H R \) \label{it:lem:cons-addition:alias}
  \end{enumerate}
\end{lemma}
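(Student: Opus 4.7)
The plan is to do case analysis on the derivation(s) of the relevant type addition(s). A useful observation up front is that at any address \( \adr' \ne R(y) \) the four counting operations \( \own^R \), \( \bby^R \), \( \brr^R \), and \( \bfrm^R \) are entirely unaffected by the manipulation, so consistency at such addresses transfers from the hypothesis directly; only the behaviour at \( \adr \defrel R(y) \) needs real verification.

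For Part 1 the cases \drvrule{A-Int} and \drvrule{A-Ex} are immediate. Under \drvrule{A-Share}, both \( \ty_x \) and \( \ty_y \) share lifetime and borrowing target with \( \ty \), so their contributions at \( \adr \) sum pointwise to those of \( \ty \) for each of the four counts, and every consistency inequality carries over unchanged. The substantive case is \drvrule{A-Borrow}, where \( \ty = \tref{\lvar}{r+s}{\emptyset} \) splits into \( \tref{\lvar}{r}{\lvarTwo,s} \) and \( \tref{\lvarTwo}{s}{\emptyset} \). Total ownership at \( \adr \) is preserved at \( r+s \), giving fraction consistency. For borrow consistency a short direct computation shows that at \( \lvar \) the decrease of \( \own^R(\lvar,\adr) \) by \( s \) is exactly offset by an increase of \( \bfrm^R(\lvar,\adr) \) by \( s \), with \( \bby^R(\lvar,\adr) \) unchanged, while at \( \lvarTwo \) both \( \bby^R(\lvarTwo,\adr) \) and \( \own^R(\lvarTwo,\adr) \) grow by \( s \); in either lifetime the new inequality reduces to the old one. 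Type and memory consistency under \( R \upd x {R(y)} \) follow since \( x \) is assigned the same address as \( y \).

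For Part 2 I would first establish a ``merge'' converse of Part 1: if \( \cons \lenv {\tyenv_0, x:\ty_x, y:\ty_y} H R \) with \( R(x) = R(y) \) and \( \ty_x + \ty_y = \ty \), then \( \cons \lenv {\tyenv_0, y:\ty} H R \). The same case analysis used for Part 1 proves this, since each difference computation is an equality that runs equally well in reverse. Part 2 then follows by merging \( \{x:\ty_x, y:\ty_y\} \) into \( \{y:\ty\} \) and re-splitting via \( \ty'_x + \ty'_y \) using Part 1, noting that \( R \upd x {R(y)} = R \) when \( x \) is already mapped to \( R(y) \), which makes the \( x \notin \dom(R) \) precondition a harmless restriction.

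The main obstacle is the \drvrule{A-Borrow} case, where the ownership profile at \( \adr \) genuinely redistributes across two distinct lifetimes and every component count shifts. The argument must simultaneously track changes in \( \own^R \), \( \bfrm^R \), and \( \bby^R \) at both \( \lvar \) and \( \lvarTwo \), and verify that the lent amount \( s \) appears symmetrically on both sides of each borrow-consistency inequality. Well-formedness of the new environment, in particular the order \( \lvarTwo \lftlt \lvar \in \lenv \) required by the newly created type \( \tref{\lvar}{r}{\lvarTwo,s} \), is tacitly assumed throughout, as for every type judgment in this paper.
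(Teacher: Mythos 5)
Your proposal is correct and follows essentially the same route as the paper: a case analysis on the derivation of the type addition, with the \drvrule{A-Share} case handled by pointwise preservation of \( \own \) and \( \bown \) sums and the \drvrule{A-Borrow} case by the same bookkeeping showing \( \own + \bfrm \) is invariant at \( \lvar \) while \( \own \) and \( \bby \) both grow by \( s \) at \( \lvarTwo \). The paper dismisses Part~2 as ``similar''; your merge-then-resplit reduction is a reasonable way to make that precise, and your observation that \( R\upd{x}{R(y)} = R \) neutralizes the \( x \notin \dom(R) \) side condition is the right fix for the only technical mismatch.
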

\begin{proof}
  \newcommand*{\tyenvTwo}{\tyenv'}
  We only prove~\ref{it:lem:cons-addition:let} because~\ref{it:lem:cons-addition:alias} can be proved in a similar manner.
  The case where \( \ty = \ty_x = \ty_y = \tint \) is trivial, so let us assume that \( \ty \) is a reference type.
  To simplify the notation, we write \( \adr \) for \( R(y) \), \( R' \) for \( R \upd x {R(y)}\)and \(  \tyenvTwo \) for \( \tyenv_0, x : \ty_x, y : \ty_y \).
  We first check the memory consistency.
  It suffices to show that \( R(x) \in \dom(H) \).
  Since \( \adr \in \dom(H) \) and \( H(\adr) \in \mathbb Z \) due to the memory consistency of  \( \cons \lenv \tyenv H R \), we have \( R(x) = \adr \in \dom(H) \) and \( H(R(x)) \in \mathbb Z \) as desired.
  To show the fraction and borrow consistencies, we proceed by a case analysis on the rule used to derive \( \ty = \ty_x + \ty_y \).

  Suppose that the rule used was \drvrule{A-Share}.
  Then it must be the case that \( \ty = \tref \lvar {r_1 + r_2}{\lvarTwo, s_1 + s_2} \), \( \ty_x =  \tref \lvar {r_1}{\lvarTwo, s_1} \) and \( \ty_y =  \tref \lvar {r_2}{\lvarTwo, s_2} \) for some fractions \( r_1 \), \( r_2 \), \( s_1 \), \( s_2\) and a lifetime variable \( \lvar \).
  We therefore have \( \own(\ty) = \own(\ty_x) + \own(\ty_y) \).
  From this it follows that \( \own_\tyenv^R(\adr) = \own^{R'}_{\tyenvTwo}(\adr) \); we also have \( \own_\tyenv^R(\adr') = \own^{R'}_{\tyenvTwo}(\adr') \) for any \( \adr'\) such that \( \adr' \neq \adr \) because \( x \) and \( y \) do not have any ownership for this address.
  Thus, the fraction consistency of \( \cons \lenv \tyenvTwo H {R'} \) follows from that of \( \cons \lenv \tyenv H R \).
  Similarly, by \( \own(\ty) = \own(\ty_x) + \own(\ty_y) \) and \( \bown(\ty) = \bown(\ty_x) + \bown(\ty_y) \), we have \( \own_\tyenv^R(\lvar; \adr) = \own^{R'}_{\tyenvTwo}(\lvar; \adr) \), \( \bfrm_\tyenv^R(\lvar; \adr) = \bfrm^{R'}_{\tyenvTwo}(\lvar; \adr) \) and \(  \bby_\tyenv^R(\lvarTwo; \adr) = \bby^{R'}_{\tyenvTwo}(\lvarTwo; \adr)\).
  Using these equations we get the borrow consistency of \( \cons \lenv \tyenvTwo H {R'} \) from that of \( \cons \lenv \tyenv H R \).

  Now we show the case for \drvrule{A-Borrow}.
  In this case, we have \( \ty = \tref \lvar {r + s} {} \), \( \ty_x =  \tref \lvarTwo s {} \) and \( \ty_y =  \tref \lvar r {\lvarTwo, s} \) for some fractions \( r \), \( s \) and a lifetime variables \( \lvar \) and \( \lvarTwo\); the case where \( x \) lends some ownership to \( y \) is symmetric and is proved similarly.
  Again, we have \( \own(\ty) = \own(\ty_x) + \own(\ty_y) \) and, from this, the fraction consistency follows as in the case for \drvrule{A-Share}.
  Observe that we have
  \begin{align*}
    \own^{R'}_{\tyenvTwo}(\lvar; \adr) + \bfrm^{R'}_{\tyenvTwo}(\lvar; \adr)
    &= (\own_\tyenv^R(\lvar; \adr) - s) + (\bfrm_\tyenv^R(\lvar; \adr)  + s) \\
    &= \own_\tyenv^R(\lvar; \adr) + \bfrm_\tyenv^R(\lvar; \adr) \\
    &\ge \bby^{R}_{\tyenv}(\lvar; \adr) \tag{by borrow consistency of \( \cons \lenv \tyenv H R \)} \\
    &= \bby^{R'}_{\tyenvTwo}(\lvar; \adr)
  \end{align*}
  and
  \begin{align*}
    \own^{R'}_{\tyenvTwo}(\lvarTwo; \adr) + \bfrm^{R'}_{\tyenvTwo}(\lvarTwo; \adr)
    &= (\own_\tyenv^R(\lvarTwo; \adr) + s) + \bfrm_\tyenv^R(\lvarTwo; \adr) \\
    &\ge \bby^{R}_{\tyenv}(\lvarTwo; \adr) + s \tag{by borrow consistency of \( \cons \lenv \tyenv H R \)} \\
    &= \bby^{R'}_{\tyenvTwo}(\lvarTwo; \adr).
  \end{align*}
  Since \( x \) and \( y \) does not have any ownership for address other than \( \adr \), the above two inequality is enough to conclude that the borrow consistency of \( \cons \lenv \tyenvTwo H {R'} \) holds.
  \qed
\end{proof}

\subsubsection{Lemmas about return contexts}
We prove two lemmas (Lemmas~\ref{lem:return-context-subst-var} and~\ref{lem:call-creates-return-context}) about return contexts.
Lemmas~\ref{lem:return-context-subst-var} and~\ref{lem:call-creates-return-context} are the key lemmas for the return, i.e.~\drvrule{RS-Var}, and call cases, respectively; we have separated them into separate lemmas for clarity.
\begin{lemma}
  \label{lem:return-context-subst-var}
  Suppose that
  \begin{gather*}
    \tyjudgement \fenv \lenv \tyenv {x : \ty} {\lenv'} {\tyenv'}\quad \text{and}\\
    \ejdg \fenv \ty {\lenv'} {\tyenv'} {\letin y \hole e} {\tyTwo} {\lenv''} {\tyenv''}.
  \end{gather*}
  Then \( \tyenv \) can be split as \( \tyenv = \tyenv_0 +  \genv \) such that
  \begin{gather*}
    \tyjudgement \fenv \lenv {\tyenv_0} {\letin y x  : \tyTwo} {\lenv''} {\tyenv''}.
  \end{gather*}
\end{lemma}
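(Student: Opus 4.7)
The plan is to invert the two given judgments, choose a trivial ownership split of $\ty$, and rebuild the goal with a single application of T-Let, reconciling environments by the Weakening lemma.

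First, I would invert T-Var on $\tyjudgement \fenv \lenv \tyenv {x : \ty} {\lenv'} {\tyenv'}$. This is the only rule that types a bare variable, so it forces $\lenv = \lenv'$ and $\tyenv = \genv + \tyenv' + x:\ty$ for some $\genv$ (the portion discarded at the T-Var step). Then invert TC-Let on $\ejdg \fenv \ty {\lenv'} {\tyenv'} {\letin y \hole e} {\tyTwo} {\lenv''} {\tyenv''}$ to obtain the body typing $\tyjudgement \fenv {\lenv'} {\tyenv', y:\ty} {e:\tyTwo}{\lenv''}{\tyenv''}$ together with $y \notin \dom(\tyenv'')$. By alpha-renaming the bound variable $y$, we may assume $y \neq x$ and $y \notin \dom(\tyenv)$, so the addition $\tyenv' + x:\ty$ and later addition of $x$ to $\tyenv', y:\ty$ behave as disjoint unions. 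Set $\tyenv_0 \defrel \tyenv' + x:\ty$ and keep $\genv$ as above, so $\tyenv = \tyenv_0 + \genv$ as required.

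To apply T-Let to $\letin y x e$, I need to split $\ty = \ty_y + \ty_x$, where $\ty_y$ is handed to the freshly bound $y$ and $\ty_x$ is retained by $x$. I take $\ty_y = \ty$ and let $\ty_x$ be the trivial companion type: $\tint$ if $\ty = \tint$ (justified by A-Int), and $\tref{\alpha}{0}{\emptyset}$ if $\ty = \tref{\alpha}{r}{B}$ (justified by A-Share with zero fractions; well-formedness of the original judgment gives $\alpha \in \lenv'$). With this split chosen, I weaken the body typing by adding $x:\ty_x$ to its start environment. Since $\ty_x$ is either $\tint$ or a zero-ownership reference type whose lifetime lies in $\lenv'$, and $x \notin \dom(\tyenv', y:\ty)$, Lemma~\ref{lem:weakening} yields $\tyjudgement \fenv {\lenv'} {\tyenv', y:\ty, x:\ty_x}{e:\tyTwo}{\lenv''}{\tyenv''}$.

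Now all the premises of T-Let (after renaming its bound variables so that the ``new'' variable matches our $y$ and the ``old'' one matches $x$) are in place: the split $\ty = \ty_y + \ty_x$, the freshness condition $y \notin \dom(\tyenv'')$, and the weakened body typing. Applying T-Let delivers $\tyjudgement \fenv {\lenv} {\tyenv_0}{\letin y x : \tyTwo}{\lenv''}{\tyenv''}$, which is exactly the conclusion. No real obstacle lurks here; the only slightly delicate point is choosing the trivial split of $\ty$ in a way that (i) is admitted by the addition rules and (ii) keeps the weakened environment well-formed, which is why the case split on whether $\ty$ is $\tint$ or a reference type is needed. The rest is bookkeeping on environment splitting and alpha-freshness.
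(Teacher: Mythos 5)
Your overall route coincides with the paper's: invert \textsc{T-Var} to obtain $\mathcal{L} = \mathcal{L}'$ and $\Gamma = \Delta + \Gamma' + x:\tau$, invert \textsc{TC-Let} to obtain the body typing of $e$ under $\Gamma', y:\tau$ together with $y \notin \dom(\Gamma'')$, set $\Gamma_0 = \Gamma' + x:\tau$, and close with a single application of \textsc{T-Let}. The gap is in how you discharge the \textsc{T-Let} premises: your construction silently assumes $x \notin \dom(\Gamma')$. Alpha-renaming the bound variable $y$ gives you $y \notin \dom(\Gamma)$, but it says nothing about $x$, which is a free variable of the expression being typed; the output environment $\Gamma'$ of the \textsc{T-Var} judgment may perfectly well retain $x$ with a nontrivial residual type $\tau_x'$ (for instance, $x$ hands half of its ownership to the value of the expression and keeps the other half, since \textsc{T-Var} places no constraint forbidding $x \in \dom(\Gamma')$).

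In that case your argument fails at two concrete points. First, $\Gamma_0(x) = \tau_x' + \tau$ rather than $\tau$, so the split you must exhibit for \textsc{T-Let} is $\tau_x' + \tau$ decomposed as $\tau_x'$ (retained by $x$) plus $\tau$ (handed to $y$), not your ``trivial'' split of $\tau$ into $\tau$ and a zero-ownership companion. Second, the weakening step is inapplicable: Lemma~\ref{lem:weakening} requires the added variable to be absent from the environment, but $x$ already occurs in $\Gamma', y:\tau$. The paper's proof makes the opposite normalization: it assumes without loss of generality that $x \in \dom(\Gamma')$, extending $\Gamma'$ with a trivial binding for $x$ via weakening when it is not, after which the body typing already contains both $x:\tau_x'$ and $y:\tau$ and \textsc{T-Let} applies directly with the split $\tau_x' + \tau$. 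Your argument is exactly the degenerate instance of this in which $\tau_x'$ is trivial; to repair it you must also treat the case $x \in \dom(\Gamma')$, which needs no new ideas but does require dropping the ``trivial companion type'' framing and the weakening step.
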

\begin{proof}
  Since  \( \ejdg \fenv \ty {\lenv'} {\tyenv'} {\letin y \hole e} {\tyTwo} {\lenv''} {\tyenv''} \), we must have
  \begin{equation}
    \tyjudgement \fenv {\lenv'} {\tyenv', y : \ty} {e : \tyTwo} {\lenv''} {\tyenv''} \label{eq:lem:return-context-subst-var:tyjudgement-e}.
  \end{equation}
  On the other hand, by inversion on \( \tyjudgement \fenv \lenv \tyenv {x : \ty} {\lenv'} {\tyenv'} \), it must be the case that
  \begin{gather*}
    \tyenv = \genv + \tyenv' + x : \ty \quad \text{and} \quad \lenv' = \lenv
  \end{gather*}
  for some \( \genv \).
  Without loss of generality, assume that \( x \in \dom(\tyenv') \), i.e.~\( \tyenv' = \tyenv'_0, x : \tau_x' \) for some \( \tau_x' \); otherwise we can extend \( \tyenv' \) with \( x \) using the weakening lemma (Lemma~\ref{lem:weakening}).
  We take \( \tyenv' + x : \tau \) as \( \tyenv_0 \).
  Since \( \lenv' = \lenv \) and  \( \tyenv' = \tyenv'_0, x : \tau_x' \), the judgment~\eqref{eq:lem:return-context-subst-var:tyjudgement-e} is of the form
  \begin{equation*}
    \tyjudgement \fenv {\lenv} {\tyenv'_0, x : \ty_x',  y : \ty} {e : \tyTwo} {\lenv''} {\tyenv''},
  \end{equation*}
  and, since \( \tyenv_0(x) = \ty_x' + \ty \), applying \drvrule{T-Let} yields
  \begin{equation*}
    \tyjudgement \fenv {\lenv} {\tyenv' +  x : \ty} {\letin y x e : \tyTwo} {\lenv''} {\tyenv''}.
  \end{equation*}
  \qed
\end{proof}

\begin{lemma}
  \label{lem:call-creates-return-context}
  Suppose that
  \begin{gather}
    \tyjudgement \fenv \lenv \tyenv {\letin{x}{f\langle\vv{\beta}\rangle(y_1, \dots, y_n)} e'  : \tyTwo} {\lenv'} {\tyenv'}  \label{eq:lem:call-creates-return-context:call-typing} \\
    \vdash_\mathrm{WF} \fenv \qquad \fenv \vdash f \mapsto \tuple{\vv{\alpha}}(x_1, \dots, x_n) e  \label{eq:lem:call-creates-return-context:fundef-typing} \\
    \fenv(f) = \forall \vv \lvar : \mathcal M.\ \langle \tau_1, \dots, \tau_n \rangle \to \langle \tau_1', \dots, \tau_n' \mid \tau \rangle \label{eq:lem:call-creates-return-context:f-type}\\
    f \mapsto \tuple{\vv{\alpha}}(x_1, \dots, x_n) e \in D
  \end{gather}
  Then we have
  \begin{gather}
    \tyjudgement \fenv \lenv \tyenv {\sub {\seq y} {\seq x} {e} :  {\sub {\seq \lvarTwo} {\seq \lvar} \ty}} {\lenv} {\tyenv''} \label{eq:lem:call-creates-return-context:goal1} \\
    \ejdg \fenv {\sub {\seq \lvarTwo} {\seq \lvar} \ty} \lenv {\tyenv''} {\letin x \hole e'} \tyTwo {\lenv'} {\tyenv'} \label{eq:lem:call-creates-return-context:goal2}
  \end{gather}
  where
  \begin{align*}
     \tyenv'' &\defrel \sub {y_1: \tyTwo_1'}{x_1:\tyTwo_1}\ \cdots \sub {y_n: \tyTwo_n'}{x_n:\tyTwo_n}\tyenv \\
    \tyTwo_i &\defrel  \sub {\seq \lvarTwo} {\seq \lvar} \ty_i  \qquad \tyTwo'_i \defrel  \sub {\seq \lvarTwo} {\seq \lvar} \ty'_i
  \end{align*}
\end{lemma}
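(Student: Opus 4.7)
\medskip

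\noindent\textbf{Proof proposal.}
The plan is to proceed by inversion on the two typing derivations supplied by the hypotheses, and then stitch the pieces together using the substitution lemmas (Lemmas~\ref{lem:substitution} and~\ref{lem:substitution-lvar}) and the weakening lemmas (Lemmas~\ref{lem:weakening} and~\ref{lem:weakening2}). Concretely, inverting~\eqref{eq:lem:call-creates-return-context:call-typing} via \drvrule{T-Call} decomposes the input environment as $\tyenv = \tyenv_0, y_1:\tyTwo_1, \dots, y_n:\tyTwo_n$ (up to the use of \drvrule{T-Var}) together with a sub-derivation
\[
  \tyjudgement \fenv \lenv {\tyenv_0, x:\sub{\seq\lvarTwo}{\seq\lvar}\ty, y_1:\tyTwo_1', \dots, y_n:\tyTwo_n'} {e' : \tyTwo} {\lenv'} {\tyenv'}
\]
and the side condition $\sub{\seq\lvarTwo}{\seq\lvar}\mathcal{M} \subseteq \lenv$, with $x \notin \dom(\tyenv')$. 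This sub-derivation is precisely what \drvrule{TC-Let} demands in order to conclude the return-context goal~\eqref{eq:lem:call-creates-return-context:goal2}, once we rewrite $\tyenv_0, y_1:\tyTwo_1', \dots, y_n:\tyTwo_n'$ as $\tyenv''$ by unfolding the definition.

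For goal~\eqref{eq:lem:call-creates-return-context:goal1}, I would invert~\eqref{eq:lem:call-creates-return-context:fundef-typing} via \drvrule{T-FunDef} together with~\eqref{eq:lem:call-creates-return-context:f-type} to obtain
\[
  \tyjudgement \fenv {\mathcal M} {x_1:\ty_1, \dots, x_n:\ty_n} {e:\ty} {\mathcal M} {x_1:\ty_1', \dots, x_n:\ty_n'}.
\]
I then apply Lemma~\ref{lem:substitution-lvar} repeatedly to substitute each formal lifetime $\lvar_j$ by the actual lifetime $\lvarTwo_j$, producing a derivation typed under $\sub{\seq\lvarTwo}{\seq\lvar}\mathcal{M}$. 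Lemma~\ref{lem:weakening-lvar} lifts this to the ambient environment $\lenv$ (using the side condition $\sub{\seq\lvarTwo}{\seq\lvar}\mathcal{M}\subseteq\lenv$ from the inversion of \drvrule{T-Call}). Next, Lemma~\ref{lem:substitution} is applied $n$ times to replace each formal parameter $x_i$ by the actual argument $y_i$, yielding
\[
  \tyjudgement \fenv \lenv {y_1:\tyTwo_1, \dots, y_n:\tyTwo_n} {\sub{\seq y}{\seq x} e : \sub{\seq\lvarTwo}{\seq\lvar}\ty} \lenv {y_1:\tyTwo_1', \dots, y_n:\tyTwo_n'}.
\]
Finally, Lemma~\ref{lem:weakening2} extends the environment with the remaining bindings $\tyenv_0$, producing $\tyenv$ on the left and $\tyenv''$ on the right as required.

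The main obstacle is the bookkeeping around the weakening step: one must check that well-formedness ($\lenv \vdash_\mathrm{WF} \tyenv$ and $\lenv \vdash_\mathrm{WF} \tyenv''$) holds so that Lemma~\ref{lem:weakening2} applies, and that the freshness conditions required by the substitution lemmas (disjointness of the actual lifetime arguments $\seq\lvarTwo$ with the bound lifetimes inside $e$, and disjointness of the actual program arguments $\seq y$ with the bound variables of $e$) are met. These conditions follow from the global alpha-renaming convention fixed in Section~\ref{section:source-lang} and from the fact that $\fenv \vdash f \mapsto \tuple{\seq\lvar}(\seq x) e$ forces $e$ to be closed under $\seq x$ with lifetimes drawn from $\seq\lvar$, but the sequential application of substitutions requires carefully ordered side conditions, which is the most error-prone part of the argument.
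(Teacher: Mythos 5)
Your proposal is correct and follows essentially the same route as the paper's proof: inversion on the call typing discharges the return-context goal via \drvrule{TC-Let}, and the function-body goal is obtained by inverting \drvrule{T-FunDef}, then applying the substitution lemmas (Lemmas~\ref{lem:substitution} and~\ref{lem:substitution-lvar}) and the weakening lemmas (Lemmas~\ref{lem:weakening2} and~\ref{lem:weakening-lvar}); the only difference is the (immaterial) order in which the substitution and weakening steps are interleaved.
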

\begin{proof}
  Note that by inversion on~\eqref{eq:lem:call-creates-return-context:call-typing}, we have
  \begin{align}
    \tyjud{\Gamma_0, x:\tyTwo, y_1:\tyTwo_1', \dots, y_n:\tyTwo_n'}{e : \tyTwo}{\Gamma'} \label{eq:lem:call-creates-return-context:inversion-call1}\\
    x \notin \dom(\tyenv') \qquad  \tyenv = \tyenv_0, x:\tyTwo, y_1:\tyTwo_1, \dots, y_n:\tyTwo_n \label{eq:lem:call-creates-return-context:inversion-call2}
  \end{align}

  We first show that \eqref{eq:lem:call-creates-return-context:goal2} holds.
  By the definition of \drvrule{TC-Let}, it suffices to show that
  \begin{align*}
    \tyjudgement \fenv \lenv {\tyenv'', x : \ty} {e : \tyTwo} {\lenv'} {\tyenv'} \quad \text{and} \quad x \notin \dom(\tyenv')
  \end{align*}
  These are exactly what~\eqref{eq:lem:call-creates-return-context:inversion-call1} and~\eqref{eq:lem:call-creates-return-context:inversion-call2} assert.

  Now we show that~\eqref{eq:lem:call-creates-return-context:goal1} holds.
  By inversion on~\eqref{eq:lem:call-creates-return-context:fundef-typing} together with~\eqref{eq:lem:call-creates-return-context:f-type}, we have
  \begin{align*}
    \tyjudgement \fenv {\mathcal{M}} {x_1:\tau_1, \dots, x_n:\tau_n}{e : \tau}{\mathcal{M}}{x_1:\tau_1', \dots, x_n:\tau_n'}
  \end{align*}
  Thanks to substitution lemmas (Lemmas~\ref{lem:substitution} and~\ref{lem:substitution-lvar}), from the above judgment, we obtain
  \begin{align*}
    \tyjudgement \fenv {\sub {\seq \lvarTwo} {\seq \lvar} {\mathcal M}} {y_1:\tyTwo_1, \dots, y_n:\tyTwo_n}{\sub {\seq y} {\seq x} e :  \sub {\seq \lvarTwo} {\seq \lvar} \tau}{ \sub {\seq \lvarTwo} {\seq \lvar} {\mathcal M}}{y_1:\tyTwo_1', \dots, y_n:\tyTwo_n'}
  \end{align*}
  Then, by the weakening lemmas (Lemmas~\ref{lem:weakening2} and~\ref{lem:weakening-lvar}), we have
  \begin{align*}
    \tyjudgement \fenv {\mathcal L} \tyenv {\sub {\seq y} {\seq x} e :  \sub {\seq \lvarTwo} {\seq \lvar} \tau}{\mathcal L}{\tyenv''}
  \end{align*}
    \qed
\end{proof}

\subsection{Main Proof of Type Preservation}
Finally, we prove that the one-step reduction relation preserves well-typedness (of configurations).
By repeatedly applying this result, we obtain \cref{thm:ownsum-bound} because \( \vdash_D \langle \emptyset \emptyset, \cdot, e \rangle\) holds for any program \( \langle D, e \rangle\) such that \( \vdash \langle D, e \rangle\)
\begin{lemma}
  \label{lem:subject-reduction}
  If \( \vdash  \configsrc H R {\vv F} e \) and \( \configsrc H R {\vv F} e \reduce D  \configsrc {H'} {R'} {\vv {F'}} {e'} \), then \( \vdash  \configsrc {H'} {R'} {\vv {F'}} {e'} \).
\end{lemma}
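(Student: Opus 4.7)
The plan is to proceed by case analysis on the one-step reduction rule applied. Inverting \drvrule{T-Conf} on the hypothesis \( \vdash \configsrc{H}{R}{\vv{F}}{e} \) yields \( \Theta, \mathcal{L}, \Gamma, \Delta \) with \( \Gamma_\mathrm{tot} = \Gamma + \Delta \), together with a typing \( \tyjudgement{\Theta}{\mathcal{L}}{\Gamma}{e : \tau_n}{\mathcal{L}_n}{\Gamma_n} \), well-typed frames realizing the stack, and the consistency \( \cons{\mathcal{L}}{\Gamma_\mathrm{tot}}{H}{R} \). For each reduction rule, I exhibit fresh environments \( \mathcal{L}', \Gamma' \) (extending or transforming \( \mathcal{L}, \Gamma \)) such that \drvrule{T-Conf} can be reassembled for the successor configuration; the function-type environment \( \Theta \) and the tail of the stack are preserved from the source derivation.

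The easy cases are the purely local ones whose reduction does not touch either the heap or the stack: \drvrule{Rs-Arith}, \drvrule{Rs-IfTrue}, \drvrule{Rs-IfFalse} and \drvrule{Rs-NewLft} are handled by straightforward inversion followed by reapplication of \drvrule{T-Conf} (using \cref{lem:weakening,lem:weakening-lvar} when needed to absorb the freshness conditions). For \drvrule{Rs-Let} and \drvrule{Rs-Alias}, the register changes but the type-splitting premise of the corresponding typing rule is exactly the input of \cref{lem:cons-addition}(\ref{it:lem:cons-addition:let}) and~(\ref{it:lem:cons-addition:alias}), so consistency is preserved. The heap-manipulating cases \drvrule{Rs-MkRef}, \drvrule{Rs-Deref} and \drvrule{Rs-Assign} are similarly localized: \drvrule{Rs-MkRef} extends both \( H \) and \( \Gamma \) with a fresh address and a full-ownership reference type (so the fresh-address summand in \( \own^R_{\Gamma_\mathrm{tot}}(\cdot) \) is \( 1 \)); \drvrule{Rs-Deref} does not alter the heap contribution at all; and \drvrule{Rs-Assign} overwrites an integer cell while leaving every ownership count fixed.

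The substantive cases are the lifetime and stack manipulations. For \drvrule{Rs-Endlft}, inversion on \drvrule{T-EndLft} gives the continuation typed under \( \mathcal{L}{\uparrow}_\alpha \) and \( \Gamma{\uparrow}_\alpha \); \cref{lem:lift-preserves-cons}(\ref{it:lem:lift-preserves-cons:main}) produces \( \cons{\mathcal{L}{\uparrow}_\alpha}{\Gamma_\mathrm{tot}{\uparrow}_\alpha}{H}{R} \), which is exactly what \drvrule{T-Conf} needs. For \drvrule{Rs-Call}, \cref{lem:call-creates-return-context} simultaneously types the substituted function body and converts the caller's let expression into a return context, thereby extending the stack by one frame. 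Symmetrically, for \drvrule{Rs-Var}, \cref{lem:return-context-subst-var} pairs the top return context with the returned variable to reproduce a well-typed expression under the previous frame's environment, reducing the stack by one frame. The failure rules \drvrule{Rs-Fail} and \drvrule{Rs-AliasFail} fall outside the hypothesis since their target is not a configuration of the required shape.

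The main obstacle I expect is the accounting of the ghost component \( \Delta \) throughout the case analysis. Inversion of expression typings reasons about \( \Gamma \), whereas the consistency lemmas operate on \( \Gamma_\mathrm{tot} = \Gamma + \Delta \), so for each case one must explain how \( \Delta \) changes (or does not) to preserve both the addition \( \Gamma' + \Delta' = \Gamma'_\mathrm{tot} \) and the applicability of the consistency lemma: for instance, in \drvrule{Rs-Let} the portion of \( y \)'s type that \drvrule{T-Var}-like applications may have hidden must be absorbed into \( \Delta' \) so that the shape required by \cref{lem:cons-addition} is available on \( \Gamma_\mathrm{tot} \), and in \drvrule{Rs-Endlft} one must verify that the same lifting operation commutes with the ghost split, i.e.\ \( (\Gamma + \Delta){\uparrow}_\alpha = \Gamma{\uparrow}_\alpha + \Delta{\uparrow}_\alpha \). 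Once this bookkeeping is set up uniformly, each case collapses to a direct appeal to one of the auxiliary lemmas.
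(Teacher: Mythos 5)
Your proposal is correct and follows essentially the same route as the paper: a case analysis on the reduction rule, with \cref{lem:cons-addition} for \drvrule{Rs-Let}/\drvrule{Rs-Alias}, \cref{lem:lift-preserves-cons} for \drvrule{Rs-Endlft}, and \cref{lem:return-context-subst-var}/\cref{lem:call-creates-return-context} for the stack-manipulating cases, plus the same direct consistency checks for the heap-touching rules. The \( \Delta \)-bookkeeping concern you raise is exactly the point the paper handles by tracking \( \Gamma_\mathrm{tot} = \Gamma + \Delta \) through each case (e.g.\ absorbing the discarded split into \( \Delta' \) in the \drvrule{Rs-Var} case), so no new idea is needed beyond what you outline.
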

\begin{proof}
  \newcommand*{\tyenvTwo}{\tyenv''}
  \newcommand*{\lenvTwo}{\lenv''}
  \newcommand*{\totenvTwo}{\totenv'}
  \newcommand*{\genvTwo}{\Delta'}

  The proof is by a case analysis on the reduction rules.

  Before we proceed to each case, we explain the overall structure of the proof.
  Our goal is to find find \( \lenvTwo \), \( \totenvTwo \), \( \tyenvTwo \) such that
  \begin{gather*}
    \cons \lenvTwo \totenvTwo {H'} {R'} \qquad
    \tyjudgement \fenv {\lenvTwo} {\tyenvTwo} {e': \tyTwo} {\lenv'} {\tyenv'}
  \end{gather*}
  such that \( \totenvTwo = \tyenvTwo + \genvTwo \) for some \( \genvTwo \), and also to show that stack \( \vv{F'} \) is well-typed by using \( \vdash  \configsrc H R {\vv F} e \).
  (In case of \drvrule{Rs-Var} and \drvrule{Rs-Call}, we also need to find suitable \( \ty \), \( \lenv' \) and \( \tyenv' \) as the function we are reducing at that moment changes.)

  From  \( \vdash  \configsrc H R {\vv F} e \), as assumptions, we have
  \begin{gather}
    \cons \lenv \totenv H R  \label{eq:lem:subject-reduction:cons}\\
    \tyjudgement \fenv {\lenv} {\tyenv} {e: \tyTwo} {\lenv'} {\tyenv'} \label{eq:lem:subject-reduction:e-typing}
  \end{gather}
  where \( \totenv = \tyenv + \genv \).
  (Throughout the proof we will fix \( \fenv \), \( \tyenv \) and \( \totenv \) to be the environments satisfying the above relations.)
  In most cases, \( \lenvTwo\), \( \tyenvTwo \), \( \totenvTwo \) are the same as \( \lenv \), \( \tyenv \), \( \totenv \), respectively or simple modifications of \( \lenv \), \( \tyenv \), \( \totenv \).

  We first consider the two cases that involves some reasoning on stacks.
    \caset{Case: \sfdrvrule{Rs-Var}}
      In this case, we must have
      \begin{gather*}
        H' = H, \quad R' = R, \quad \vv F = F_n : \cdots : F_1, \quad  \vv {F'} = F_{n - 1} : \cdots : F_1 \\
        e = x, \quad e' = F_n[x]
      \end{gather*}
      for some \( n \ge 1 \).
      Moreover, by the definition of \( \vdash  \configsrc H R {\vv F} e \), we must also have
      \begin{gather*}
        \ejdg \fenv {\ty_i} {\lenv_i} {\tyenv_i} {F_i} {\ty_{i-1}} {\lenv_{i-1}} {\tyenv_{i-1}} \qquad ( 1 \le i \le n ) \\
        \tyjudgement \fenv \lenv \tyenv {x : \ty_n} {\lenv_n} {\tyenv_n}.
      \end{gather*}
      where \( \ty_n = \tyTwo \), \( \lenv_n = \lenv' \) and \( \tyenv_n = \tyenv' \).
      Therefore, by Lemma~\ref{lem:return-context-subst-var}, we can split \( \tyenv \) as \( \tyenv_0 + \genv' \) satisfying
      \[
      \tyjudgement \fenv \lenv {\tyenv_0} {F_n[x] : \ty_{n - 1}} {\lenv_{n - 1}} {\tyenv_{n - 1}}.
      \]
      Hence, we can take \( \tyenv_0 \) as \( \tyenvTwo \), \( \lenv \) as \( \lenvTwo \) and \( \totenv \) as \( \totenvTwo\).
      Note that \( \totenv = \tyenv + \genv = \tyenv_0 + (\genv' + \genv) \) for some \( \genv \).
      The relation \( \cons \lenvTwo \totenvTwo {H'} {R'} \) trivially holds because we have \( \cons \lenv \totenv H R \).
    \caset{Case: \sfdrvrule{Rs-Call}}
      In this case, we must have
      \begin{gather*}
        H' = H, \quad R' = R, \\ %
        e = \letin{x}{f\langle\vv{\beta}\rangle(y_1, \dots, y_n)}e_0, \quad e' = \sub {\vv y} {\vv x} e_f \\
        \vv{F}' = (\letin {x'} {\hole} \sub{x'}{x} e_0) : \vv F
       \end{gather*}
      for some \( f \mapsto \tuple{\vv{\alpha}}(x_1, \ldots, x_n) e_f \in D\).
      Let \( F_n \) be the return context on the top of the stack \( \vv F \).
      Note that \( F_n \) needs to be typed as follows:
      \begin{align*}
        \ejdg \fenv \tyTwo {\lenv'} {\tyenv'} {F_n} {\ty_{n-1}} {\lenv_{n-1}} {\ty_{n - 1}}
      \end{align*}
      for some \( \ty_{n-1} \),  \( \lenv_{n-1} \) and  \( \ty_{n - 1} \).
      Since \( \fenv \vdash D \), we also have
      \begin{align}
        \vdash_{WF} \fenv \qquad \fenv \vdash f \mapsto \tuple{\vv{\alpha}}(x_1, \ldots, x_n) e_f \label{eq:lem:subject-reduction:call:fundef-typing}
      \end{align}

      For \( \lenvTwo \), \( \tyenvTwo \) and \( \totenv \),  we take \( \lenv \), \( \tyenv \) and \( \totenv \), respectively.

      Our task is to show that
      \begin{gather*}
        \ejdg \fenv {\ty_{n + 1}} \lenv {\tyenv_{n + 1}} {\letin {x'} {\hole} \sub{x'}{x} e_0} \tyTwo \lenv \tyenv \\
        \tyjudgement \fenv \lenv \tyenv {\sub {\vv y} {\vv x} e_f : \ty_{n + 1}} \lenv {\tyenv_{n + 1}}
      \end{gather*}
      for some \( \ty_{n + 1} \) and \( \tyenv_{n + 1}\).
      These type and type environment can be obtained by applying Lemma~\ref{lem:call-creates-return-context} to~\eqref{eq:lem:subject-reduction:call:fundef-typing} and to the type judgment of \( e \), i.e.~\eqref{eq:lem:subject-reduction:e-typing} (with the help of substitution lemma, i.e.~Lemma~\ref{lem:substitution}, to substitute \( x ' \) to \( x \)).
      Concretely,
      \begin{align*}
        \ty_{n + 1 } &= \sub {\vv \lvarTwo} {\vv \lvar} \ty_f \quad \text{and} \\
        \tyenv_{n + 1}& = \sub {y_1 : {\sub {\vv \lvarTwo} {\vv \lvar} {\ty_1'}}}{x_1 : \ty_1}  \cdots \sub {y_n : {\sub {\vv \lvarTwo} {\vv \lvar} {\ty_n'}}}{x_n : \ty_n} \tyenv
      \end{align*}
      provided that  \( \fenv(f) = \forall \vv \lvar : \mathcal M.\ \langle \tau_1, \dots, \tau_n \rangle \to \langle \tau_1', \dots, \tau_n' \mid \ty_f \rangle\).
  \par\bigskip
  So far we are done with the case \drvrule{Rs-Var} and \drvrule{Rs-Call}.
  The remaining cases do not involve function calls or returns.
  Therefore, in what follows, we will omit the reasoning on call stacks.
  \caset{Case: \sfdrvrule{Rs-Let}}
      It must be the case that
      \begin{gather}
        H' = H, \nonumber \\
        R' = R \upd {x'} {R(y)}, \quad x' \notin \dom(R) \label{eq:lem:subject-reduction:let:R-prime}\\
        e = \letin x y  e_0, \quad \sub {x'} x {e_0} = e' \nonumber \\
        \tyenv = \tyenv_0, y : \ty
      \end{gather}
      for some \( x \), \( y \), \( \ty \), \(e_0\) and \( \tyenv_0 \).

      By inversion on the typing rule, we obtain
      \begin{gather}
        \tyjudgement \fenv {\lenv} {\tyenv_0, y : \ty_y, x : \ty_x} {e_0 : \tyTwo} {\lenv'} {\tyenv'} \nonumber \\
        \ty = \ty_x + \ty_y \label{eq:lem:subject-reduction:let:addition} \\
        x \notin \dom(\tyenv') \label{eq:lem:subject-reduction:let:not-in-tyenv}
      \end{gather}
      By the substitution lemma (Lemma~\ref{lem:substitution}) and~\eqref{eq:lem:subject-reduction:let:not-in-tyenv}, we also have
      \begin{gather*}
        \tyjudgement \fenv {\lenv} {\tyenv_0, y : \ty_y, x' : \ty_x} {e' : \tyTwo} {\lenv'} {\tyenv'}.
      \end{gather*}
      We set \( \lenvTwo \defrel \lenv \) and \( \tyenvTwo = \tyenv_0, y : \ty_y, x' : \ty_x \).
      Furthermore, we define \( \totenvTwo \defrel  (\tyenv_0, y : \tyTwo_y, x' : \tyTwo_x) + \genv  \).
      The remaining goal is to show \( \cons \lenv \totenvTwo H R\), which can be shown by \ref{it:lem:cons-addition:let} of~Lemma~\ref{lem:cons-addition} together with \eqref{eq:lem:subject-reduction:let:R-prime}, \eqref{eq:lem:subject-reduction:let:addition} and \( \totenv = (\tyenv, y : \ty) + \genv \).
      (Note that \( x' \notin \dom(\genv)\) because of type consistency.)
    \caset{Case: \sfdrvrule{Rs-Arith}}
      Similar (but also easier compared) to the previous case.
    \caset{Case: \sfdrvrule{Rs-MkRef}}
      In this case we have,
      \begin{gather*}
        H' = H \upd \adr {R(y)}, \quad \adr \notin \dom(H), \qquad R'= R \upd {x'} \adr, \quad x' \notin \dom(R) \\
        e = \letin x {\mkref y} e_0, \quad \sub {x'} x {e_0} = e' \\
        \tyenv = \tyenv_0, y : \tint
      \end{gather*}
      for some \( \adr \), \( x \), \( x' \) and \( y \).
      By inversion on the typing rule, we must also have
      \begin{gather}
        \tyjudgement \fenv \lenv {\tyenv_0, x : \tref \lvar 1 {}, y : \tint} {e_0 : \tyTwo} {\lenv'} {\tyenv'} \nonumber \\
        x \notin \dom(\tyenv') \label{eq:lem:subject-reduction:mkref:not-in-tyenv}
      \end{gather}
      Again, by applying the substitution lemma (Lemma~\ref{lem:substitution}) together with~\eqref{eq:lem:subject-reduction:mkref:not-in-tyenv} to this judgment gives
      \begin{gather*}
        \tyjudgement \fenv \lenv {\tyenv_0, x' : \tref \lvar 1 {}, y : \tint} {e' : \tyTwo} {\lenv'} {\tyenv'}
      \end{gather*}
      We take \( \tyenv_0, x' : \tref \lvar 1 {}, y : \tint \) as \( \tyenvTwo \), \( \lenv \) as \( \lenvTwo \) and \( \totenv, x' : \tref \lvar 1 {} \) as \( \totenvTwo \).

      Our goal is to show that \( \cons \lenv {\totenv, x' : \tref \lvar 1 {}} {H'} {R'} \) under the assumption that \( \cons \lenv {\totenv} H R \).
      To check the memory consistency, we need to check that the newly introduced variable \( x' \) does not cause any out-of-bound access.
      Since \( R'(x') = \adr \), we have \( R'(x') \in \dom (H') \) as desired.
      We also have \( H'(R(x')) = R(y) \in \mathbb Z \) from the type consistency of \( y \).
      Type consistency for \( x' \) is also clearly met.

      Now we verify the fraction and borrow consistency.
      Observe that,
      \begin{align}
        \forall z \in \dom(\totenv) .\ R(z) \neq \adr  \label{eq:lem:subject-reduction:mkref:adr-does-not-exist}
      \end{align}
      because of the memory consistency of \( \cons \lenv \totenv H R \) together with \( \adr \notin \dom(H) \).
      Therefore,
      \begin{align*}
        \own_{\totenvTwo}^{R'}(\adr) = \own(\tref \lvar 1 {}) = 1
      \end{align*}
      For \( \adr' \neq \adr \), it is easy to check that \( \own_{\totenvTwo}^{R'}(\adr') = \own_{\totenv}^R(\adr') \), which does not exceed \( 1 \) because of the fraction consistency of \( \cons \lenv \totenv H R \).
      Hence, the fraction consistency holds.
      To check the borrow consistency, it suffices to check
      \begin{align}
      \bby^{R'}_{\totenvTwo}(\lvar; \adr) \le \own^{R'}_{\totenvTwo}(\lvar; \adr) + \bfrm_{\totenvTwo}^{R'}(\lvar; \adr) \label{eq:lem:subject-reduction:mkref:brr-goal}
      \end{align}
      because the other cases follows from the borrow consistency of \( \cons \lenv {\totenv} H R \).
      By~\eqref{eq:lem:subject-reduction:mkref:adr-does-not-exist}, we have
      \[
      \bby^{R'}_{\totenvTwo}(\lvar; \adr) =  \bfrm_{\totenvTwo}^{R'}(\lvar; \adr) = 0.
      \]
      Hence,~\eqref{eq:lem:subject-reduction:mkref:brr-goal} holds because the left-hand side is \( 0 \) whereas the right-hand side is \( 1 \).
  \caset{Case: \sfdrvrule{Rs-Alias}}
      It must be the case that
      \begin{gather*}
        H'= H, R' = R, \vv{F}' = \vv F, e = \alias {x = y}  {e'}
      \end{gather*}
      with \( R(x) = R(y) = \adr \) for some \( \adr \).
      Moreover, the type environment \( \tyenv \) must be of the form \( \tyenv_0, x : \ty_x, y : \ty_y \) for some \( \ty_ x \) and \( \ty_y \).
      By inversion on the typing of \( e \), that is~\eqref{eq:lem:subject-reduction:e-typing}, we have
      \begin{gather*}
        \tyjudgement \fenv {\lenv} {\tyenv_0, y : \tyTwo_y, x : \tyTwo_x} {e' : \tyTwo} {\lenv'} {\tyenv'}.
      \end{gather*}
      such that \( \ty = \ty_x + \ty_y \) and \( \ty = \tyTwo_x + \tyTwo_y \).
      We take \( \lenv \), \( \tyenv_0, y : \tyTwo_y, x : \tyTwo_x \) \( (\tyenv_0, y : \tyTwo_y, x : \tyTwo_x) + \genv \) for \( \lenvTwo \), \( \tyenvTwo \) and \( \totenvTwo \), respectively.
      Note that we have \( \totenv(x) + \totenv(y) = \ty' = \totenvTwo(x) + \totenvTwo(y) \) for some \( \ty' \) because \( \totenv = (\tyenv_0, x : \ty_x, y : \ty_y ) + \genv \). %
      From this and \( R'(x) = R'(y) \), we can use~\ref{it:lem:cons-addition:alias} of Lemma~\ref{lem:cons-addition} to conclude \( \cons \lenvTwo \totenvTwo {H'} {R'} \).
  \caset{Case: \sfdrvrule{Rs-Newlft}}
      In this case, we must have
      \begin{gather*}
        H'= H, R' = R, e = \newlftin \lvar {e'}.
      \end{gather*}
      By inversion on the typing rule, we have
      \begin{gather*}
        \tyjudgement{\Theta}{\mathcal{L}\cup\{\alpha\lftlt \beta \mid \beta \in \setlft(\mathcal{L})\}}{\Gamma}{e : \rho}{\mathcal{L}'}{\Gamma'}.
      \end{gather*}
      We set \( \lenvTwo \defrel  \lenv \cup\{\alpha\lftlt \beta \mid \beta \in \lenv )\} \), \( \tyenvTwo \defrel \tyenv \) and \( \totenvTwo \defrel \totenv \).
      Because \( \lenv \vdash_\mathrm{WF} \tyenv \) and \( \lenv \vdash_\mathrm{WF} \totenv\)  it follows that \( \lenvTwo \vdash_\mathrm{WF} \tyenv \) and \( \lenvTwo \vdash_\mathrm{WF} \totenv \).
      The well-formedness also ensures that \( \lvar \) does not appear in \( \tyenv \) nor \( \totenv \).
      This implies that
      \[
      \bby_{\totenv}^R(\lvar, a) =  \own_{\totenv}^R(\lvar, a) = \bfrm_{\totenv}^R(\lvar, a) = 0
      \]
      for every address \( \adr \).
      Therefore, \( \cons \lenvTwo \totenvTwo {H'} {R'} \) immediately follows from \( \cons \lenv \totenv H R \).
    \caset{Case: \sfdrvrule{Rs-Endlft}}
      In this case, we have
      \begin{gather*}
        H'= H, \quad  R' = R, \quad  e = \lftend \lvar; {e'}.
      \end{gather*}
      We must also have
      \begin{gather*}
        \tyjudgement \fenv {\lenv \lift \lvar} {\tyenv \lift \lvar} {e' : \tyTwo} {\lenv'} {\tyenv'} \quad \lvar \in \min(\lenv)
      \end{gather*}
      by inversion on the typing rule.
      We set \( \lenvTwo \defrel \lenv \lift \lvar \), \( \tyenvTwo \defrel \tyenv \lift \lvar \) and \( \totenvTwo = \totenv \lift \lvar\).
      It remains to show that \( \cons {\lenv \lift \lvar} {\totenv \lift \lvar} H R \), and this is exactly what Lemma~\ref{lem:lift-preserves-cons} states.

    \caset{Case: \sfdrvrule{Rs-Deref}}
      It must be the case that
       \begin{gather*}
         H'= H, \quad  R' = R\{x' \mapsto v\}, \quad H(R(y)) = v, \quad  x' \notin \dom(R) \\
         e = \letin{x}{\star y}e_0 \quad e' = \sub {x'} x e_0
      \end{gather*}
      for some \( x \), \( y \) and \( x' \).
      By inversion on~\ref{eq:lem:subject-reduction:e-typing}, we have
      \begin{gather*}
        \tyjudgement \fenv \lenv {\tyenv_0, x : \tint, y : \tref \lvar r B} {e  : \tyTwo} {\lenv'} {\tyenv'} \\
       x \notin \dom(\tyenv') \quad \tyenv = \tyenv_0, y : \tref \lvar r B
      \end{gather*}
      for some reference type \( y : \tref \lvar r B \) and type environment \( \tyenv_0 \).
      Applying the substitution lemma (Lemma~\ref{lem:substitution}) to the above judgment yields
      \begin{align*}
        \tyjudgement \fenv \lenv {\tyenv_0, x' : \tint, y : \tref \lvar r B} {\sub {x'} x e  : \tyTwo} {\lenv'} {\tyenv'}
      \end{align*}
      because \( x' \notin \dom(\tyenv')\).
      We define \( \lenvTwo \defrel \lenv \), \( \tyenvTwo \defrel \tyenv_0, x' : \tint, y : \tref \lvar r B\) and \( \totenvTwo = (\tyenv_0, x' : \tint, y : \tref \lvar r B) + \genv \).

      We are left to show that \( \cons \lenv \totenvTwo H {R\{x' \mapsto v\}} \).
      Fraction, borrow, memory consistency obviously hold since the only difference between \( \totenv \) and \(  \totenvTwo \) is the additional \emph{integer} variable \( x' \).
      The type consistency also holds because \( R'(x') = H(R(y)) \in \mathbb Z \) from the memory consistency of \eqref{eq:lem:subject-reduction:cons}.
    \caset{Case: \sfdrvrule{Rs-Assign}}
      In this case, we have
      \begin{gather*}
        H'= H\{\adr \mapsto R(y)\}, \quad  R' = R, \quad R(x) = \adr  \\
         e = x := y; e'
      \end{gather*}
      Furthermore, we have
      \begin{gather*}
        \tyjudgement \fenv \lenv \tyenv {e' : \tyTwo} {\lenv'} {\tyenv'}
      \end{gather*}
      by inversion on~\eqref{eq:lem:subject-reduction:e-typing} (the typing of \( e \)).
      Therefore, we can take \( \lenv \), \( \tyenv \) and  \( \totenv \) for \( \lenvTwo \), \( \tyenvTwo \) and \( \totenv \), respectively.

      It remains to show that \( \cons \lenv \totenv {H'} R \).
      Since the type environment hasn't changed from \( \totenv \), the fraction, borrow and type consistency trivially hold.
      We need to check the memory consistency since the value stored in \( \adr \) has changed.
      We have \( H(R(x)) = R(y) \in \mathbb Z \) because of the type consistency of \( y \).

    \caset{Case: \sfdrvrule{Rs-IfTrue}}
      In this case, we have
      \begin{gather*}
        H' = H, \quad R' = R \\
        e = \ifz x \then e' \els e_2
        \end{gather*}
      for some \( x \) and \( e_2\).
      By inversion on~\eqref{eq:lem:subject-reduction:e-typing} (the typing of \( e \)) we have
      \begin{align*}
        \tyjudgement \fenv {\lenv} {\tyenv} {e': \tyTwo} {\lenv'} {\tyenv'}.
      \end{align*}
      Hence, we can simply take \( \lenv \), \( \tyenv \) and  \( \totenv \) for \( \lenvTwo \), \( \tyenvTwo \) and \( \totenv \), respectively.
    \caset{Case: \sfdrvrule{Rs-IfFalse}}
      Almost identical to the previous case.
    \caset{Cases: \sfdrvrule{Rs-AliasFail} and \sfdrvrule{Rs-Fail}}
      These cases contradict with the assumption.
  \qed
\end{proof}

\section{Definitions Omitted from \cref{section:translation}}

\subsection{Operational Semantics for the Target Language}
\label{sec:operational-semantics-target}
Here, we give the complete definition of the reduction semantics of the target language that has been omitted from~\cref{sec:target-lang}.
The rules are given in~\cref{fig:target-opsem-comp}.
We write \(S\{x \mapsto v\}\) (where \(x \in \setvar\) and \(v \in \tgtval\)) for a register that maps the variable \( x \) to \(v\) and inherits \( S \) for the other variables.
We write \(\textbf{Refresh}_{\toml{S}}(t)\) for the term taken by renaming all the bound variables in \(t\) to fresh names (not contained in \(\dom(S)\) for any \(S \in \toml{S}\)).
This is for avoiding name conflicts in recursive function calls.
The partial functions \(\pi_1, \pi_2: \tgtval \to \tgtval\) are projections for pairs.
\begin{figure}
    \begin{center}
    \AxiomC{}
    \RightLabel{\smaller (\textsc{Rt-Var})}
    \UnaryInfC{\(\configtgt{\toml{D}}{\toml{S}}{(\tletin{x'}{[]} t) : \vv{\toml{F}}}{x} \longrightarrow \contgt{[x/x']t}\)}
    \DisplayProof
    \end{center}
    \begin{center}
        \AxiomC{\(\forall S\in\toml{S}.\ \llbracket o \rrbracket_S = n \)}
        \AxiomC{\(\toml{S}' = \{S\{x \mapsto n\} \mid S \in \toml{S}\}\)}
    \RightLabel{(\smaller \textsc{Rt-LetArith})}
    \BinaryInfC{\(\contgt{\tletin{x}{o}t} \longrightarrow \configtgt{\toml{D}}{\toml{S}'}{\vv{\toml{F}}}{t}\)}
    \DisplayProof
    \end{center}
    \begin{center}
        \AxiomC{\(\toml{S}' = \{S\{x \mapsto S(y)\} \mid S \in \toml{S}\}\)}
    \RightLabel{\smaller (\textsc{Rt-Let})}
    \UnaryInfC{\(\contgt{\tletin{x}{y}t} \longrightarrow \configtgt{\toml{D}}{\toml{S}'}{\vv{\toml{F}}}{t}\)}
    \DisplayProof
    \end{center}
    \begin{center}
        \AxiomC{\(f \mapsto (x_1, \dots, x_n) t \in D\)}
        \AxiomC{\(t'' = \textbf{Refresh}_{\toml{S}}(t)\)}
    \RightLabel{\smaller (\textsc{Rt-Call})}
    \BinaryInfC{\(\begin{array}{l}
        \contgt{\tletin{x}{f(y_1, \dots, y_n)}t'} \\
        \quad \longrightarrow \configtgt{\toml{D}}{\toml{S}}{(\tletin{x}{[]}t'):\vv{\toml{F}}}{[y_1/x_1]\cdots[y_n/x_n]t''}
    \end{array}\)}
    \DisplayProof
    \end{center}
    \begin{center}
        \AxiomC{\(\toml{S}' = \{S\{x' \mapsto \tuple{S(y), S(z)}\} \mid S \in \toml{S}\}\)}
    \RightLabel{\smaller (\textsc{Rt-LetPair})}
    \UnaryInfC{\(\contgt{\tletin{x}{\langle y, z \rangle}t} \longrightarrow \configtgt{\toml{D}}{\toml{S}'}{\vv{\toml{F}}}{t}\)}
    \DisplayProof
    \end{center}
    \begin{center}
        \AxiomC{\(\toml{S}' = \{S\{x' \mapsto \pi_1(S(y))\} \mid S \in \toml{S}\}\)}
    \RightLabel{\smaller (\textsc{Rt-LetFst})}
    \UnaryInfC{\(\contgt{\tletin{x}{\fst y}t} \longrightarrow \configtgt{\toml{D}}{\toml{S}'}{\vv{\toml{F}}}{t}\)}
    \DisplayProof
    \end{center}
    \begin{center}
        \AxiomC{\(\toml{S}' = \{S\{x' \mapsto \pi_2(S(y))\} \mid S \in \toml{S}\}\)}
    \RightLabel{\smaller (\textsc{Rt-LetSnd})}
    \UnaryInfC{\(\contgt{\tletin{x}{\snd y}t} \longrightarrow \configtgt{\toml{D}}{\toml{S}'}{\vv{\toml{F}}}{t}\)}
    \DisplayProof
    \end{center}
    \begin{center}
        \AxiomC{\(\toml{S}' = \{S\{x \mapsto n\} \mid S \in \toml{S}, n \in \mathbb{Z}\}\)}
    \RightLabel{\smaller (\textsc{Rt-LetNondet})}
    \UnaryInfC{\(\contgt{\tletin{x}{\star}t} \longrightarrow \configtgt{\toml{D}}{\toml{S}'}{\vv{\toml{F}}}{t}\)}
    \DisplayProof
    \end{center}
    \begin{center}
    \AxiomC{\(\toml{S}' = \{S \in \toml{S} \mid S(x) = S(y)\}\)}
    \RightLabel{\smaller (\textsc{Rt-Assume})}
    \UnaryInfC{\(\contgt{\tassume{x = y}; t} \longrightarrow \configtgt{\toml{D}}{\toml{S}'}{\vv{\toml{F}}}{t}\)}
    \DisplayProof
    \end{center}
    \begin{center}
        \AxiomC{\(\toml{S}' = \{ S \in \toml{S} \mid S(x) = 0\}\)}
        \AxiomC{\(\toml{S}' \neq \emptyset\)}
    \RightLabel{\smaller (\textsc{Rt-IfTrue})}
    \BinaryInfC{\(\contgt{\tifz x \tthen t_1 \tels t_2} \longrightarrow \configtgt{\toml{D}}{\toml{S}'}{\vv{\toml{F}}}{t_1}\)}
    \DisplayProof
    \end{center}
    \begin{center}
        \AxiomC{\(\toml{S}' = \{ S \in \toml{S} \mid S(x) \neq 0\}\)}
        \AxiomC{\(\toml{S}' \neq \emptyset\)}
    \RightLabel{\smaller (\textsc{Rt-IfFalse})}
    \BinaryInfC{\(\contgt{\tifz x \tthen t_1 \tels t_2} \longrightarrow \configtgt{\toml{D}}{\toml{S}'}{\vv{\toml{F}}}{t_2}\)}
    \DisplayProof
    \end{center}
    \begin{center}
    \AxiomC{}
    \RightLabel{\smaller (\textsc{Rt-Fail})}
    \UnaryInfC{\(\contgt{\tfail} \longrightarrow \textbf{Fail}\)}
    \DisplayProof
    \end{center}
    \caption{Operational semantics of the target language}
    \label{fig:target-opsem-comp}
\end{figure}

\newpage
\subsection{Translation Rules}
\label{sec:complete-translation}
\begin{figure}
\begin{center}
    \AxiomC{\(\trans{\Gamma, x:\tau_x, y:\tau_y}{e \COL \rho}{t}\)}
    \AxiomC{\( \tau = \tau_x + \tau_y \)}
    \AxiomC{\(t' = \convalias{\textbf{Nullify}(\tau) \shortrightarrow \rho_x }{\tau \shortrightarrow \rho_y}(x, y, t)\)}
\RightLabel{\smaller (\textsc{C-Let})}
\TrinaryInfC{\(\trans{\Gamma, y:\tau}{\letin{x}{y} e \COL \rho}{\tletin{x}{\tuple{\_, \_}} t'}\)}
\DisplayProof
\end{center}
\begin{center}
    \AxiomC{\(\trans{\Gamma, x : \tint}{e : \rho}{t}\)}
\RightLabel{\smaller (\textsc{C-Arith})}
\UnaryInfC{\(\trans{\Gamma}{\letin{x}{o} e : \rho}{\tletin{x}{o} t}\)}
\DisplayProof
\end{center}
\begin{center}
    \AxiomC{\(\trans{\Gamma}{e_1 : \rho}{t_1}\)}
    \AxiomC{\(\trans{\Gamma}{e_2 : \rho}{t_2}\)}
\RightLabel{\smaller (\textsc{C-If})}
\BinaryInfC{\(\trans{\Gamma}{\ifz x \then e_1 \els e_2 : \rho}{\tifz x \tthen t_1 \tels t_2}\)}
\DisplayProof
\end{center}
\begin{center}
\AxiomC{}
\RightLabel{\smaller (\textsc{C-Fail})}
\UnaryInfC{\(\trans{\Gamma}{\fail : \tau}{\tfail}\)}
\DisplayProof
\end{center}
\begin{center}
    \AxiomC{\(\trans{\Gamma, x:\tref{\alpha}{1}{\emptyset}, y:\tint}{e : \rho}{t}\)}
\RightLabel{\smaller (\textsc{C-MkRef})}
\UnaryInfC{\(\trans{\Gamma, y:\tint}{\letin{x}{\mkref y} e : \rho}{\tletin{x}{\tuple{y, \_}} t}\)}
\DisplayProof
\end{center}
\begin{center}
    \AxiomC{\(\trans{\Gamma, x:\tint, y:\tref{\alpha}{r}{B}}{e : \rho}{t}\)}
    \AxiomC{\(r > 0\)}
\RightLabel{\smaller (\textsc{C-Deref-Pos})}
\BinaryInfC{\(\trans{\Gamma, y:\tref{\alpha}{r}{B}}{\letin{x}{\star y} e : \rho}{\tletin{x}{\fst y}} t\)}
\DisplayProof
\end{center}
\begin{center}
    \AxiomC{\(\trans{\Gamma, x:\tint, y:\tref{\alpha}{0}{B}}{e : \rho}{t}\)}
\RightLabel{\smaller (\textsc{C-Deref-Zero})}
\UnaryInfC{\(\trans{\Gamma, y:\tref{\alpha}{0}{B}}{\letin{x}{\star y} e : \rho}{\tletin{x}{\_}} t\)}
\DisplayProof
\end{center}
\begin{center}
  \AxiomC{\(\trans{\Gamma}{e : \rho}{t}\)}
  \AxiomC{\(\own_{\Gamma}(x) = 1\)}
\RightLabel{\smaller (\textsc{C-Assign})}
\BinaryInfC{\(\trans{\Gamma}{x := y; e : \rho}{\tletin{x}{\tuple{y, \snd x}}} t\)}
\DisplayProof
\end{center}
\begin{center}
  \AxiomC{\(\tau_x + \tau_y = \tref \alpha r B \quad \tref \alpha r B = \rho_x + \rho_y\)}
  \noLine
  \UnaryInfC{\(\trans{\Gamma, x:\rho_x, y:\rho_y}{e : \rho}{t} \quad t' = \convalias{\tau_x \shortrightarrow \rho_x}{\tau_x \shortrightarrow \rho_y}(x, y, t)\)}
\RightLabel{\smaller (\textsc{C-Alias})}
\UnaryInfC{\(\trans{\Gamma, x:\tau_x, y:\tau_y}{\alias{x = y}; e :\rho}{t'}\)}
\DisplayProof
\end{center}
\begin{center}
  \AxiomC{\(\mathcal{L}' \mid \Gamma \vdash {e : \rho} \Rightarrow {t} \)}
  \AxiomC{\(\mathcal L' = \mathcal L \cup \{ \alpha \}\)}
\RightLabel{\smaller (\textsc{C-NewLft})}
\BinaryInfC{\(\trans{\Gamma}{\newlftin{\alpha} e}{t}\)}
\DisplayProof
\end{center}
\begin{center}
    \AxiomC{\(\ctxterm{\mathcal{L}}{\alpha} \mid \ctxterm{\Gamma}{\alpha} \vdash {e : \rho} \Rightarrow {t}\ \)}
    \AxiomC{\( \alpha = \min(\mathcal L)\)}
    \AxiomC{\(\Gamma\backslash\ctxterm{\Gamma}{\alpha} = \{x_1, \dots, x_n\}\)}
\RightLabel{\smaller (\textsc{C-EndLft})}
\TrinaryInfC{\(\trans{\Gamma}{\lftend \alpha; e : \rho}{\mathbf{assume}_{1\le i \le n}(\fst x_i = \snd x_i); t}\)}
\DisplayProof
\end{center}
\caption{Translation rules for the intraprocedural fragment}
\label{fig:translation-intraprocedual}
\end{figure}
\cref{fig:translation-intraprocedual} shows the translation rules for the intraprocedual fragment of the source language.
As in \cref{section:translation-rules}, we omit the environments \( \fenv \), \( \mathcal{L}' \) and \( \tyenv' \) to simplify the notation.
(Hence, the side condition \( x \notin \dom(\Gamma') \) is also omitted.)
The type \(\textbf{Nullify}(\tau)\) in \drvrule{C-Let} is a new type obtained by depriving all the ownership of the type \(\tau\).
Concretely, \(\textbf{Nullify}(\tau)\) is defined below.
\[
    \textbf{Nullify}(\tint) \defrel \tint \qquad \textbf{Nullify}(\tref{\alpha}{r}{B}) \defrel \tref{\alpha}{0}{\emptyset}
\]

\begin{figure}
\begin{center}
  \AxiomC{\( \Gamma' = y_1 : \tau_1, \ldots, y_n \)}
  \AxiomC{\(\{x_1, \dots, x_n\} = \textbf{RefVar}(\Gamma + x \COL \tau)\backslash\textbf{RefVar}(\Gamma')\)}
\RightLabel{\smaller (\textsc{C-Var})}
\BinaryInfC{\(\Theta \mid \mathcal L \mid {\Gamma + \Gamma' + x \COL \tau} \vdash {x \COL \tau} \rhd \mathcal{L} \mid \Gamma' \Rightarrow \mathbf{assume}_{1\le i \le n}(\fst x_i = \snd x_i); \tuple{x, y_1,\dots, y_n}\)}
\DisplayProof
\end{center}

\begin{center}
    \AxiomC{\(\rho = [\vv{\beta}/\vv{\alpha}]\tau, \rho_i = [\vv{\beta}/\vv{\alpha}]\tau_i, \rho_i' = [\vv{\beta}/\vv{\alpha}]\tau_i'\)}
    \noLine
    \UnaryInfC{\(\Theta(f) = \forall \vv{\alpha}:\mathcal{M}.\ \langle \tau_1, \dots, \tau_n \rangle \to \langle \tau_1', \dots, \tau_n' \mid \tau \rangle \)}

    \AxiomC{\( x \notin \dom(\Gamma')\)}
    \noLine
    \UnaryInfC{\([\vv{\beta}/\vv{\alpha}]\mathcal{M} \subseteq \mathcal{L}\)}
    \noLine

    \BinaryInfC{\(\tyjud{\Gamma, x:\rho, y_1:\rho_1', \dots, y_n:\rho_n'}{e : \xi}{\Gamma'} \Rightarrow t \)}
\noLine
\UnaryInfC{\( t' \equiv \tletin{\tuple{x, y_1, \cdots, y_n}}{f(y_1, \cdots, y_n)} t \)}
\RightLabel{\smaller (\textsc{C-Call})}
\UnaryInfC{\(\tyjud{\Gamma, y_1:\rho_1, \dots, y_n:\rho_n}{\letin{x}{f\langle\vv{\beta}\rangle(y_1, \dots, y_n)} e : \xi}{\Gamma'} \Rightarrow t' \)}
\DisplayProof
\end{center}
\begin{center}
\AxiomC{\(\Theta(f) = \forall \vv{\alpha}:\mathcal{L}.\ \langle \tau_1, \dots, \tau_n \rangle \to \langle \tau_1', \dots, \tau_n' \mid \tau \rangle\)}
    \AxiomC{\(\mathcal{L} \subseteq \vv{\alpha}\)}
    \noLine
    \BinaryInfC{\(\tyjudgement{\Theta}{\mathcal{L}}{x_1:\tau_1, \dots, x_n:\tau_n}{e : \tau}{\mathcal{L}}{x_1:\tau_1', \dots, x_n:\tau_n'} \Rightarrow t\)}
\RightLabel{\smaller (\textsc{C-FunDef})}
\UnaryInfC{\(\Theta \vdash f \mapsto \tuple{\vv{\alpha}}(x_1, \dots, x_n) e \Rightarrow f \mapsto (x_1, \ldots, x_n) t\)}
\DisplayProof
\end{center}
\begin{center}
    \AxiomC{\(\trans{\Gamma'}{e \COL \tau}{t}\)}
    \AxiomC{\(t' \equiv \tletin{\tuple{x, y_1, \cdots, y_n}}{f(y_1, \cdots, y_n)} t\)}
\end{center}
\caption{Translation rules for the interprocedural part}
\label{fig:translation-function}
\end{figure}

\cref{fig:translation-function} shows the rules related to function calls and returns.
We include the translation for variables in this figure as a variable is the program point where the function returns.
For these rules, we do not omit the environments \( \Theta \), \( \mathcal L' \), and \( \Gamma' \) because they play roles in the translation.
The tuple \(\tuple{x_1, x_2, x_3, \dots}\) in \drvrule{C-Var} is an abbreviation of a nested pair \(\tuple{x_1, \tuple{x_2, \tuple{x_3, \dots}}}\).
After the translation, a function returns its arguments in addition to the original returned value because the arguments may be updated in the function body.
The \(\textbf{RefVar}(\Gamma)\) indicates all variable names that are typed as references in the typing context \(\Gamma\).
We add  \( \mathbf{assume}_{1\le i \le n}(\fst x_i = \snd x_i) \) for the references that are dropped at this point.

\section{Proof of Soundness (Theorem~\ref{thm:translation-soundness})}
\label{sec:soundness-proof}
This section proves the soundness of our type-directed translation.
As in the body of the paper, we do not consider the interprocedural rules of the reduction relation for simplicity.

The following is a technical definition that corresponds to the idea of PFZ.
Since our type system allows reborrowing, we define a relation over \emph{sequence} of (aliasing) references rather than considering a relation over two references.
\begin{definition}[Prophecy Chain] \label{defn:prophecy-chain}
    Let \(\Gamma\) be a typing context, \(R\) be a register in the source language, and \( S \) be a register in the target language.
    A sequence \(x_1:\tau_1, \dots, x_n:\tau_n \in \Gamma\) is called a \emph{prophecy chain} and denoted \(\mathbf{Chain}_{\Gamma}^{R, S}(x_1, \dots, x_n)\) iff
    \begin{enumerate}
        \item \(R(x_1) = \dots = R(x_n) \in \addr\)
        \item \(\own(\tau_n) > 0 \wedge \own(\tau_1) = \dots = \own(\tau_{n - 1}) = 0\)
        \item \(\pi_1(S(x_i)) = \pi_2(S(x_{i + 1})) \quad (\text{for all } i = 1, \dots, n - 1)\)
        \item \( \lft_\Gamma(x_{i + 1}) = \blft_{\Gamma}(x_i)\) when \( \blft_{\Gamma}(x_i) \) is defined, and otherwise, \( \lft_\Gamma(x_{i + 1}) = \lft_\Gamma(x_i)\) \quad \(( \text{for all } i=1, \ldots, n - 1)\)
    \end{enumerate}
    We write \(\mathbf{Chain}_{\Gamma}^{R,\toml{S}}(x_1, \dots, x_n) \) if \(\mathbf{Chain}_{\Gamma}^{R, S}(x_1, \dots, x_n)\) holds for each \( S \in \toml S \).
    \qed
  \end{definition}

We now formally define the simulation:
\begin{definition} \label{defn:sim-relation}
    Let \(\tuple{H, R, e}\) and \(\tuple{\toml{S}, t}\) be any configurations of source and target language, respectively.
    Configurations \(\tuple{H, R, e}\) and \(\tuple{\toml{S}, t}\) are called to be in \textit{simulating relation} and denoted \(\tuple{H, R, e} \simrel_{\mathcal{L}, \Gamma} \tuple{\toml{S}, t}\) iff
    \begin{enumerate}
        \item (\textsc{S-Conv}) \(\trans {\Gamma}{e \COL \tau}{t}\)
        \item (\textsc{S-Prop}) \(\forall x:\tref{\alpha}{0}{\beta, r} \in \Gamma.\ \exists \{y_i\}_{i = 1}^n \subseteq \dom(\Gamma)\text{ such that }\) \\
        \(\mathbf{Chain}_{\Gamma}^{R, \toml{S}}(x, y_1, \dots, y_n)\)
    \end{enumerate}
        and for any \( (n_1, \ldots, n_d) \in \mathbb{Z}^d\) there exists \(S_{n_1, \ldots, n_d} \in \mathcal{S}\) such that
    \begin{enumerate}
        \setcounter{enumi}{3}
        \item (\textsc{S-Rich}) \( \pi_2(S_{n_1, \ldots, n_d}(x_i)) = n_i \quad(i = 1, \dots, d)\)
        \item (\textsc{S-Int}) \(\forall x:\tint \in \Gamma.\ S_{n_1, \ldots, n_d}(x) = R(x)\)
        \item (\textsc{S-Acc}) \(\forall x:\tref{\alpha}{r}{B} \in \Gamma .\ r > 0 \Rightarrow \pi_1(S_{n_1, \ldots, n_d}(x)) = H(R(x))\)
    \end{enumerate}
    provided that \(\{x_1, \dots, x_d\} = \{x \in \dom(\Gamma) \mid x : \tref{\alpha}{r}{B} \in \Gamma\}\).
\end{definition}
    We often omit the indices \(\mathcal{L}\) and/or \(\Gamma\) in \(\simrel_{\mathcal{L}, \Gamma}\) when they are not important or the environments we are referring to are clear from the context.
    If \(\tuple{H, R, T} \simrel_{\mathcal{L}, \Gamma} \tuple{\toml{S}, t}\), then we have \(\toml{S} \neq \emptyset\) by the existence of \( S_{\seq n}\); note that we require such register to exist even when \(d = 0\).
    The condition \drvrule{S-Rich} says that any integer can be a future value of a reference.
    In other words, it assures that we are not accidentally cutting off some non-deterministic branches.
    The condition \drvrule{S-Acc} corresponds to the TXZ principle we explained in \cref{section:translation-rules}.

 \begin{remark}[On the simplification]
   The above definition is tailored for the intraprocedural fragment.
   If we consider function calls/returns, then we need to (i) take stacks into account and (ii) keep the ``discarded environment \( \Delta \)'' as we did in \refappendix{sec:type-proof}.
 \end{remark}

\begin{lemma}[Simulation] \label{lem:simulation}
    Assume
    \[
        \tuple{H, R, e} \simrel_{\mathcal{L}, \Gamma} \tuple{\toml{S}, t} \text{ and } \tuple{H, R, e} \redsrc \tuple{H', R', e'}
    \]
    Then there exists a configuration of the target language \(\tuple{\toml{S}', t'}\), a lifetime environment \(\mathcal{L}\), and a type environment \(\Gamma'\) such that
    \[
        \tuple{H', R', e'} \simrel_{\mathcal{L}', \Gamma'} \tuple{\toml{S}', t'} \text{ and } \tuple{\toml{S}, t} \redmultgt \tuple{\toml{S}', t'}
    \]
\end{lemma}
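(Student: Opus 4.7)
The plan is to proceed by a case analysis on the rule used to derive $\tuple{H, R, e} \redsrc \tuple{H', R', e'}$. In each case, inversion on (S-Conv), i.e., $\trans{\Gamma}{e:\tau}{t}$, pins down the syntactic shape of $t$, which in turn forces a corresponding target-side reduction $\tuple{\toml{S}, t} \redmultgt \tuple{\toml{S}', t'}$: one step for most rules, two steps for C-MkRef, and one Rt-Assume per dropped reference for C-EndLft. The environments $\mathcal{L}'$ and $\Gamma'$ are chosen exactly as in the proof of type preservation (\cref{lem:subject-reduction}), so that (S-Conv) for the successor configuration follows from the corresponding premise of the typing/translation rule applied. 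The substance of the proof then lies in reestablishing the data invariants (S-Prop), (S-Rich), (S-Int), and (S-Acc) for the successor.

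For the routine cases (RS-Arith, RS-If, the integer instances of RS-Let, RS-Deref, RS-Assign, RS-NewLft), every $S \in \toml{S}$ is either pointwise updated or pointwise tested, so the parametrization $\{S_{\seq n}\}$ witnessing (S-Rich) transports to $\{S'_{\seq n}\}$ via the same update. Invariant (S-Int) follows immediately, while (S-Acc) is preserved by the side conditions of each rule: for RS-Assign, the premise $\own_\Gamma(x) = 1$ together with \cref{thm:ownsum-bound} forces every other alias of $R(x)$ to have zero ownership, so overwriting the heap cannot invalidate (S-Acc) elsewhere, while the translated update $\tletin{x}{\tuple{y, \snd x}}{\cdot}$ restores (S-Acc) for $x$ itself. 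For RS-MkRef, Rt-LetPair composed with Rt-LetNondet for the future component yields, for every candidate prophecy $n$, a register witnessing (S-Rich) for the freshly allocated reference.

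The case RS-EndLft is subtler because the $\mathbf{assume}(\fst x_i = \snd x_i)$ instructions inserted by C-EndLft genuinely shrink $\toml{S}$. To reestablish (S-Rich) for $\Gamma' = \Gamma\lift{\alpha}$, we exploit that in the old $\{S_{\seq n}\}$ the prophecies of the dropped references were unconstrained, so for each prescription of the remaining prophecies we can select registers in which every $\pi_2(S(x_i))$ coincides with $\pi_1(S(x_i))$. For each reference $x$ that regains ownership via $\alpha$-termination, (S-Prop) supplied a chain ending at some dropped $x_i$, and the added assumptions force $\pi_1(S(x)) = \pi_2(S(x_i)) = \pi_1(S(x_i)) = H(R(x_i)) = H(R(x))$, which is exactly (S-Acc) for $x$ in the new configuration. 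This is the formal manifestation of the Prophecy for Zero principle.

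The main obstacle is RS-Alias, where the auxiliary function $\convalias{\tau_x \shortrightarrow \rho_x}{\tau_y \shortrightarrow \rho_y}$ branches into four mutually exclusive cases depending on which ownerships are zero. The delicate case is the third, where $y$ absorbs all of $x$'s ownership ($\own(\tau_x) > 0$ and $\own(\rho_x) = 0$): the translation first overwrites $\pi_1(y)$ by the old $\pi_1(x) = H(R(x))$, restoring (S-Acc) for $y$, and then overwrites $\pi_1(x)$ by $\pi_2(y)$, splicing $x$ onto the head of $y$'s existing prophecy chain and thereby reestablishing (S-Prop) for $x$. The second branch, in which $x$ gains ownership while $y$ retains some, is easier: copying $\pi_1(y)$ into $\pi_1(x)$ suffices for (S-Acc) on $x$ because $y$'s first component was already accurate. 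The first branch is trivial, and the fourth is symmetric by construction. Condition (4) of \cref{defn:prophecy-chain} is preserved throughout because the addition rules A-Share/A-Borrow dictate precisely how $\lft$ and $\blft$ match up between $\tau_x, \tau_y$ and $\rho_x, \rho_y$. Finally, (S-Rich) is preserved in every branch because the new pair components are deterministic functions of previously existing register entries, so the $\seq n$-parametrization composes with the update.
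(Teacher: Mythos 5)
Your proposal is correct and follows essentially the same route as the paper's proof: a case analysis on the source reduction, with the environments chosen as in the type-preservation argument, the register-family $\{S_{\seq n}\}$ transported through pointwise updates to maintain (\textsc{S-Rich})/(\textsc{S-Int})/(\textsc{S-Acc}), the ownership bound of the type-preservation theorem used for the assignment case, and prophecy chains spliced or truncated exactly as you describe in the \textsc{Rs-Alias} and \textsc{Rs-Endlft} cases. The only points the paper spells out that you gloss over are the chain-truncation argument for (\textsc{S-Prop}) in the ``both sides keep ownership'' subcase of \textsc{alias}, and the prefix-by-minimality argument for chains surviving \texttt{endlft}; neither affects the correctness of your approach.
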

\begin{proof}
  \newcommand*{\Sc}{S_{\seq n}}
  We prove this by case analysis on the reduction \(\tuple{H, R, e} \redsrc \tuple{H', R', e'}\).
  In each case, once we choose \( t' \), \( {\mathcal L}' \) and \( \Gamma' \), we need to show that \(\trans[\mathcal{L}']{\Gamma'}{e' \COL \tau}{t'}\).
  However, we will omit this argument as it is essentially identical to the argument we did in the proof of \cref{thm:ownsum-bound}.
  (We also implicitly assume that \( \cons \lenv \tyenv H R \) holds.)
  We focus on the proof of the preservation for conditions such as \drvrule{S-Prop} or \drvrule{S-Acc}, as these are the key properties of the translation.
  We start with the important cases.
  \subsubsection*{Case: \sfdrvrule{Rs-Alias}}
    In this case, the translation is given by the following derivation.
    \begin{center}
        \AxiomC{\(\trans{\Gamma_0, x:\rho_x, y:\rho_y}{e : \rho}{t'}\)}
        \AxiomC{\(t = \convalias{\tau_x \shortrightarrow \rho_x}{\tau_y \shortrightarrow \rho_y}(x, y, t')\)}
    \RightLabel{\smaller (\textsc{C-Alias})}
    \BinaryInfC{\(\trans{\Gamma_0, x:\tau_x, y:\tau_y}{\alias{x = y}; e :\rho}{t}\)}
    \DisplayProof
    \end{center}
    Note that \(\own(\tau_x) + \own(\tau_y) = \own(\rho_x) + \own(\rho_y)\).
    As expected, we take \( \Gamma_0, x:\rho_x, y:\rho_y \) as \( \Gamma' \) and \( \mathcal L  \) as \( \mathcal {L}' \).
    We do a case analysis according to the cases used in the definition of \( \convalias{\tau_x \shortrightarrow \rho_x}{\tau_y \shortrightarrow \rho_y} \).

    \begin{description}
    \item[Subcase: \( \own(\tau_x) = \own(\rho_x) = 0\)]~\\
        This case is trivial.
    \item[Subcase: \(\own(\rho_x), \own(\rho_y) > 0\) and \( \own(\tau_y) > 0\)]~\\
    In this case, we choose the following matching reduction:
    \[
        \tuple{\mathcal{S}, t} \equiv \tuple{\mathcal{S}, \tletin{x}{\tuple{\fst y, \snd x}} t'} \redmultgt \tuple{\mathcal{S}', t'}
    \]
    where \(\toml{S}' := \{ S\{x \mapsto \tuple{\pi_1(S(y)), \pi_2(S(x))}\} \mid S \in \toml{S} \}\).
    We have \(\pi_1(\Sc(y)) = H(R(y))\) for the registers \( \Sc \in \toml S \) satisfying \drvrule{S-Acc} (as well as \drvrule{S-Rich} and \drvrule{S-Int}), which implies that \( \Sc\{x \mapsto \tuple{\pi_1(\Sc(y)), \pi_2(\Sc(x))}\} \) also satisfy \drvrule{S-Acc}.
    It is easy to see that \( \Sc\{x \mapsto \tuple{\pi_1(\Sc(y)), \pi_2(\Sc(x))}\} \) satisfies \drvrule{S-Rich} and \drvrule{S-Int} as well.
    We have to show that \drvrule{S-Prop} is preserved.
    It suffices to consider all chains containing \(x\).
    Let \(y_1, \dots,y_n \in \dom(\Gamma)\) such that \(\mathbf{Chain}_{\Gamma}^{R, \toml{S}}(y_1, \dots, y_n)\) and \(x = y_k\) for some \(k = 1, \dots, n\).
    Then \(\mathbf{Chain}_{\Gamma'}^{R, \toml{S}'}(y_1, \dots, y_k)\) holds so we can use this chain instead.

    \item[Subcase: \(\own(\tau_x) > 0\) and \(\own(\rho_x) = 0\)]~\\
    We have \(\own(\rho_y) > 0\) and
    \[
        t \equiv \left\{\begin{array}{l}
            \tletin{y}{\tuple{\fst x, \snd y}} \\
            \tletin{x}{\tuple{\snd y, \snd x}} \\
            t'
        \end{array}\right.
    \]
    for some \( t' \).
    Therefore we take \(\tuple{\mathcal{S}, t} \redmultgt \tuple{\mathcal{S}', t'}\), as the matching transition, where
    \[
        \toml{S}' := \big\{ S\{x \mapsto \tuple{\pi_2(S(y)), \pi_2(S(x))},\, y \mapsto \tuple{\pi_1(S(x)), \pi_2(S(y))}\} \mid S \in \toml{S} \big\}
    \]
    For registers \( \Sc \) that satisfy the condition such as \drvrule{S-Acc}, let us define
    \[
      \Sc' \defrel \Sc\{x \mapsto \tuple{\pi_2(\Sc(y)), \pi_2(\Sc(x))},\, y \mapsto \tuple{\pi_1(\Sc(x)) \pi_2(\Sc(y)) }.
    \]
    We have
    \begin{align*}
      \pi_1(\Sc'(y))
      &= \pi_1(\Sc(x)) \tag{by the def.\ of \( \Sc'(y)\)} \\
      &= H(R(x)) \tag{by (\textsc{S-Acc})} \\
      &= H(R(y)) \tag{since \( y \) is an alias of \( x \)}.
    \end{align*}
    Hence, \drvrule{S-Acc} also holds for every \(\Sc'\), and we can choose these registers as the registers that satisfy the conditions \drvrule{S-Rich}, \drvrule{S-Int} and \drvrule{S-Acc}.
    To show that \drvrule{S-Prop} is preserved, we have to confirm that for any \(z \in \mathbf{RefVar}(\Gamma')\) such that \(\own_{\Gamma'}(z) = 0\), there exists a prophecy chain \(z, w_1, \dots, w_m \in \dom(\Gamma)\) on \(\toml{S}'\).
    If \(z = x\), \(\textbf{Chain}_{\Gamma'}^{R, \toml{S}'}(z, y)\) holds.
    Otherwise, \(\own_{\Gamma}(z) = 0\), which implies that by \drvrule{S-Prop}, there exists \(z_1, \dots, z_n \in \dom(\Gamma')\) such that \(\textbf{Chain}_{\Gamma}^{R, \toml{S}}(z, z_1, \dots, z_n)\).
    We assume \(x \in \{z, z_1, \dots, z_n\}\) because otherwise, it is obvious as we can take this prophecy chain also on \( \toml{S}'\).
    Then we must have \(z_n = x\) since \(\own(\tau_x) > 0\). %
    In addition, \(\pi_1(S'(x)) = \pi_2(S'(y))\) holds for any \(S' \in \toml{S}'\).
    If \( y \) borrowed the ownership from \( x \), then we have \( \blft_{\Gamma'}(x) = \lft_{\Gamma'}(y)\); otherwise the ownership was shared and we have \( \lft_{\Gamma'}(x) = \lft_{\Gamma'}(y)\).
    Therefore, we have \(\textbf{Chain}_{\Gamma'}^{R, \toml{S}'}(z, z_1, \dots, z_{n-1}, x, y)\), which is a prophecy chain for \( z \) on \( \toml{S}'\).

      \item[Remaining subcases:]~\\
        Symmetric to the previous cases.
    \end{description}

    \subsubsection*{Case: \sfdrvrule{Rs-Endlft}}
    In this case, we must have
    \[
      t \equiv \mathbf{assume}_{1\le i \le n}(\fst x_i = \snd x_i); t',
    \]
    for some \( t' \) where \(\{x_1, \dots, x_n\} = \Gamma\backslash\ctxterm{\Gamma}{\alpha}\).
    As the matching transition, we take \(\tuple{\toml{S}, t} \redmultgt[(\textsc{Rt-Assume})] \tuple{\toml{S}', t'}\)  where
    \begin{align*}
        \toml{S}' := \big\{S \in \toml{S} \mid \pi_1(S(x_i)) = \pi_2(S(x_i))\ (i = 1, \dots, n)\big\}
    \end{align*}
    Note that we can chose  \(\mathcal{L}' := \ctxterm{\mathcal L}{\alpha}\) and \(\Gamma' := \ctxterm{\Gamma}{\alpha}\) in this case.

    We show that \drvrule{S-Acc} and \drvrule{S-Prop} hold for \( \tuple{\toml{S}', t'} \); the other conditions are easy to check.

    First, we show that \drvrule{S-Prop} holds.
    Let \(x : \tau' \in \Gamma'\) such that \( \tau' \) is a reference type.
    We need to find a prophecy chain starting from \( x \).
    By the definition of \( \Gamma' \), \( \tau' = \ctxterm{\tau}{\alpha}\) for some \(\tau = \tref{\beta}{r}{B}\) such that \(\beta \neq \alpha\).
    When \(r > 0\), it is straightforward.
    We consider the case where \(\tau = \tref{\beta}{r}{B}\) (\(\beta \neq \alpha\)) and \(r = 0\).
    We start by taking a prophecy chain of \( \toml{S}\); \drvrule{S-Prop} of \( \tuple{H, R, e} \simrel_{\mathcal{L}, \Gamma} \tuple{\toml{S}, t} \) implies that there exists \(y_1, \dots, y_n \in \dom(\Gamma)\) such that \(\mathbf{Chain}_{\Gamma}^{R, \toml{S}}(x, y_1, \dots, y_n)\).
    Let us consider the subsequence of \(y_1, \dots, y_n\) obtained by removing all elements such that \(\lft_{\Gamma}(y_i) = \alpha\).
    By the minimality of \( \alpha \) such a subsequence is a prefix of \( y_1, \ldots, y_n \), say \( y_1, \ldots, y_m \) and \( \blft_\Gamma(y_m) = \alpha \).
    Hence we have \( \own_{\Gamma'}(y_m) > 0 \), which implies that \(  \mathbf{Chain}_{\Gamma'}^{R, \toml{S'}}(x, y_1, \dots, y_m) \).

   To show \drvrule{S-Acc} holds, it suffices to show that \(\pi_1(\Sc(x)) = H(R(x))\) for \(x : \tref{\beta}{0}{\alpha, s} \in \Gamma\) (\(s > 0\)), which is a reference that regained non-zero ownership after the lifetime termination.
    Here \( \Sc \in \toml{S} \) is the register that satisfies \drvrule{S-Rich}, \drvrule{S-Int} and \drvrule{S-Acc}, and moreover, we assume that \( \Sc \in \toml{S}' \).
    (In other words, \( \Sc \) is the register that correctly guessed the future values of \( x_1, \ldots, x_n \), which always exists.)
    Observe that there is a prophecy chain \( \textbf{Chain}_{\Gamma}^{R, \toml{S}}(x, y_1, \dots, y_n) \) such that \( \lft_\Gamma(y_1) = \cdots = \lft_\Gamma(y_n) = \alpha \) because \( \alpha \) is a minimal lifetime.
    In this case, we have
    \begin{align*}
      \pi_1(\Sc(x))
      &= \pi_2(\Sc(y_1)) \qquad (\text{by } \textbf{Chain}_{\Gamma}^{R, \toml{S}}(x, y_1, \dots, y_n)) \\
      &= \pi_1(\Sc(y_1)) \qquad (\text{since } \Sc \in \toml{S}') \\
      &\qquad \vdots \\
      &= \pi_1(\Sc(y_{n - 1})) \\
      &= \pi_2(\Sc(y_n)) \\
      &= \pi_1(\Sc(y_n)) \tag{\( \Sc \in \toml{S}' \)}\\
      &= H(R(y_n)) \tag{by \drvrule{S-Acc}} \\
      &= H(R(x)) \tag{since \( R(x) = R(y_n) \)}
    \end{align*}
    Therefore, \drvrule{S-Acc} also stands.
    Hence \(\tuple{H',R',T'} \simrel_{\mathcal{L}', \Gamma'} \tuple{S', t'}\) holds.

    \subsubsection*{Case: \sfdrvrule{Rs-Assign}}
    We only show the preservation of \drvrule{S-Acc} because preservations of the other conditions are straightforward to see.
    In this case, there exists \(x, y \in \setvar\) such that \(t \equiv \letin{x}{\tuple{y, \snd x}} t'\).
    We take the matching reduction from \(\tuple{\toml{S}, t}\) as follows
    \[
        \tuple{\toml{S}, t} \redmultgt \tuple{\{S\{x \hookrightarrow \tuple{S(y), \pi_2(S(x))}\} \mid S \in \toml{S}\}, t'}
    \]
    For each \( \seq n \in {\mathbb Z} ^d \), the register \( \Sc \in \toml S\) that satisfy \drvrule{S-Int} is updated as \( \Sc' \defrel \Sc\{x \hookrightarrow \tuple{R(y), \pi_2(\Sc(x))}\} \) because \( \Sc(y) = R(y) \).
    Hence, we have \( \pi_1( \Sc'(x) ) = R(y) = H'(R(y)) \) as desired.
    It remains to consider references aliasing to \( x \).
    By \cref{thm:ownsum-bound}, no references are aliasing to \(x\) with positive ownership.
    Thus (\textsc{S-Acc}) holds for these aliasing registers.

    \subsubsection*{Case: \sfdrvrule{Rs-Deref}}
    In this case, \( e \) must be of the form \( \letin{x}{\star y} e' \).
    We must also have \( H' = H \) and \( R' = R \{ x \mapsto H(R(y)) \}\), where \( H(R(y))  \in \mathbb Z \).
    We proceed by a case analysis on \( \own_\Gamma(y) \).

    If \( \own_\Gamma(x) > 0\), then the last rule applied for the derivation of the translation must be (\textsc{C-Deref-Pos}).
    So we must have \( t \equiv \tletin{x}{\fst y} t'\) for some \( t' \).
    We take the matching transition as
    \[
      \tuple{\toml{S},  \tletin{x}{\fst y} t'} \redmultgt \tuple{\{S\{x \mapsto \pi_1(S(y))\} \mid S \in \toml{S} \}, t' }
    \]
    It is easy to see that \( \tuple{ \toml{S}', t'} \) satisfies (\textsc{S-Prop}), and we are left to choose \( \Sc' \in \toml{S}' \) satisfying (\textsc{S-Rich}), (\textsc{S-Int}) and (\textsc{S-Acc}).
    For any \( (n_1, \ldots, n_d) \in {\mathbb Z}^d \), we take \( \Sc \{ x \mapsto H(R(y)) \} \) as such register, where \( \Sc \in S \) is the register that satisfies (\textsc{S-Rich}), (\textsc{S-Int}) and (\textsc{S-Acc}).
    Note that \( \Sc \{ x \mapsto H(R(y)) \} \in \toml{S}' \) because \( \Sc \in \toml S \) and \( \pi_1 (\Sc(y)) = H(R(y)) \) by (\textsc{S-Acc}).
    The register  \( \Sc \{ x \mapsto H(R(y)) \} \) satisfies (\textsc{S-Rich}) and (\textsc{S-Acc}) because \( \Sc \) does.
    Obviously, (\textsc{S-Acc}) also holds because \( \ H(R(y)) = R'(x) \).

    Now we consider the case where \( \own_\Gamma(y) =  0\).
    The last rule applied for the derivation of the translation must be (\textsc{C-Deref-Zero}), and thus \( t \) must be of the shape \(  \tletin{x}{\_} t'\).
    In this case, we use (\textsc{Rt-LetNondet}) to match the transition.
    That is, we take
    \[
      \tuple{\toml{S},  \tletin{x}{\_} t'} \redtgt \tuple{\{S\{x \mapsto n\} \mid S \in \toml{S}, n \in \mathbb{Z}\}, t' }
    \]
    as the matching transition.
    As in the case where \( \own_\Gamma(y) > 0 \), the only non-triviality is how to choose \( \Sc' \in \toml{S}' \) satisfying (\textsc{S-Rich}), (\textsc{S-Int}) and (\textsc{S-Acc}) for each \( (n_1, \ldots, n_d) \in {\mathbb Z}^d \).
    Again we take \( \Sc \{ x \mapsto H(R(y)) \} \) as such register, where \( \Sc \in S \) is the register that satisfies (\textsc{S-Rich}), (\textsc{S-Int}) and (\textsc{S-Acc}).
    Note that \( \Sc \{ x \mapsto H(R(y)) \} \in \toml{S'} \) because \( \toml{S}'\) contains all the possible integer assignment to \( x \).

    \subsubsection*{Cases: \sfdrvrule{Rs-IfTrue}, \sfdrvrule{Rs-IfFalse}}
    We only give proof for \drvrule{Rs-IfTrue}.
    In this case, \(e \equiv \ifz x \then e_1 \els e_2\), \(t \equiv \ifz x \then t_1 \els t_2\), and \(e' = e_1\).
    We take
    \[
        \tuple{\mathcal{S}, t} = \tuple{\mathcal{S}, \tifz x \tthen t_1 \tels t_2} \redtgt[\drvrule{Rt-IfTrue}] \tuple{\mathcal{S'}, t_1}
    \]
    where
    \[
      \toml{S}' = \{ S \in \toml{S} \mid S(x) = 0\}
    \]
    as the matching transition.
    Note that we have \( \Sc \in \mathcal{S}' \) for each \( \seq n \) because \( \Sc(x) = R(x) = 0 \).
    Hence, \(\tuple{H, R, e_1} \simrel_{\mathcal{L}, \Gamma} \tuple{\toml{S}', t_1}\) is straightforward to check.

    \subsubsection*{Case: \sfdrvrule{Rs-Let}}
    This case is easily reduced to the \drvrule{Rs-Alias} case.

    \subsubsection*{Case: \sfdrvrule{Rs-Arith}, \sfdrvrule{Rs-Newlft}, \sfdrvrule{Rs-MkRef}}
    These cases are trivial.

    \subsubsection*{Cases: \sfdrvrule{Rs-Var} \sfdrvrule{Rs-Call}, \sfdrvrule{Rs-AliasFail}}
    These cases do not happen because we omit functions and assume progression.

   \qed
\end{proof}

\newpage
\section{Benchmark programs} \label{sec:benchmark-programs}

\begin{figure}
\begin{tabular}{ll}
\begin{minipage}[t]{0.42\hsize}
\begin{lstlisting}[style=mystyle]
rand_choose(x, y) {
  if _ then x else y
}

let a = _ in
if a < 0 then () else
let x = mkref a in
let y = rand_choose(x, mkref _) in
let z = y in
let b = *y + *z in
alias(y = z);
y := *y + b;
assert( *x <= 3 * a )
\end{lstlisting}
    \caption{original `borrow-merge'}
\end{minipage} &
\begin{minipage}[t]{0.54\hsize}
\begin{lstlisting}[style=mystyle]
let nondet () = Random.int(0)
let rec assume x n =
  if x = n then () else assume x n

let rand_choose x y =
  if nondet() < 0 then (
    let x' = (fst x, nondet()) in
    let x = (snd x', snd x) in
    (x', x, y)
  ) else (
    let y' = (fst y, nondet()) in
    let y = (snd y', snd y) in
    (y', x, y)
  )

let main a = (
  if a < 0 then () else
  let x = (a, nondet()) in
  let w = (nondet(), nondet()) in
  let (y, x, w) = rand_choose x w in
  let z = (fst y, nondet()) in
  let b = fst y + fst z in
  let y = (fst z, snd y) in
  let z = (snd y, snd z) in
  let y = (fst y + b, snd y) in
  assume (fst y) (snd y);
  assume (fst z) (snd z);
  assume (fst w) (snd w);
  assert(fst x <= 3 * a)
)
\end{lstlisting}
    \caption{converted `borrow-merge'}
\end{minipage}
\end{tabular}
\end{figure}

\begin{figure}
\begin{tabular}{ll}
\begin{minipage}[t]{0.42\hsize}
\begin{lstlisting}[style=mystyle]
loop(a, b) {
  let a_ = *a in
  let b_ = *b in
  b := *b + 1;
  a := *a + 1;
  assert( *a = (a_ + 1));
  if _ then
    loop(b, mkref _)
  else
    loop(b, a)
}
loop((mkref _), (mkref _))
\end{lstlisting}
    \caption{original `simple-loop' (safe)}
\end{minipage} &
\begin{minipage}[t]{0.54\hsize}
\begin{lstlisting}[style=mystyle]
let nondet () = Random.int(0)
let rec assume x n =
  if x = n then () else assume x n

let rec loop a b = 
  let a_ = fst a in
  let b = (fst b + 1, snd b) in
  let a = (fst a + 1, snd a) in
  assert(fst a = a_ + 1);
  if nondet() < 0 then
    loop b (nondet(), nondet())
  else
    loop b a

let main =
  loop (nondet(), nondet()) (nondet(), nondet())
\end{lstlisting}
    \caption{converted `simple-loop' (safe)}
\end{minipage}
\end{tabular}
\end{figure}

\begin{figure}
\begin{tabular}{ll}
\begin{minipage}[t]{0.42\hsize}
\begin{lstlisting}[style=mystyle]
just_rec(x) {
  if _ then () else (
    let y = mkref _ in
    just_rec(x)
  )
}
let x = mkref _ in
let x0 = *x in
just_rec(x);
assert(x0 = *x)
\end{lstlisting}
    \caption{original `just-rec' (safe)}
\end{minipage} &
\begin{minipage}[t]{0.54\hsize}
\begin{lstlisting}[style=mystyle]
let nondet () = Random.int(0)
let rec assume x n =
  if x = n then () else assume x n

let rec just_rec x =
  if nondet() < 0 then ((), x) else
    let y = (nondet(), nondet()) in
    let (r, y) = just_rec y in
    (r, x)

let main =
  let x = (nondet(), nondet()) in
  let x0 = fst x in
  let (_, x) = just_rec x in
  assert(fst x = x0)
\end{lstlisting}
    \caption{converted `just-rec' (safe)}
\end{minipage}
\end{tabular}
\end{figure}

\begin{figure}
\begin{tabular}{ll}
\begin{minipage}[t]{0.42\hsize}
\begin{lstlisting}[style=mystyle]
f(p, q) {
  alias(p = q);
  q := *q + 3
}

let x = mkref 1 in 
let y = x in
assert( *y = *x );
f(x, y);
assert(*x = 4)
\end{lstlisting}
    \caption{original `shuffle-in-call' (safe)}
\end{minipage} &
\begin{minipage}[t]{0.54\hsize}
\begin{lstlisting}[style=mystyle]
let nondet () = Random.int(0)
let rec assume x n =
  if x = n then () else assume x n

let rec f p q =
  let q = (fst p, snd q) in
  let p = (snd q, snd p) in
  let q = (fst q + 3, snd q) in
  ((), p, q)

let main =
  let x = (1, nondet()) in
  let y = (fst x, nondet()) in
  assert(fst x = fst y);
  let (_, x, y) = f x y in
  assume (fst y) (snd y);
  assert(fst x = 4)
\end{lstlisting}
    \caption{converted `shuffle-in-call' (safe)}
\end{minipage}
\end{tabular}
\end{figure}

\begin{figure}
\begin{tabular}{ll}
\begin{minipage}[t]{0.42\hsize}
\begin{lstlisting}[style=mystyle]
takemax(x, y) {
  if *x >= *y then x else y
}
let x = mkref _ in
let y = mkref _ in
let z = takemax(x, y) in
z := *z + 1;
assert( *x != *y )
\end{lstlisting}
    \caption{original `inc-max' (safe)}
\end{minipage} &
\begin{minipage}[t]{0.54\hsize}
\begin{lstlisting}[style=mystyle]
let nondet () = Random.int(0)
let rec assume x n =
  if x = n then () else assume x n

let takemax x y =
  if fst x >= fst y then (
    let x' = (fst x, nondet()) in
    let x = (snd x', snd x) in
    (x', x, y)
  ) else (
    let y' = (fst y, nondet()) in
    let y = (snd y', snd y) in
    (y', x, y)
  )

let main =
  let x = (nondet(), nondet()) in
  let y = (nondet(), nondet()) in
  let (z, x, y) = takemax x y in
  let z = ((fst z) + 1, snd z) in
  assume (fst z) (snd z);
  assert (fst x <> fst y)
\end{lstlisting}
    \caption{converted `inc-max' (safe)}
\end{minipage}
\end{tabular}
\end{figure}

\begin{figure}
\begin{tabular}{ll}
\begin{minipage}[t]{0.42\hsize}
\begin{lstlisting}[style=mystyle]
f(p, q) {
  alias(p = q);
  q := *q + 3
}

let x = mkref 1 in 
let y = x in
assert( *y = *x );
f(x, y);
assert( *x = 4 )
\end{lstlisting}
    \caption{original `shuffle-in-call' (safe)}
\end{minipage} &
\begin{minipage}[t]{0.54\hsize}
\begin{lstlisting}[style=mystyle]
let nondet () = Random.int(0)
let rec assume x n =
  if x = n then () else assume x n

let rec f p q =
  let q = (fst p, snd q) in
  let p = (snd q, snd p) in
  let q = (fst q + 3, snd q) in
  ((), p, q)

let main =
  let x = (1, nondet()) in
  let y = (fst x, nondet()) in
  assert(fst x = fst y);
  let (_, x, y) = f x y in
  assume (fst y) (snd y);
  assert(fst x = 4)
\end{lstlisting}
    \caption{converted `shuffle-in-call' (safe)}
\end{minipage}
\end{tabular}
\end{figure}

\begin{figure}
\begin{tabular}{ll}
\begin{minipage}[t]{0.36\hsize}
\begin{lstlisting}[style=mystyle]
loop(res, cnt) {
  if *cnt > 0 (
    cnt := cnt - 1;
    res := res + 1;
    loop(res, cnt)
  ) else ()
}

let a = _ in
let b = _ in
let res = mkref a in
let cnt = mkref b in
if b < 0 then () else (
  loop(res, cnt);
  assert(a + b = *res)
)
\end{lstlisting}
    \caption{original `hhk2008'}
\end{minipage} &
\begin{minipage}[t]{0.60\hsize}
\begin{lstlisting}[style=mystyle]
let nondet () = Random.int(0)
let rec assume x n =
  if x = n then () else assume x n

let rec loop res cnt =
  if fst cnt > 0 then (
    let cnt = (fst cnt - 1, snd cnt) in
    let res = (fst res + 1, snd res) in
    let (r, res, cnt) = loop res cnt in
    (r, res, cnt)
  ) else ((), res, cnt)

let main a b =
  let res = (a, nondet()) in
  let cnt = (b, nondet()) in
  if b < 0 then () else (
    let res' = (fst res, nondet()) in
    let res = (snd res', snd res) in
    let ((), res', cnt) = loop res' cnt in
    assume (fst res') (snd res');
    assert(a + b = fst res)
  )
\end{lstlisting}
    \caption{converted `hhk2008'}
\end{minipage}
\end{tabular}
\end{figure}

\begin{figure}
\begin{lstlisting}[style=mystyle]
let nondet () = Random.int(0)
let rec assume x n =
    if x = n then () else assume x n

let minmax x y =
    if fst x < fst y then (
        let x' = (fst x, nondet()) in
        let y' = (fst y, nondet()) in
        ((x', y'), x, y)
    ) else (
        let x' = (fst x, nondet()) in
        let y' = (fst y, nondet()) in
        ((y', x'), x, y)
    )

let rand_choose x y =
    if nondet() < 0 then (
        let x' = (fst x, nondet()) in
        let x = (snd x', snd x) in
        (x', x, y)
    ) else (
        let y' = (fst y, nondet()) in
        let y = (snd y', snd y) in
        (y', x, y)
    )

let main =
    let x = (nondet(), nondet()) in
    let y = (nondet(), nondet()) in
    let ((p, q), x, y) = minmax x y in
    let (z, x, y) = rand_choose x y in
    assert(fst p <= fst z && fst z <= fst q);
    assume (fst p) (snd p);
    assume (fst q) (snd q);
    let z = (1, snd z) in
    assert(fst z = 1)
\end{lstlisting}
    \caption{converted `minmax'}
\end{figure}

\begin{figure}
\begin{tabular}{ll}
\begin{minipage}[t]{0.36\hsize}
\begin{lstlisting}[style=mystyle]
rand_choose(x, y) {
  if _ then x else y
}
linger_dec(x) {
  x := *x - 1;
  if _ then () else (
    let y = mkref _ in
    linger_dec(rand_choose(x, y))
  )
}
let x = mkref _ in
let x0 = *x in
linger_dec(x);
assert(x0 > *x)
\end{lstlisting}
    \caption{original `linger-dec' (safe)}
\end{minipage} &
\begin{minipage}[t]{0.60\hsize}
\begin{lstlisting}[style=mystyle]
let nondet () = Random.int(0)
let rec assume x n =
  if x = n then () else assume x n

let rand_choose x y =
  if nondet() < 0 then (
    let x' = (fst x, nondet()) in
    let x = (snd x', snd x) in
    (x', x, y)
  ) else (
    let y' = (fst y, nondet()) in
    let y = (snd y', snd y) in
    (y', x, y)
  )

let rec linger_dec x =
  let x = (fst x - 1, snd x) in
  if nondet() < 0 then ((), x) else (
    let y = (nondet(), nondet()) in
    let (z, x, y) = rand_choose x y in
    let (r, z) = linger_dec z in
    assume (fst z) (snd z);
    (r, x)
  )

let main =
  let x = (nondet(), nondet()) in
  let x0 = fst x in
  let x' = (fst x, nondet()) in
  let x = (snd x', snd x) in
  let (_, x') = linger_dec x' in
  assume (fst x') (snd x');
  assert(x0 > fst x)
\end{lstlisting}
    \caption{converted `linger-dec' (safe)}
\end{minipage}
\end{tabular}
\end{figure}

\end{document}